\pgfplotsset{compat=newest}
\definecolor{NavyBlue}{rgb}{0.0, 0.0, 0.5}
\definecolor{OliveGreen}{rgb}{0.33, 0.42, 0.18}
\definecolor{def_color_frame}{RGB}{220,230,242}
\colorlet{def_color_back}{def_color_frame!30}
\definecolor{def_color_text}{RGB}{37,64,97}
\definecolor{def_color_frame2}{RGB}{242,200,200}
\colorlet{def_color_back2}{def_color_frame2!30}
\definecolor{def_color_text2}{RGB}{97,55,33}
\theoremstyle{definition} 
\theoremstyle{remark}
\definecolor{beige}{rgb}{1.00,0.95,0.90}
\def\N{ {\cal N} }
\def\>{\rangle}
\def\<{\langle}
\newcommand{\dket}{\rangle\!\rangle}
\newcommand{\dbra}{\langle\!\langle}
\definecolor{bluecyan}{rgb}{0.27, 0.66, 0.88}
\definecolor{ppblue}{RGB}{46,117,182}
\definecolor{ppred}{RGB}{197, 90, 17}
\theoremstyle{plain}
\newtheorem{thm}{Theorem}
\newtheorem{lem}[thm]{Lemma}
\newtheorem{prop}[thm]{Proposition}
\theoremstyle{definition}
\newtheorem{defn}{Definition}
\def\cred {\color{red}}
\def\cblue {\color{blue}}
\begin{document}

\title{Catalytic enhancement in the performance of the microscopic two-stroke heat engine}

\author{Tanmoy Biswas}


\affiliation{Theoretical Division (T4), Los Alamos National Laboratory, Los Alamos, New Mexico 87545, USA.}
\affiliation{International Centre for Theory of Quantum Technologies, University of Gdansk, Wita Stwosza 63, 80-308 Gdansk, Poland.}
\author{Marcin {\L}obejko}
\affiliation{Institute of Theoretical Physics and Astrophysics, Faculty of Mathematics, Physics and Informatics, University of Gda\'nsk, 80-308 Gda\'nsk, Poland}
\affiliation{International Centre for Theory of Quantum Technologies, University of Gdansk, Wita Stwosza 63, 80-308 Gdansk, Poland.}
\author{Pawe{\l} Mazurek}
\affiliation{International Centre for Theory of Quantum Technologies, University of Gdansk, Wita Stwosza 63, 80-308 Gdansk, Poland.}
\affiliation{Institute of Informatics, Faculty of Mathematics, Physics and Informatics, University of Gdansk, Wita Stwosza 63, 80-308 Gdansk, Poland.}
\author{Micha{\l} Horodecki}
\affiliation{International Centre for Theory of Quantum Technologies, University of Gdansk, Wita Stwosza 63, 80-308 Gdansk, Poland.}

\begin{abstract}
We consider a model of  heat engine operating in the microscopic regime: the two-stroke engine. It produces work and exchanges heat in two discrete strokes that are separated in time. The working body of the  engine consists of two $d$-level systems initialized in thermal states at two distinct temperatures. Additionally, an auxiliary non-equilibrium system called catalyst may be incorporated  with the working body of the  engine, provided the state of the catalyst remains unchanged after the completion of a thermodynamic cycle. This ensures that the work produced by the engine arises solely from the temperature difference.  Upon establishing the rigorous thermodynamic framework, we characterize two-fold improvement stemming from the inclusion of a catalyst. Firstly, we prove that in the non-catalytic scenario, the optimal efficiency of the two-stroke heat engine with a working body composed of two-level systems is given by the Otto efficiency, which can be surpassed by incorporating a catalyst with the working body.  Secondly, we show that incorporating a catalyst allows the engine to operate in frequency and temperature regimes that are not accessible for non-catalytic two-stroke engines. We conclude with general conjecture about advantage brought by catalyst: including the catalyst with the working body always allows to improve efficiency over the non-catalytic scenario for any  microscopic two-stroke heat engines. We prove the conjecture for two-stroke engines when the working body is composed of two $d$-level systems initialized in thermal states at two distinct temperatures, as long as the final joint state leading to optimal efficiency in the non-catalytic scenario is not product, or at least one of the $d$-level system is not thermal.

\end{abstract}

\maketitle
\section{Introduction}

\begin{figure*}[t]
\includegraphics[width=18cm]{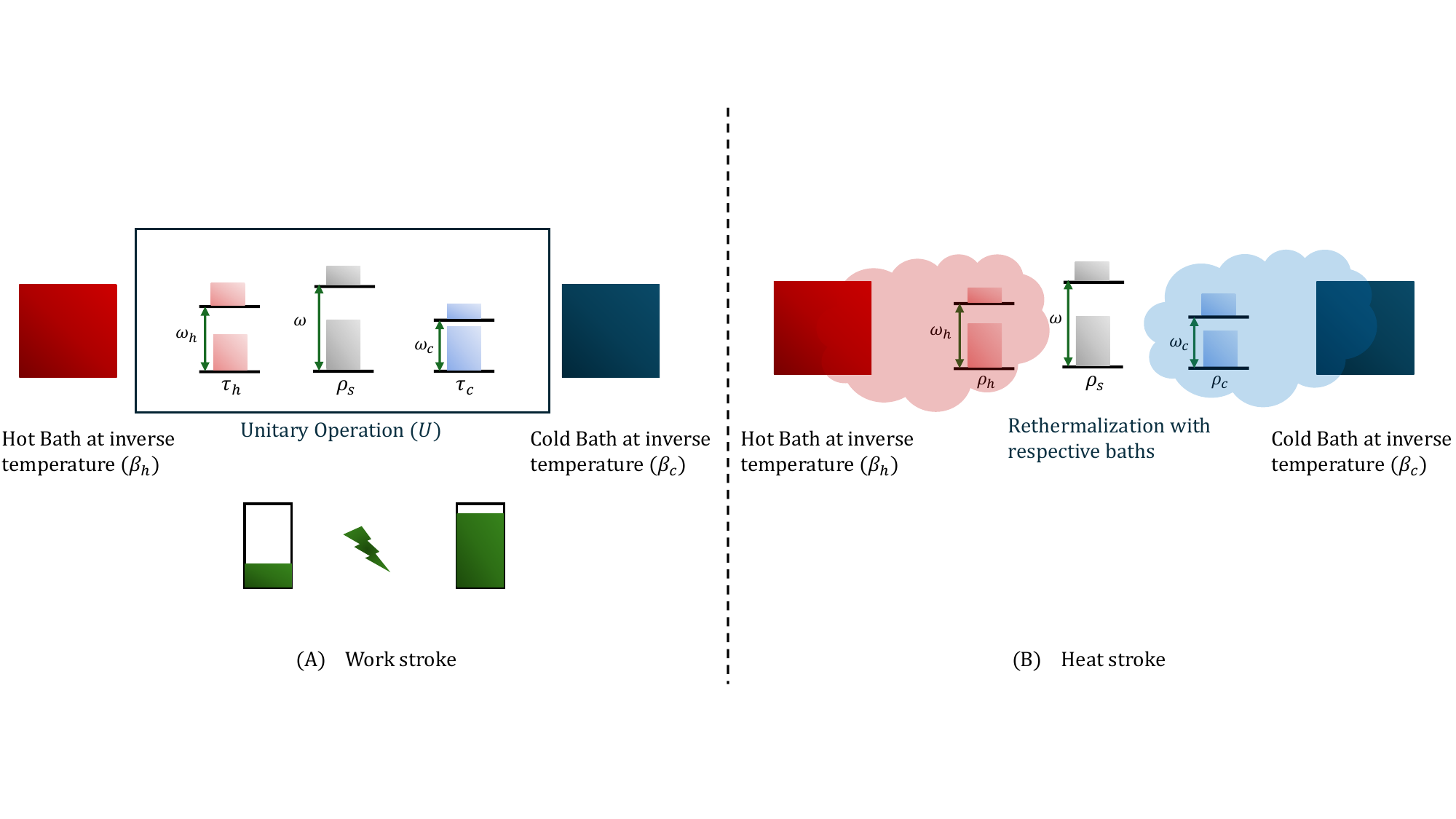}
\centering
\caption{  \textbf{ Schematic representation of the two-stroke heat engine with a working body composed of a hot two-level system, cold two-level system and a two-level catalyst:}  The initial state of the working body of the two-stroke engine consists of hot and cold two-level systems described by Hamiltonian $H_h = \omega_h|1\rangle\langle 1|$ and $H_c = \omega_c|1\rangle\langle 1|$ as well as a catalyst with Hamiltonian $H_s=\omega|1\rangle\langle 1|$. The initial state of the hot and the cold two-level systems are Gibbs at inverse temperatures $\beta_h$ and $\beta_c$ denoted by $\tau_h$ and $\tau_c$. The initial state of the two-level catalyst is $\rho_s$. The components of the working body (i.e., catalyst, the hot and the cold two-level system) are uncorrelated initially. The left panel (A) depicts the work stroke of the engine where the initial state of the working body transforms via a unitary $U$. Our goal is to construct $U$ that leads extraction of work by decreasing the average energy of the initial state of the working body subjected to the preservation of the catalyst. The extracted amount of work can be stored via charging a battery (In this paper we will not consider battery explicitly). The right panel (B) depicts the heat stroke of the engine where the final state of the hot and cold two-level systems $\rho_h$ and $\rho_c$ re-thermalize to their respective initial temperatures which enables the engine to operate in a cyclic manner.  In this figure, we illustrate a specific example of a catalyst-assisted two-stroke engine where each component in the working body is two-dimensional. However, in principle, each component present in the working body can have an arbitrary dimension.}
\label{fig:SCHE_without}
\end{figure*} 
In a thermodynamic framework, the role of a catalyst has been addressed at length in studying state interconversion under   unitary  and Gibbs-preserving transformations \cite{brandao2015second,ShiraishiSagawa21,WilmingPRL}. Later, state interconversion with the aid of a catalyst has been generalized to different domains within quantum information theory (see \cite{CDatta_Review, Bartosik_review} and the references therein). Rather than focusing on the generic state-interconversion in a thermodynamic scenario, we shall address the impact of catalysis on the performance of  {heat engines in the microscopic regime}  \cite{HenaoUzdin, Henao2021catalytic,Sparaciari2017}.  In particular, we explore how catalysis can lead to the enhancement of efficiency of a microscopic heat engine.  A crucial distinction emerges here when comparing the size requirements of catalysts for a generic state interconversion with those for catalytic enhancement in the efficiency of microscopic heat engines.   The construction of the required transformation and the catalyst for a generic state interconversion provided in Ref. \cite{WilmingPRL} is elegant, but the size of the catalyst has to be asymptotically large for any desired state interconversion, which makes required transformation extremely intricate and hard to implement experimentally. Our work explores the role of finite-dimensional catalyst. Unlike the asymptotically large catalysts, our investigation reveals that a catalyst of dimension as modest as two can yield catalytic enhancement in the efficiency and extends the range of operation of a microscopic two-stroke heat engines \footnote{Prior to this work, in Ref. \cite{Son_2024NJP} the role of two-dimensional catalyst has been explored in order to extend the set of achievable states via elementary thermal operation.}  This distinction underscores the novel and practical implications of catalysis in these devices.       

Significance of heat engines comes from the  pivotal role they play in transforming heat into work in the realm of thermodynamics. Traditionally, the study of heat engines has been rooted in macroscopic systems, described by classical thermodynamics. However, emergence of quantum mechanics has led to significant progress in understanding the foundational aspects of thermodynamics at the microscopic level. This line of research traces back to thermodynamical analysis of functioning of lasers \cite{Ramsey22,Scovil1959,Scovil2} in the 1950s. Since then, there has been a significant interest in exploring the functioning of quantum thermal machines \cite{Kosloff1984,Kosloff2014, Scully2002, Popescurefrigerator,Allahverdyan2004, Segal,Mahler,PopescuSmallestHeatEngine2010, BrunnerVirtualqubit2012, Alicki2004,Zhang2022,Myers2022,Klimkovsky2013,KosloffLevy2014,Popescurefrigerator,PopescuSmallestHeatEngineprinciple2010,Skrzypczyk_2011,UzdinLevyKosloff2015,MitchisonContemp,Woods2019maximumefficiencyof,Hofer_2017,Bohr_Brask_2015,Landauer_vs_Nernst} and developing their microscopic thermodynamical frameworks \cite{Pusz1978, Alicki_1979, Davies_1978, Esposito2009, horodecki2013fundamental, Skrzypczyk2014}. Advances in techniques of optical traps \cite{Blickle2011, Abah_2012, Rossnagel2016, Lindenfels_2019, Maslennikov_2019}, nitrogen-vacancy centers in diamond \cite{Klatzow_2019}, nuclear magnetic-resonance \cite{Peterson_2019}, atomic \cite{Bouton_2021} and phononic systems \cite{Zanin_2022}, as well as single-electron transistors \cite{Koski_2014, Koski_2015} already enable experimental exploration of quantum effects in   microscopic heat engines and thermal machines in general. 

  This work explores the catalytic enhancement in the performance of the heat engines  by modeling them as a microscopic \emph{two-stroke} thermal machine that begin with a working body composed of two $d$-level systems thermalized at two distinct temperatures
\cite{two_stroke_Allahverdyan,BrunnerVirtualqubit2012,Silvadimension,Landauer_vs_Nernst,ClivazPRL,ClivazPRE,BiswasLobejko,Melo2022,Stroboscopic_Molitor, Bhattacharjee_2020, Piccione,Kosloff1984,MitchisonContemp,UzdinLevyKosloff2015,Woods2019maximumefficiencyof, RRR_Ahmadi_2023}.  Specifically, this engine is subjected to two subsequent strokes: a \textit{work stroke} that in isolation with respect to heat baths extracts work via a unitary process that acts on the working body, and a \textit{heat stroke} that re-thermalizes the components of the working body to their respective initial temperatures.  On the top of this setting, we extend the working body of the heat engine by an auxiliary non-equilibrium system referred to as the \emph{catalyst},  which intervenes during the work stroke only, such that its marginal state is preserved after the operation.  (Therefore, in the non-catalytic scenario, the working body of the two-stroke engine consists of two components: two $d$-level systems thermalized at two distinct temperatures. In contrast, in the catalytic scenario, the working body includes three components: two $d$-level systems  thermalized at two distinct temperatures, along with a non-equilibrium system referred to as the catalyst whose marginal state is preserved in the operation.)  The cyclicity condition and no coupling to the baths  ensures that energy of the catalyst does not influence the work and heat associated with the heat engines (i.e. the Hamiltonian of the catalyst can be arbitrary).   We stress that this feature distinguish a working body containing a catalyst which has considered in this paper from standard working body (cf. a ``mediator'' in Ref. \cite{Piccione}), which ensures an advantage in the performed task solely by increasing a set of possible operations (likewise the concept of catalysis in quantum information). 

The main problem  we want to focus in this paper  is to optimize the performance of the catalyst assisted two stroke   heat engines.  In particular, our objective is to maximize efficiency and work production per cycle. The optimization is done over all unitary transformations performed during the work stroke. This turns out to be highly nontrivial task, even without a catalyst. For instance, efficiency is not convex in the transformations, as both the denominator - the heat transferred to hot bath, as well as the numerator - the work, depend on the transformation. Therefore one cannot restrict the calculations to the set of extremal transformations. The level of difficulty is raised even more in the case of a catalyst, where we need to preserve its state after completion of the cycle. 

In this paper, we aim to partially overcome the above difficulties, by working out a general theory of two stroke engines in the non-catalytic as well as in the  catalytic scenario,  and providing a series of results regarding optimization of their performance. We put more emphasis on the efficiency of the two-stroke engines rather than work extraction, which has been  explored in Ref. \cite{Sparaciari2017}. First of all, we show how the first and the second law holds within the framework of two-stroke heat engines in the presence and absence of the catalyst. We analyse a two-stroke protocol as a general thermal machine that can work as an engine, cooler or   heat accelerator , and we show that  efficiency satisfies the standard inequalities in each case \cite{Campisi1,Campisi2,Campisi3,Campisi4}.  These general considerations allow us to prove that for general two-stroke engine without a catalyst, the optimal transformation during the work stroke is a permutation which exchanges populations among energy levels. Thus to find optimal efficiency without catalyst one needs to optimize over permutations for different frequency and temperature regimes, which, as the problem grows factorially, is a formidable task. Yet, we show that for the two-stroke engine in the non-catalytic scenario, where the working body is composed of two two-level systems thermalized at distinct temperatures,  the Otto efficiency $1-\omega_c/\omega_h$ is the optimal (here $\omega_h,\omega_c$ are frequencies of two-level systems interacting with hot, and cold bath, respectively). This, together with the result from the companion paper in Ref. \cite{BiswasLobejko} shows that catalysis can enhance efficiency to a large extent, boosting from the Otto efficiency up to \emph{$d$-Otto} efficiency given by $1-\omega_c/(d \omega_h)$, where $d$ is dimension of the catalyst. As expected, in general the price for higher efficiency is lower work production per cycle. This phenomenon is analogue of power-efficiency trade-off in continuous engines \cite{Lobejko2020}. Therefore, we propose a family of permutations which we name \emph{simple permutations} labelled by $n=1,\ldots d$, giving rise to decreasing efficiencies $1-(n\omega_c)/(d\omega_h)$, while increasing amounts of work per cycle. Furthermore, we show that catalysis can remarkably enhance the frequency and temperatures regime of operation for the two-stroke engine by considering illustrative examples.   

Aiming at optimization of efficiency for catalyst assisted two-stroke engine that transforms via simple permutations during the work stroke, we identify the $d$-Otto efficiency is the optimal one. We also make step forward towards optimizing the work produced per cycle, by   formulating  a linear program to compute the upper bound on the produced amount of work in the catalytic scenario. Finally, we ask: for fixed Hamiltonians of the subsystems of the working body, is it always possible to enhance the efficiency of a given two-stroke engine by including a catalyst with the working body?  We answer this question positively for engines with final state of the components of the working body correlated, or not being a product of Gibbs states.

This paper is structured as follows: In Sec. \ref{Description_self_contained} we describe the components and functioning of a general two-stroke thermal machines. We define the thermodynamic quantities like heat, work and efficiency in Sec. \ref{sec:def_heat_work}. 
Our focus then narrows to the two-stroke thermal machines operating as a heat engine   characterised by positive production of work per cycle . In Sec. \ref{Self_contained_without_catalyst}, we characterize a finite set of transformations that can lead to optimal efficiency for the microscopic two-stroke engines. Consequently, we are able to devise the two-stroke heat engines with a working body composed of two two-level systems thermalized at two distinct temperatures, and calculate their maximum achievable efficiency.  Sec. \ref{catalytic_enhancement} analyzes the role of catalysis in enhancing the efficiency of two-stroke heat engines and broadening the operational regime of the two-stroke engine. We close the analysis of catalytic enhancements in efficiency by exploring specific scenarios involving microscopic two-stroke engines whose working body consists of $d$-level systems, emphasizing the guaranteed catalytic enhancement of efficiency.   Next, we focus on work produced by the two-stroke engines. In Sec. \ref{Work_extraction_section}, we provide a closed-form expression for the work produced per cycle when the working body of engine is composed of two two-level systems thermalized at two distinct temperatures and $d$- dimensional catalyst.

\section{Description of the two-stroke thermal machines}\label{Description_self_contained}

  In this section, we provide an overview of two-stroke thermal machines in general. In the following section, we will narrow our focus specifically to two-stroke heat engines in the microscopic regime.

A two-stroke  thermal machine  consists of a   working body composed of  two $d$-level systems that are in thermal equilibrium with two separate heat baths. The first $d$-level system $(h)$ is referred to as the \emph{hot $d$- level system}  and is described by the Hamiltonian $H_h$. It is connected to a heat bath at an inverse temperature $\beta_h$. Similarly, the second $d$-level system $(c)$ is known as the \emph{cold $d$- level system} and is described by Hamiltonian $H_c$. It is connected to a heat bath with an inverse temperature $\beta_c$, where $\beta_c>\beta_h$, which justifies their names.   Additionally, one can incorporate an auxiliary system $(s)$ with the working body of the two-stroke heat engines described by Hamiltonian $H_s$ in such a way that the marginal state of that system is preserved at the end of the transformation i.e.,
\begin{equation}\label{catalyst_first_eqn}
    \Tr_{h,c}(\rho^{i}_{s,h,c}) = \Tr_{h,c}(\rho^{f}_{s,h,c}),
\end{equation}
where $\rho^{i}_{s,h,c}$ and $\rho^{f}_{s,h,c}$ denotes the initial and final state of the working body of the heat engine, respectively.  This guarantees that any work or heat associated with the thermal machine results exclusively from the temperature difference of the bath. We shall refer to this auxiliary system as the \emph{catalyst}. In contrast to the hot and cold $d$-level systems, which are in thermal equilibrium with their respective baths, typically the catalyst is not in equilibrium with either of the baths.   

We assume that initially there is no correlation among the   components of the working body (i.e., hot and cold $d$-level system and the catalyst.) of the thermal machines. Thus, we can write the initial state of the   working body  of the thermal machine as 
\begin{equation}
    \rho^{i}_{s,h,c} = \rho_{s}\otimes\tau_h\otimes\tau_c,
\end{equation}
where $\rho_{s}$ is the initial state of the catalyst, $\tau_h$ and $\tau_c$ denotes the thermal state of hot and cold $d$-level system i.e., 
\begin{equation}\label{Gibbs_defn}
    \tau_h = \frac{e^{-\beta_hH_h}}{\Tr(e^{-\beta_hH_h})}\quad;\quad\tau_c = \frac{e^{-\beta_cH_c}}{\Tr(e^{-\beta_cH_c})}.
\end{equation}

We treat the   working body of the two-stroke thermal machine (i.e hot and cold $d$-level system and the catalyst) as an isolated system. In the work stroke, the thermal machine operates via switching on an interaction between all of the components of the working body, which results a joint unitary operation $U$ acting on the catalyst and the hot and cold $d$-level systems. Therefore, the initial state of   working body  of the thermal machine $\rho^{i}_{s,h,c}$ is related with its final state $\rho^{f}_{s,h,c}$ by the unitary transformation $U$ i.e.,
\begin{equation}
    \rho^{f}_{s,h,c} = U \rho^{i}_{s,h,c} U^{\dagger}.
\end{equation}
subjected to the condition of the preservability of the catalyst state given in Eq. \eqref{catalyst_first_eqn}. The work stroke is responsible for extraction of the work enabled by non-passivity of the initial state of the thermal machine $\rho^{i}_{s,h,c}$.   By non-passivity of a state, we refer to the ability to decrease the average energy of the state through a unitary operation.   Once the extraction of the work is complete, interaction among all the components of the working body is switched off. 

  After the implementation of $U$, in the heat stroke the hot and the cold $d$ level systems of  the working body rethermalize to their respective bath temperatures.  The stroke requires switching on the interaction between the systems and their respective heat baths, which is subsequently switched off once thermalization is achieved. The process of rethermalization leads to change in the energy of the respective baths which can be associated with the transfer of heat.
Having outlined all the essential ingredients required to define the catalyst-assisted two-stroke thermal machine, we now proceed to its formal definition:

\begin{defn}[Catalyst assisted two-stroke thermal machine]\label{principles}
The catalyst assisted two-stroke thermal machine is a thermodynamic device that begins   with a working body  in an initial state 
\begin{eqnarray}
    \rho^{i}_{s,h,c} = \rho_{s}\otimes\tau_h\otimes\tau_c,
\end{eqnarray}
and operate in two discrete strokes to accomplish a desired task in the following manner: 
\begin{enumerate}
    \item \emph{Work stroke:} In this stroke, the interaction among the components of   the working body  of thermal machine is switched on. As a consequence, the initial state of the   working body  $\rho_{s,h,c}^i$ undergoes a transformation governed by the unitary $U$:
    \begin{equation}
    \rho^{i}_{s,h,c} \to U\rho^{i}_{s,h,c} U^\dag = \rho^f_{s,h,c},
\end{equation}
where $U$ satisfies the preservability of the catalyst given in Eq. \eqref{catalyst_first_eqn}, i.e.,
\begin{equation}\label{eq:marginal_cyclic}
    \Tr_{h,c}\big(U\rho_{s,h,c}^iU^{\dagger}\big) = \Tr_{h,c}\big(\rho_{s,h,c}^i\big)=\rho_s.
\end{equation}
\item \emph{Heat stroke:} In this stroke the interaction among the components of the   working body  is switched off. Hot and the cold $d$-level systems   of the working body  rethermalize to their initial temperatures via  interaction with their respective heat baths, i.e.
\begin{eqnarray}
    U \rho^{i}_{s,h,c} U^\dag &\to& \Tr_{h,c}[U \rho^{i}_{s,h,c}  U^\dag] \otimes \tau_h \otimes \tau_c \nonumber\\&=& \rho_s\otimes\tau_h\otimes\tau_c = \rho^{i}_{s,h,c}.
\end{eqnarray}
\end{enumerate}   
\end{defn}
%

After the end of the heat stroke the   the working body of the  thermal machine returns to its initial state that enables the thermal machine to function in a cyclic manner. 

Note that the unitary   in the work stroke  should be constructed based on the intended task to be achieved, and in general does not need to decrease the average energy of the   initial state of the working body of the thermal machines . For instance, in order to lower the temperature of the hot $d$-level system   of the working body , we should apply the unitary transformation $U$ which decreases its local average energy. On the other hand, in order to extract work, $U$ minimizing the global average energy of the   working body  should be selected. In the former scenario, the thermal machine functions as a cooler, while in the latter, it operates as a heat engine.

Fig. \ref{fig:SCHE_without} represents a two-stroke thermal machine operating as a heat engine, where the   components of the working body i.e.,  the hot, the cold $d$-level system and the catalyst are of dimension two. Note that a two-stroke thermal engines differs from  stroke-based thermal engines considered in Ref. \cite{Lobejko2020, BiswasQuantum, Niedenzu2019, Non-markovian_Pstaz, BiswasDatta3stroke}. The latter involves three strokes.

Definition \ref{principles} serves also as the definition of \emph{two-stroke thermal machines without catalyst}, when the catalyst system is removed   from the working body of the thermal machines , together with the constraint assuring preservation of its state. The two-stroke refrigerator and engine without a catalyst has been considered earlier in the literature \cite{BrunnerVirtualqubit2012, Silvadimension, ClivazPRL, two_stroke_Allahverdyan, Melo2022,Stroboscopic_Molitor}. We compare with two-stroke engines without a catalyst to identify a catalyst-driven improvement in performance of these engines.

\section*{Main Results}
Having introduced the model of a generic two-stroke thermal machine, we now turn our attention to the main results of this paper, focusing specifically on two-stroke heat engines in the microscopic regime. We start by defining the concepts of heat and work, ensuring that the first and second laws hold for the catalyst-assisted two-stroke heat engine.

\section{Thermodynamics of two-stroke heat engines}\label{sec:def_heat_work}

Let us now focus on the two-stroke heat engine in the microscopic regime that starts with the working body consists of a hot and cold 
$d$-level system thermalized at two distinct temperatures, as well as a catalyst . A unitary transformation is applied during the work stroke to the working body in order to extract work. This unitary transformation is constrained to preserve the marginal state of the catalyst as given in Eq. \eqref{eq:marginal_cyclic}. This process reduces the average energy of the working body which manifested into produced amount of work by the engine. Therefore, we shall define work as
\begin{equation}\label{defn_of_work2}
    W = \Tr\big((H_s+H_h+H_c)(\rho^{i}_{s,h,c}-\rho^{f}_{s,h,c})\big),
\end{equation}
where $\rho^{f}_{s,h,c} = U \rho^i_{s,h,c} U^{\dagger}$ is the final state of the thermal machine. In other words, we define work as the change in the energy of the initial state of the working body of the thermal machine $\rho^i_{s,h,c}$ during the work stroke. This definition is in accordance with classical thermodynamics where work is referred as the change in the energy of a system at constant entropy.

On the other hand, the amount of heat transferred to the thermal machine from the hot heat bath is given by 
\begin{equation}\label{defn:hot_heat}
    Q_h = \Tr\big(H_h(\rho^i_{s,h,c}-\rho^f_{s,h,c})\big) = \Tr(H_h(\tau_h-\rho_h)),
\end{equation}
where $\rho_h$ is the final state of the hot $d$-level system. This definition of heat is also justified based on the traditional perspective of classical thermodynamics, where heat is defined as the amount of energy incoming from the hot heat bath. 
 
Similarly, we also introduce the amount of heat dumped by the thermal machine in the cold bath in a cycle as: 
\begin{equation}\label{defn:cold_heat}
    Q_c = \Tr\big(H_c(\rho^i_{s,h,c}-\rho^f_{s,h,c})\big) = \Tr\big(H_c(\tau_c-\rho_c)\big),
\end{equation} 
where $\rho_c$ denotes the final state of the cold $d$-level system. We refer $Q_c$  as \emph{cold heat}. Using the quantity $Q_h$ and $Q_c$, one can simplify the definition of work starting from Eq. \eqref{defn_of_work2} as follows:
\begin{eqnarray}\label{Work_redefined}
     W &=&\Tr\Big((H_s+H_h+H_c)(\rho^{i}_{s,h,c}-\rho^{f}_{s,h,c})\Big)\nonumber\\&=&\Tr\Big(H_h(\rho^{i}_{s,h,c}-\rho^{f}_{s,h,c})\Big)+\Tr\Big(H_c(\rho^{i}_{s,h,c}-\rho^{f}_{s,h,c})\Big)\nonumber\\
     &=& \Tr\big(H_h(\tau_h-\rho_h)\big)+\Tr\big(H_c(\tau_c-\rho_c))\nonumber\\
    &=& Q_h+Q_c,
\end{eqnarray}
where the second equality follows from the fact that the initial state of the catalyst is the same with its final state, as given in Eq. \eqref{eq:marginal_cyclic}. It is worth noting that work produced by the engine given in Eq. \eqref{Work_redefined} is in accordance with the first law.

Since the final marginal state of the catalyst is the same as the initial one, the change in the energy of the catalyst does not contribute to the work produced by the engine. This can also be seen from Eq. \eqref{Work_redefined}, where work produced by the heat engine depends only on $Q_h$ and $Q_c$. Thus, without loss of generality, for the rest of the paper we assume that the Hamiltonian of the catalyst is trivial.  

  As the focus of this paper is two-stroke heat engine in the microscopic regime, we shall consider only those unitary processes in work stroke that makes the work given in Eq. \eqref{defn_of_work2} positive. However, a generic microscopic two-stroke thermal machines upon choosing suitable unitary during the work stroke can also function as a cooler by reducing the average energy of a cold $d$-level system  $(W<0$,\; $Q_c>0)$ or as  heat accelerators $(W<0$,\; $Q_h>0)$ \cite{Campisi1,Campisi2,Campisi3,Campisi4}. We have provided a comprehensive thermodynamic framework for all modes (i.e heat engine, cooler, and heat accelerator) of two-stroke thermal machines in the appendix \ref{framework_thermodynamic}. In particular, we supplement the first law for the two-stroke thermal machines (given by Eq. \eqref{Work_redefined}) with the second law (Clausius) inequality:
\begin{eqnarray} \label{Clausius_ineq}
    \beta_h Q_h + \beta_c Q_c \le 0.
\end{eqnarray}
Let us now formally define two stroke heat engine:
\begin{defn}[Two-stroke heat engine]
    A two-stroke thermal machine works as an engine if work associated with it is positive i.e.,
    \begin{eqnarray}
        W > 0.
    \end{eqnarray}
\end{defn}

For heat engines, with positive work extraction $W > 0$, the inequality in Eq. \eqref{Clausius_ineq} implies that $Q_h > 0$ and $Q_c < 0$, such that $W = Q_h - |Q_c| > 0$. Then, the engine is characterized by its efficiency (defining the ratio of an output work to an input hot heat):
\begin{eqnarray} \label{efficiency_def}
    \eta = \frac{W}{Q_h} = 1 + \frac{Q_c}{Q_h} = 1 - \frac{|Q_c|}{Q_h}
\end{eqnarray}
(assuming that $W>0$ and $Q_h>0$). Finally, from the Eqs. \eqref{Work_redefined}, \eqref{Clausius_ineq} and definition \eqref{efficiency_def} follows that whenever a machine works as an engine (i.e., when $W>0$), then the efficiency is bounded by: 
\begin{equation}\label{Carnot_bd}
    0 < \eta < 1 - \frac{\beta_h}{\beta_c}:=\eta_{\text{Carnot}}.
\end{equation}

%

To demonstrate the catalytic enhancements in the engine performance, it is essential to understand what is the optimal performance of the engine without the catalyst. In the next section, we characterize the transformations that results in achieving the optimal efficiency for a two-stroke engine without a catalyst.  Subsequently, we derive the expression for the optimal efficiency for the two-stroke engine whose working body is composed of two two-level systems thermalized at two distinct temperatures. 

\section{Two-stroke heat engine without a catalyst}\label{Self_contained_without_catalyst}

We begin by emphasizing that efficiency in general is not a convex function of the protocols determined by the unitaries acting on the initial state of the working body of the engine during the work stroke. Thus optimizing the efficiency of the two-stroke engine is not a trivial task. Below we will prove that for two-stroke engines, optimal efficiency is nevertheless obtained by maximizing solely over extreme protocols - which are actually permutations that swaps the population among different energy levels of the working body.  We then use this result in Sec. \ref{qubit_otto} to show that, for a non-catalytic two-stroke heat engine where the hot and cold $d$-level systems in the working body are two level system, the maximum achievable efficiency is given by the Otto efficiency. 

\subsection{Optimal performance of the the two-stroke heat engine without catalyst}

Let us start with characterization of transformations that lead to optimal efficiency for a two-stroke engine without a catalyst. In order to optimize the efficiency, we show that among all the unitaries $U$ that acts on the initial state of the working body (composed of hot and cold $d$-level system only) during work stroke, only a permutation can lead to the maximum efficiency and production of work. This is captured by the following theorem:
\begin{thm}\label{thm_optimal_efficiency}
    The maximum work extraction and efficiency of a two-stroke heat engine without a catalyst are achieved when the initial state of the   working body  consists of hot and cold \(d\)-level system i.e., \(\rho^{i}_{h,c} = \tau_{h}\otimes\tau_{c}\), transforms via a unitary which permutes the population among different energy levels of the working body. 
\end{thm}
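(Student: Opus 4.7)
The plan is to reduce the continuous optimization over all unitaries $U$ to a finite optimization over permutations of the joint energy levels, via Birkhoff's theorem on doubly stochastic matrices. First, since the Hamiltonians $H_h$ and $H_c$ commute with the initial state $\rho^i_{h,c}=\tau_h\otimes\tau_c$, one has a product energy eigenbasis $\{|ij\rangle\}$ in which $\rho^i_{h,c}$ is diagonal, and the functionals $Q_h$, $Q_c$ from Eqs.~\eqref{defn:hot_heat}, \eqref{defn:cold_heat} and $W=Q_h+Q_c$ from Eq.~\eqref{Work_redefined} depend on $\rho^f_{h,c}=U\rho^i_{h,c}U^\dagger$ only through its diagonal entries $q_{ij}=\langle ij|\rho^f_{h,c}|ij\rangle$. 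Collecting these into a probability vector $\mathbf{q}$ and similarly $\mathbf{p}$ for the initial thermal product, we have $\mathbf{q}=D\mathbf{p}$ with $D_{(ij),(kl)}=|\langle ij|U|kl\rangle|^2$ doubly stochastic.

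By Birkhoff's theorem, every such $D$ is a convex combination $\sum_\alpha\lambda_\alpha P_\alpha$ of permutation matrices, so the set of feasible $\mathbf{q}$ is the convex hull of the vectors $\{P_\alpha\mathbf{p}\}$, and each extreme point $P_\alpha\mathbf{p}$ is realised by a permutation unitary on the energy eigenbasis. Since $W$ and $Q_h$ are linear functionals of $\mathbf{q}$, the maximum of $W$ is attained at such an extreme point, which settles the statement for work extraction. For efficiency, $\eta=W/Q_h$ is a linear fractional function of $\mathbf{q}$; its level sets $\{W=cQ_h\}$ are hyperplanes through the origin, so the maximum over the Birkhoff polytope intersected with the engine cone $\{W>0,\,Q_h>0\}$ is attained on a supporting hyperplane, hence at a vertex, again a permutation. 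Equivalently, for a feasible decomposition one has
\begin{equation}
    \eta(\mathbf{q})=\frac{\sum_\alpha\lambda_\alpha W_\alpha}{\sum_\alpha\lambda_\alpha Q_{h,\alpha}}=\sum_\alpha\mu_\alpha\eta_\alpha
\end{equation}
with reweighted convex coefficients $\mu_\alpha\propto\lambda_\alpha Q_{h,\alpha}$, yielding $\eta\le\max_\alpha\eta_\alpha$ whenever all $Q_{h,\alpha}$ share a common sign.

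The hard part is the efficiency argument: the mediant inequality above breaks down when some permutations $P_\alpha$ in a decomposition of $D$ give $Q_{h,\alpha}\le 0$, since such a vertex corresponds to a machine that is not operating as an engine and its contribution to $\eta$ enters with the wrong sign. Handling this cleanly requires either invoking linear fractional programming directly on the engine cone and showing that the optimiser lies on its boundary at a vertex with $Q_{h,\alpha}>0$, or partitioning the decomposition by the sign of $Q_{h,\alpha}$ and bounding the two contributions separately. A secondary subtlety is that the permutation unitaries realising the extreme $P_\alpha\mathbf{p}$ must act on the joint energy eigenbasis as classical swaps of the levels, so that the reduction indeed delivers the combinatorial problem of optimising over the finite symmetric group acting on the joint spectrum, as the theorem claims.
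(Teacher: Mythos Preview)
Your reduction to the Birkhoff polytope and the linearity argument for $W$ are correct and match the paper exactly. You also correctly identify the genuine obstruction for efficiency: in the mediant decomposition $\eta=\sum_\alpha\mu_\alpha\eta_\alpha$ with $\mu_\alpha\propto\lambda_\alpha Q_{h,\alpha}$, some $Q_{h,\alpha}$ can be negative, so the $\mu_\alpha$ are not convex weights and $\eta\le\max_\alpha\eta_\alpha$ does not follow.

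However, your proposal stops precisely where the real work begins. The linear fractional programming sketch is not sound as written: the feasible region is the Birkhoff polytope intersected with the open half-space $\{W>0\}$ (or $\{Q_h>0\}$), and the extreme points of that intersection are in general \emph{not} permutations --- they include the intersections of Birkhoff faces with the hyperplane $W=0$. So ``attained at a vertex, again a permutation'' is a non-sequitur. Your alternative, ``partition by the sign of $Q_{h,\alpha}$ and bound the two contributions separately'', is the right strategy, but you give no mechanism for the bound.

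The paper supplies exactly that mechanism. It partitions the permutations in the Birkhoff decomposition into three classes according to which thermal-machine mode they realise: engine ($W_{\Pi_j}>0$, hence $Q_{\Pi_j}>0$), cooler ($W_{\Pi_j}<0$, $Q_{\Pi_j}<0$), and heat accelerator ($W_{\Pi_j}<0$, $Q_{\Pi_j}\ge 0$). The key input you are missing is the Clausius inequality $\beta_h Q_h+\beta_c Q_c\le 0$ applied to each permutation separately, which forces the ordering
\[
\eta_{\Pi_j}\le 0\ \text{(accelerator)}\ \le\ \eta_{\Pi_j}\ \text{(engine)}\ \le\ 1-\tfrac{\beta_h}{\beta_c}\ \le\ \eta_{\Pi_j}\ \text{(cooler)}.
\]
With this ordering, each of the three partial sums in $\eta=\sum_j\frac{\alpha_j Q_{\Pi_j}}{Q_h}\eta_{\Pi_j}$ can be bounded by $\eta^{\max}_{\mathcal{J}_E}$ times the corresponding partial sum of $\frac{\alpha_j Q_{\Pi_j}}{Q_h}$ (the inequality flips for the cooler class because there $Q_{\Pi_j}<0$, and one uses $\eta^{\min}_{\mathcal{J}_C}\ge\eta^{\max}_{\mathcal{J}_E}$), and the three partial sums of the weights add back to $1$. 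That is the missing step in your argument.
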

In other words, theorem \ref{thm_optimal_efficiency} is saying the optimal efficiency of the two-stroke thermal machine operating as heat engine can be obtained by taking the maximum value of efficiency associated with all those permutations for which the amount of extracted work is positive. 

Also, note that  $\rho^{i}_{h,c}$ is diagonal in the total energy eigenbasis and work defined in Eq. \eqref{defn_of_work2} is linear, it is easy to see that maximum work is achieved if the initial state of the working body transforms via a permutation which is given by the ergotropy of the initial state \cite{Allahverdyan2004,Francica2020}. On the other hand, optimization of the efficiency is not straightforward. The proof relies on the thermodynamic framework for the two-stroke thermal machine discussed in Appendix \ref{framework_thermodynamic}, and the full proof of Theorem \ref{thm_optimal_efficiency} is given in Appendix \ref{Permutation_optimal_eff}. However, there might exists temperature regimes and the Hamiltonians for hot and cold $d$-level system of the working body, such that optimal work produced by the engine and optimal efficiency may not be achieved for the same permutations.

The theorem recasts finding optimal efficiency as a problem of optimization over permutations, whose number grow factorially with system size. Nevertheless, as we shall see in the next section, Theorem \ref{thm_optimal_efficiency} extremely simplifies the calculation of optimal efficiency for the two-stroke heat engine with a working body composed of two two-level systems thermalized at two different temperatures.  On the other hand, optimization of work is much easier,  as for any given initial state, the optimal work is achieved by a specific permutation: the one leading to a passive state which is the state with least average energy that can be obtained via unitary transformation. \cite{Allahverdyan2004} 
 
More explicitly, the passive state obtained from a generic state $\rho$ with Hamiltonian $H$ is defined by the following relation:
\begin{equation}\label{defn_passive_state}
    \min_{U\in\mathbb{U}(d)} \Tr(HU\rho U^{\dagger}):=\Tr(H\rho_{\text{pass}}),
\end{equation}
where $\rho_{\text{pass}}$ referred as \emph{passive state} obtained from $\rho$ and $\Tr(H\rho_{\text{pass}})$ referred as \emph{passive energy} of the state $\rho$. Ergotropy of state denoted as $R(\rho)$ can be calculated as the difference of average energy and passive energy of the state $\rho$ i.e.,
\begin{equation}
    R(\rho):= \Tr(H\rho)-\Tr(H\rho_{\text{pass}}).
\end{equation}

\subsection{ Optimal performance of the non-catalytic two-stroke heat engine with a working body composed of two two-level systems thermalized at two distinct temperatures }\label{qubit_otto}

\begin{figure}[t]
\includegraphics[height=6cm]{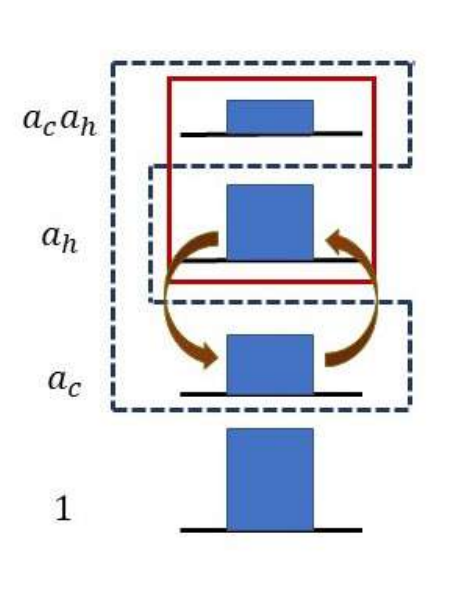}
\centering
\caption{This picture shows the permutation leads to optimal production of work and efficiency for the non-catalytic two-stroke heat engine. This permutation exchanges the population between the second and the third excited state.  This results a net population flow from the subspace where hot two-level system is excited (enclosed in red solid square)  to the subspace where cold two-level system is excited (enclosed by blue dashed lines).} 
\label{fig:smallest_self_cont}
\end{figure} 

In this section, we will calculate the optimal efficiency of a non-catalytic two-stroke heat engine where the hot and cold $d$-level systems, $\tau_h$ and $\tau_c$, present in the working body are two level system.   We state the result in the following theorem:

\begin{thm}[Optimal efficiency of two stroke engine with hot and cold $d$-level system of dimension two]
   Consider a two-stroke heat engine where the initial state of the working body is given by $\tau_h\otimes\tau_c$ such that $\tau_h$ and $\tau_c$ are two-level system with Hamiltonian 
 \begin{equation}
     H_i = \omega_i |1\rangle\langle 1| \quad\text{with}\quad i\in\{h,c\},
 \end{equation}  
 thermalized at inverse temperature $\beta_h$ and $\beta_c$ obeying the following condition:
 \begin{equation}
     \beta_h\omega_h\leq\beta_c\omega_c \quad \text{and} \quad \omega_h>\omega_c.
 \end{equation}
Then, the maximum efficiency achieved by such an engine is given by the Otto efficiency 
\begin{equation}
    \eta_{\max} = 1-\frac{\omega_c}{\omega_h}.
\end{equation}
\end{thm}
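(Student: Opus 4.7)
The plan is to invoke Theorem~\ref{thm_optimal_efficiency} to restrict attention to permutations of the four energy eigenstates of $\tau_h\otimes\tau_c$, and then to minimize a simple ratio of marginal excitation changes over that finite set. Label the eigenstates by $|ij\rangle$ with $i,j\in\{0,1\}$, energies $E_{ij}=i\omega_h+j\omega_c$, and initial populations $p_{ij}$. The hypotheses $\omega_h>\omega_c>0$ and $\beta_h\omega_h\leq\beta_c\omega_c$ immediately imply the population ordering
\begin{equation}
p_{00} \geq p_{10} \geq p_{01} \geq p_{11},
\end{equation}
which is the only structural input the argument will use.

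For an arbitrary permutation $\sigma\in S_4$ producing final populations $p'_{ij}$, the heats defined in Eqs.~\eqref{defn:hot_heat} and \eqref{defn:cold_heat} take the form $Q_h=\omega_h\Delta_h$ and $Q_c=\omega_c\Delta_c$, with the marginal excitation changes
\begin{equation}
\Delta_h=(p_{10}+p_{11})-(p'_{10}+p'_{11}), \quad \Delta_c=(p_{01}+p_{11})-(p'_{01}+p'_{11}).
\end{equation}
Engine operation combined with the Clausius inequality~\eqref{Clausius_ineq} forces $\Delta_h>0>\Delta_c$, so the efficiency reads
\begin{equation}
\eta = 1 + \frac{\omega_c\Delta_c}{\omega_h\Delta_h} = 1-\frac{\omega_c}{\omega_h}\,\frac{|\Delta_c|}{\Delta_h},
\end{equation}
and the task reduces to minimizing the dimensionless ratio $|\Delta_c|/\Delta_h$ over engine-mode permutations.

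The crucial step is to show $|\Delta_c|\geq\Delta_h$. Using probability conservation $\sum_{ij}p'_{ij}=\sum_{ij}p_{ij}$, a short manipulation yields
\begin{equation}
\Delta_h+\Delta_c = (p'_{00}-p_{00})-(p'_{11}-p_{11}).
\end{equation}
Because $\sigma$ merely permutes the \emph{multiset} $\{p_{00},p_{10},p_{01},p_{11}\}$, and because $p_{00}$ and $p_{11}$ are its maximum and minimum respectively, one must have $p'_{00}\leq p_{00}$ and $p'_{11}\geq p_{11}$. Hence $\Delta_h+\Delta_c\leq 0$, i.e.\ $|\Delta_c|\geq\Delta_h$, and consequently $\eta\leq 1-\omega_c/\omega_h$ for every engine-mode permutation.

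Attainability is immediate on the swap $|10\rangle\leftrightarrow|01\rangle$ shown in Fig.~\ref{fig:smallest_self_cont}: it fixes both $p_{00}$ and $p_{11}$, gives $\Delta_h=p_{10}-p_{01}=-\Delta_c\geq 0$ (strict when $\beta_h\omega_h<\beta_c\omega_c$), yields the work $W=(\omega_h-\omega_c)(p_{10}-p_{01})\geq 0$, and saturates $\eta=1-\omega_c/\omega_h$. I expect the only substantive part of the proof to lie in the reduction to permutations already supplied by Theorem~\ref{thm_optimal_efficiency}; beyond that, the upper bound rests solely on the elementary fact that a permutation of a four-element multiset cannot widen the gap between its extremal entries, which is exactly what drives $|\Delta_c|\geq\Delta_h$.
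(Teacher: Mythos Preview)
Your argument is correct, and it is genuinely different from the paper's. Both proofs begin by invoking Theorem~\ref{thm_optimal_efficiency} to reduce to permutations, but the paper then proceeds by brute force: it tabulates work and efficiency for all $4!=24$ permutations (Table~\ref{tab:my_table_efficiency_24} in Appendix~\ref{SmallestHE}), observes that only four give $W>0$, and compares those four case by case to identify the swap $|10\rangle\leftrightarrow|01\rangle$ as optimal. You instead extract the analytical identity $\Delta_h+\Delta_c=(p'_{00}-p_{00})-(p'_{11}-p_{11})$ and use only the fact that a permutation of a four-element multiset cannot increase its maximum or decrease its minimum to conclude $|\Delta_c|\geq\Delta_h$, hence $\eta\leq 1-\omega_c/\omega_h$ for every engine-mode permutation. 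This is shorter, avoids the enumeration entirely, and pinpoints the structural reason Otto is optimal: the extremal populations $p_{00}$ and $p_{11}$ can only move inward under permutation, which forces the cold marginal to change at least as much as the hot one. The paper's enumeration, on the other hand, has the advantage of being completely explicit and simultaneously delivering the work expressions used later; your argument would not by itself yield the closed form~\eqref{Otto_work}. One minor point worth flagging: at the boundary $\beta_h\omega_h=\beta_c\omega_c$ you correctly note that the swap gives $W=0$, so strictly speaking no engine-mode permutation exists there and the ``maximum efficiency'' is to be read as a supremum --- the paper is equally silent on this degenerate point.
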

  
\begin{proof}[Proof sketch]
    Let us proceed by writing the initial state of $\tau_h$ and $\tau_c$ explicitly as follows:
    \begin{eqnarray}\label{qubit_description}
    \tau_i&=& \frac{1}{1+e^{-\beta_i\omega_i}}(|0\rangle\langle 0|+e^{-\beta_i\omega_i}|1\rangle\langle 1|)
\end{eqnarray}
where $i\in\{h,c\}$ and $\beta_h\omega_h\leq\beta_c\omega_c, \;\; \omega_h>\omega_c$. Observe that, the two-stroke engine in this scenario can not extract work if $\omega_c>\omega_h$ as this makes the state $\tau_h\otimes\tau_c$ passive. According to theorem \ref{thm_optimal_efficiency}, the optimal efficiency of any two-stroke engine is achieved when the working body is transformed via some permutation. Since, the dimension of the working body is $4$ (as dimension of $\tau_h$ and $\tau_c$ is $2$). In order to prove this, we can easily obtain the expression of efficiency for each of the $4!=24$ permutations (see the table \ref{tab:my_table_efficiency_24} from appendix \ref{SmallestHE}). One can see from the table \ref{tab:my_table_efficiency_24}, among all the 24 permutations, only four of them lead to a positive amount of work production. Thus, only on those four scenarios, the thermal machine is operating as an engine. Then, we can compare the efficiency for each of the $4$ permutations  and see the optimal efficiency is achieved for the permutation 
\begin{equation}\label{Pie}
    \Pi = (|0,0\rangle\langle 0,0|+ |0,1\rangle\langle 1,0|+|1,0\rangle\langle 0,1|+|1,1\rangle\langle 1,1|)_{h,c},
\end{equation}
where the first and second index in $|\cdot\rangle\langle\cdot|$ corresponds to the energy levels of the hot and cold $d$-level systems of the working body, respectively (see Fig. \ref{fig:smallest_self_cont}).  The work produced  and the consumed heat by the engine when the working body  is transformed by the permutation $\Pi$ is given by: 
\begin{eqnarray}\label{W}
    W &=& \mathcal{N}(a_h-a_c)(\omega_h-\omega_c),\label{Otto_work},\\
    Q_h&=& \mathcal{N}(a_h-a_c)\omega_h\label{Otto_heat},
\end{eqnarray}
where 
\begin{equation}
    a_h := e^{-\beta_h\omega_h},\quad a_c := e^{-\beta_c\omega_c},\quad\mathcal{N} = \frac{1}{(1+a_h)(1+a_c)}.
\end{equation}
This allows us to compute the optimal efficiency as
\begin{eqnarray}\label{Otto}
    \eta = 1-\frac{\omega_c}{\omega_h},
\end{eqnarray}
which is also known as the the \emph{Otto efficiency}  \cite{KosloffLevy2014,cangemni_Levy_engines}.
The detailed calculation can be found in  Appendix \ref{SmallestHE}. 
\end{proof}
Let us make two important observations here: \emph{First}, the maximum amount of work produced by the engine is attained by the same permutation $\Pi$ in Eq. \eqref{Pie} as well, since it leads to the passive version of the initial state of the working body. As the permutation in Eq. \eqref{Pie} leads to the simultaneous maximization of work and efficiency, for a qubit two-stroke heat engine the work-efficiency trade-off relation becomes trivial.

\emph{Second}, note that for $\beta_h\omega_h = \beta_c\omega_c$ the engine reaches the Carnot efficiency, whereas work produced by the engine becomes zero. This is in agreement to our understanding of classical thermodynamics, according to which the amount of work produced per cycle approaches zero when an engine operates with Carnot efficiency. In the next section, we show that one can surpass the optimal Otto efficiency given in Eq. \eqref{Otto} with the use of a catalyst.

\section{Catalytic enhancement}\label{catalytic_enhancement}
\begin{figure*}[htbp]
    \centering
    \includegraphics[width=17cm]{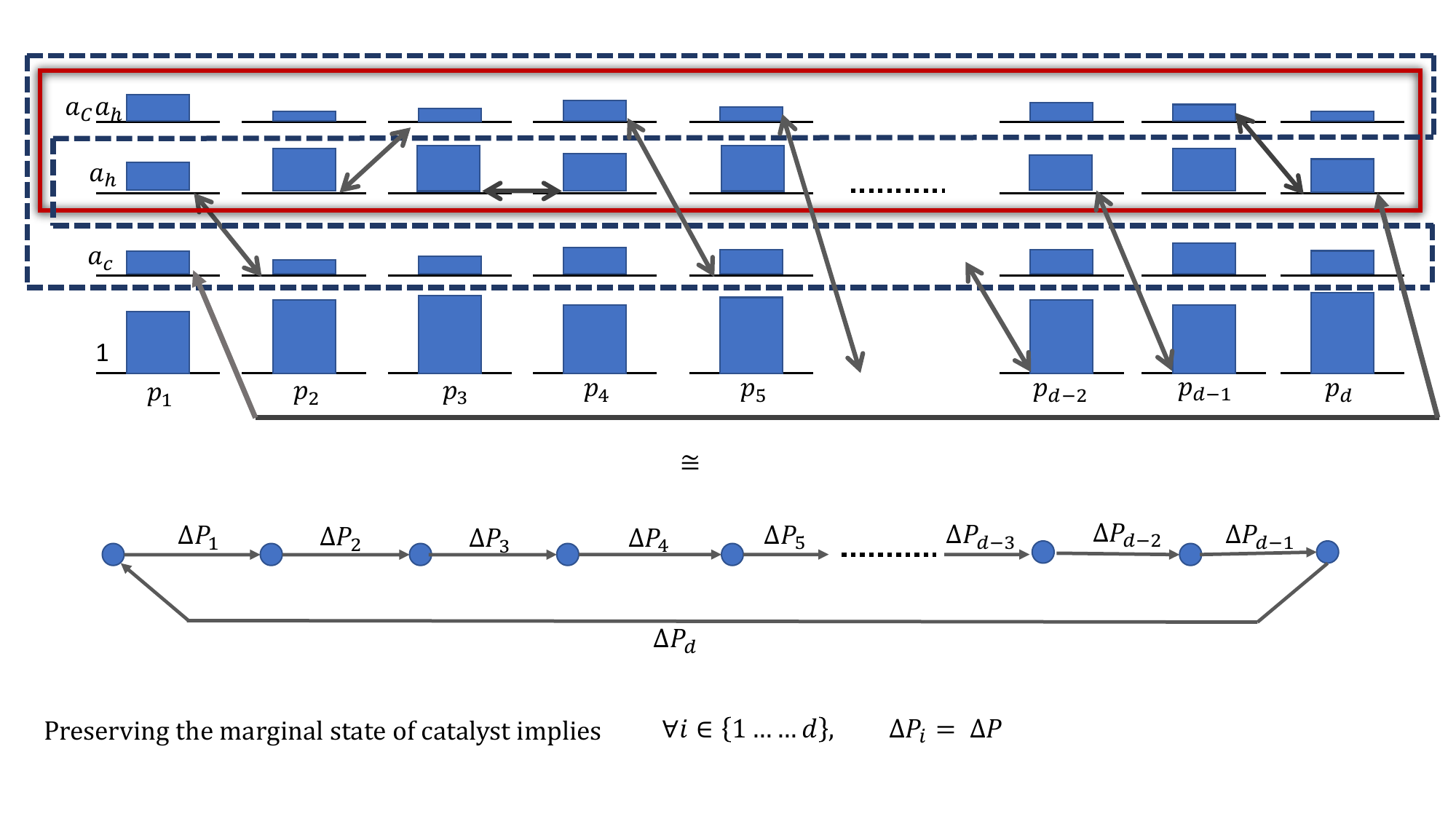}
    \caption{\label{fig:swap_vs_graph} \textbf{A simple permutation:} The above panel depicts a simple permutation with arrows that denotes the exchange the populations among the associated levels. We denote the population associated with energy levels of the working body $|i,0,0\rangle_{s,h,c}$, $|i,0,1\rangle_{s,h,c}$, $|i,1,0\rangle_{s,h,c}$, and $|i,1,1\rangle_{s,h,c}$ by $\frac{p_i}{(1+a_c)(1+a_h)}$, $\frac{p_ia_c}{(1+a_c)(1+a_h)}$, $\frac{p_ia_h}{(1+a_c)(1+a_h)}$, and $\frac{p_ia_ca_h}{(1+a_c)(1+a_h)}$, respectively. The enclosed subspace by the red rectangle denotes the hot subspace (taking the sum over all the population enclosed inside the red rectangle gives the population of the excited level of the hot qubit). Similarly, the enclosed subspace by the blue dashed line denotes the cold subspace (taking the sum over all the population enclosed by the blue dashed line rectangle gives the population of the excited level of the cold qubit). The below panel depicts the four energy levels for a fixed $i$ by a blue vertex. The net flow of population from $i^{\text{th}}$ vertex to $(i+1)^{\text{th}}$ is denoted by $\Delta P_i$. In order to preserve the marginal state of the catalyst as per Eq. \eqref{eq:marginal_cyclic},  $\forall j$ it requires to satisfy  $\Delta P_j = \Delta P$  }
\end{figure*}
The main challenge in optimizing work production and efficiency arises from the variability of the state of the catalyst along with the unitary $U$.   This leads to different initial state of the working body of the engine for different unitary implemented in the work stoke. This complexity makes the problem hard to solve, even when the catalyst is a two-level system. To tackle this, we will confine ourselves to a specific set of transformations, which we call the \emph{simple permutations}. This set of permutations enables for an arbitrary increase of efficiency of the two-stroke engine up to the Carnot limit. Our aim is to derive a closed form for the maximum efficiency of a two-stroke engine where the initial state of the working body undergoes transformation through simple permutations. 

Before proceeding to the optimization of the efficiency, we shall show that coherence in the energy basis of the state of the catalyst can not influence the engine performance. Hence, it is enough to investigate only energy-incoherent catalysts.

\subsection{Coherence in the initial state of the catalyst can not enhance the performance of the two-stroke heat engine}
The   sufficiency  to consider only catalysts which are diagonal in the energy basis is captured by the following Proposition:
\begin{prop}\label{proposition_coherence_useless}
    For any catalyst assisted two-stroke heat engine with an initial state of   the working body $\rho_s\otimes\tau_h\otimes\tau_c$ that produces work  $W$ with efficiency $\eta$, one can construct a unitary that transforms the initial state of working body of another two-stroke heat engine $\tilde{\rho}_s\otimes\tau_h\otimes\tau_c$,
     such that state of the catalyst $\tilde{\rho}_s$ satisfies $[\tilde{\rho}_s,H_s]=0$ and produces exactly same work $W$ with the efficiency $\eta$.     
\end{prop}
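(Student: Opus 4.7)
The plan is to remove coherence from the catalyst by a local unitary rotation on $\mathcal{H}_s$ alone, and then absorb that rotation into a redefinition of the work-stroke unitary. The idea driving the construction is that, thanks to the catalyst-preservation condition, $W$ and $\eta$ depend only on the reduced final states of the hot and cold $d$-level systems (the catalyst contribution to the work cancels in both engines), so it suffices to engineer a new process whose $hc$-marginal at the end of the work stroke matches the original.

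First I would spectrally decompose $\rho_s = V D_s V^\dagger$, where $V$ is a unitary on $\mathcal{H}_s$ and $D_s = \sum_n p_n \ketbra{n}{n}$ is diagonal in the energy eigenbasis $\{\ket{n}\}$ of $H_s$ with the eigenvalues $\{p_n\}$ of $\rho_s$ on the diagonal. I would then set $\tilde{\rho}_s := D_s$, which satisfies $[\tilde\rho_s, H_s]=0$ by construction, and define
\begin{equation}
  \tilde U := (V^\dagger \otimes \iden_h \otimes \iden_c)\, U\, (V \otimes \iden_h \otimes \iden_c).
\end{equation}
A direct computation using $V D_s V^\dagger = \rho_s$ gives
\begin{equation}
  \tilde U(\tilde\rho_s \otimes \tau_h \otimes \tau_c)\tilde U^\dagger = (V^\dagger \otimes \iden_{hc})\, U(\rho_s \otimes \tau_h \otimes \tau_c) U^\dagger\, (V \otimes \iden_{hc}).
\end{equation}
Taking the partial trace over the catalyst and using that $\Tr_s$ is invariant under conjugation by $V^\dagger \otimes \iden_{hc}$ shows that the final joint state of the hot and cold $d$-level systems is identical in both engines, so $Q_h$, $Q_c$, and hence $W = Q_h + Q_c$ and $\eta = W/Q_h$, coincide. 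Tracing instead over $h,c$ yields $V^\dagger \rho_s V = D_s = \tilde\rho_s$, which is precisely the catalyst-preservation condition in Eq.~\eqref{eq:marginal_cyclic} for the new engine.

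The only delicate point is checking that the $H_s$ term in the work functional defined by Eq.~\eqref{defn_of_work2} does not introduce a difference between the two engines. Because the catalyst marginal is preserved in each case, the contribution $\Tr(H_s(\rho_s - \Tr_{hc}\rho^f_{s,h,c}))$ vanishes in the original engine and $\Tr(H_s(\tilde\rho_s - \Tr_{hc}\tilde\rho^f_{s,h,c}))$ vanishes in the new one, so the $H_s$ piece drops out of $W$ identically and no assumption on $H_s$ is needed beyond the one already made in the paper. The construction is fully constructive and, I expect, captures the essential content of the proposition; the slightly subtle book-keeping of the two partial traces is the only place where one has to be careful.
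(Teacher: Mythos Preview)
Your proposal is correct and follows essentially the same approach as the paper: diagonalize $\rho_s = V D_s V^\dagger$, set $\tilde\rho_s = D_s$ and $\tilde U = (V^\dagger\otimes\iden_{hc})\,U\,(V\otimes\iden_{hc})$, and check that the $hc$-marginals (hence $Q_h$, $Q_c$, $W$, $\eta$) coincide and that the catalyst is preserved. Your argument via the $hc$ partial trace is slightly cleaner than the paper's direct computation of $\tilde Q_h$ and $\tilde Q_c$, and you are more explicit about why the $H_s$ contribution to Eq.~\eqref{defn_of_work2} cancels, but the core construction is identical.
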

  We would like to make a remark here. The implication of the theorem \ref{proposition_coherence_useless} is trivial if the Hamiltonian of the catalyst is zero. From the above proposition, we see that even if catalyst Hamiltonian is non-trivial, then also coherence in the eigenbasis of the catalyst Hamiltonian do not play any role in determining the performance of the microscopic two-stroke heat engine. Therefore, we only consider catalyst that are diagonal in the eigenbasis of the catalyst Hamiltonian $H_s$. 

We refer the Reader to Appendix \ref{useless_coherence} for the proof. Now, we are ready to characterize the enhancements enabled by including the catalyst within the engine model. 

\subsection{Two kinds of catalytic enhancements: Summary}

 In this section, we summarize the kinds of catalytic enhancements that we have explored in this paper.  We find that both efficiency and regime of operation of the engine can be improved by utilizing a catalyst.    

\subsubsection{Catalytic enhancements in the efficiency}
As described in Eq. \eqref{Carnot_bd}, to produce a positive amount of work by a catalyst-assisted two-stroke engine its efficiency $\eta$ should satisfy 
\begin{eqnarray}\label{Efficiency_range_for_positive_work}
    0<\eta<1-\frac{\beta_h}{\beta_c},
\end{eqnarray}
where $\beta_h$ and $\beta_c$ are the temperature of the hot and the cold $d$-level system of the working body.

On the other hand, a two-stroke engine without a catalyst requires to have $\omega_h>\omega_c$ and $\beta_h\omega_h>\beta_c\omega_c$ to produce a positive amount of work, and Otto efficiency is the optimal one which is given by the formula $1-\frac{\omega_c}{\omega_h}$ where $\omega_c$ and the $\omega_h$ is the frequency associated with the Hamiltonian of the hot and the cold two-level system (see Sec. \ref{qubit_otto}).

Therefore, to prove the catalytic enhancement of the efficiency, we aim to achieve an efficiency $\eta$ for the catalyst assisted two-stroke thermal machine such that
\begin{equation}\label{range_ineq}
    1-\frac{\omega_c}{\omega_h}<\eta<1-\frac{\beta_h}{\beta_c}.
\end{equation}
In other words, we shall include a catalyst with the working body of the engine and construct a unitary that transforms the initial state of the working body such that the above inequality holds. 

\subsubsection{Extending the regime of operation due to catalysis}
The non-catalytic two-stroke heat engine whose working body composed of two two-level systems thermalized at two distinct temperatures  can not extract work if $\omega_c>\omega_h$, because by definition $\beta_c>\beta_h$,  which implies 
\begin{eqnarray}
    \omega_c>\omega_h \Rightarrow  e^{-\beta_h\omega_h} > e^{-\beta_c\omega_c},
\end{eqnarray}
hence forcing the initial state to be passive \cite{Allahverdyan2004}   (We have introduced the notion of passivity in Eq. \eqref{defn_passive_state}).  Nonetheless, with the aid of a catalyst we can run the engine at the same regime of frequencies and temperatures to extract positive amount of work with efficiency $\eta$ that satisfies the inequality in Eq. \eqref{Efficiency_range_for_positive_work}. Therefore, the catalysis can broaden the regime of operation for the two-stroke engine. 

After briefly summarizing the two types of catalytic enhancements, we will explore them in detail in the following section. 

\subsection{Improving the efficiency using a catalyst}\label{Sec:eff_imp_cat}

We proceed by elaborating the catalytic improvement in the efficiency in this section. In particular, we shall show that the catalytic enhancement in the efficiency in Sec. \ref{Improving_the_eff_two} for the microscopic two-stroke heat engine where the dimension of the hot and the cold $d$-level system present in the working body is two. Next in \ref{Improving_the_eff_d}, we identify certain scenarios where the catalytic enhancement in the efficiency can be guranteed for microscopic two-stroke heat engine where the dimension of the hot and the cold $d$-level system can be  arbitrary.   

\subsubsection{ The optimal efficiency - analysis for two-level working body
}\label{Improving_the_eff_two}
\begin{figure*}[htbp]
    \centering
    \includegraphics[width=17cm]{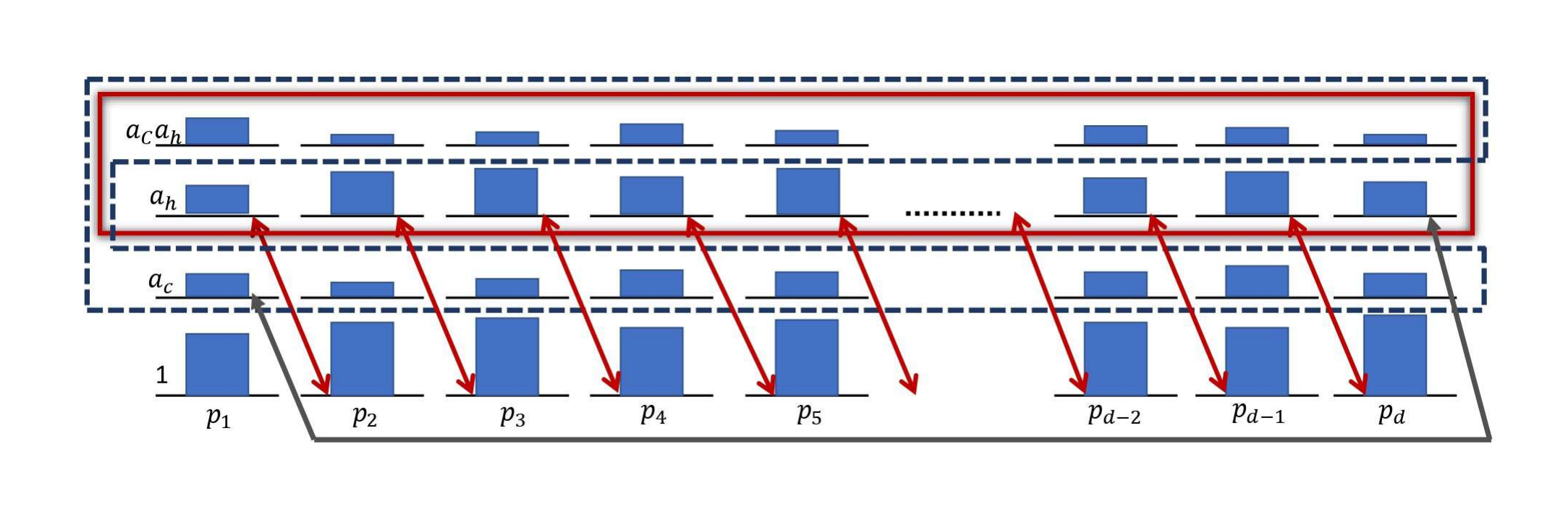}
    \caption{\label{fig:dswap} Depiction of the simple permutation described in Eq. \eqref{Pi1} that leads to efficiency $\eta = 1-\frac{\omega_c}{d\omega_h}$. } 
\end{figure*}
In this section, we optimize the efficiency of the two-stroke heat engine with working body composed of two two-level systems thermalized at two distinct temperatures, assisted with a  $d$-level catalyst. 

We start with introducing a class of unitaries called \textit{simple permutations}. 
  The main motivation for considering this class is two-fold. Firstly, they are straightforward to analyze. Secondly, any efficiency $\eta$ that satisfies Eq. \eqref{Efficiency_range_for_positive_work} can be achieved with arbitrary precision.  This is done by applying an appropriate simple permutation to the initial state of the engine's working body endowed with suitably chosen catalyst during work stroke. Let us define simple permutation formally:

\begin{defn}[Simple permutation] Consider a permutation that acts on the initial state of the engine's working body $\rho^i_{s,h,c} = \rho_s\otimes\tau_h\otimes\tau_c$ with dimension of each component is $d$, $d_h$, and $d_c$, respectively. We denote the energy eigenstate of the total Hamiltonian of working body as $H_s+H_h+H_c$ by $|i,j,k\rangle_{s,h,c}$. A permutation $\Pi$ is referred to as a simple permutation, if for any fixed $i$, $\Pi$ acts on exactly two distinct energy levels labelled by $|i,j_1,k_1\rangle$ and $|i,j_2,k_2\rangle$,  where $\Pi$ results an exchange of 

\begin{enumerate}
    \item the population associated with energy level $|i,j_1,k_1\rangle$ and $|i+1,j',k'\rangle$ and,
    \item the population associated with energy level $|i,j_2,k_2\rangle$ and $|i-1,j'',k''\rangle$,
\end{enumerate}
where $\pm$ involves in $i+1$ and $i-1$, denotes addition or subtraction modulo $d$.
\end{defn}

Fig. \ref{fig:swap_vs_graph} shows an example of a simple permutation, with $\tau_h$ and $\tau_c$ taken as two level systems. In the diagram in the upper part of the Figure, all energy levels of the working body of the engine (two two-level systems and a catalyst) are shown, grouped in four where the groups are labelled by eigenstate of the catalyst Hamiltonian. Let us use arrows to indicate pairs of distinct energy levels whose populations are mapped onto each other by a permutation. Then, a simple permutation corresponds to the configuration of arrows connecting the groups of levels, such that all groups are connected to their neighbours by a single arrow. The graph in the bottom of Fig. \ref{fig:swap_vs_graph} corresponds to the net flow of population among the associated energy levels of the catalyst (indicated by vertices), where direction of the arrow corresponds to the direction of the population flow.  For example, for the permutation shown in Fig. \ref{fig:swap_vs_graph}, $\mathcal{N}p_1a_h$ flows from vertex $1$ to vertex $2$ and $\mathcal{N}p_2a_c$ flows from vertex $2$ to vertex $1$. Thus a net population flow of 
\begin{equation}
    \Delta P_1 = \mathcal{N}(p_1a_h-p_2a_c)
\end{equation}
is created from vertex $1$ to $2$. Importantly, since we demand the marginal state of the catalyst does not change as a result of the permutation implemented in the work stroke, the net population flow inside a particular vertex has to be the same as the population flow out of it. By investigating the population flows associated with other vertices, we see that: 
\begin{equation}\label{cyc_delta_P}
    \Delta P_1 = \Delta P_2 = \ldots = \Delta P_{d-1} = \Delta P_d=\Delta P,
\end{equation}
where $\Delta P_i$ denotes the net transferred amount of population from block $i$ to block $i+1$.

Conceptually, this is similar to Kirchhoff's current law which says that the current flowing into a node (or a junction) must be equal to the current flowing out of it, as a consequence of charge conservation.  Furthermore, since work done by the engine (Eq. \eqref{Work_redefined}) and its efficiency  (Eq. \eqref{efficiency_def}) depend only on the amount of heats $Q_h$ and $Q_c$, the population flow contributing to these heats, fully determine the performance of the engine. To characterize the heat flows, we introduce the concepts of hot and cold subspace.
\begin{defn}[Hot and cold subspace]\label{hot_and_cold_subspaces} 
Consider the two-stroke engines with initial state of the working body is given by $\rho^{i}_{s,h,c}$ and the Hamiltonian $H_s+H_h+H_c$ where the dimension of each component is $d$, $d_h$, and $d_c$, respectively. We denote the energy levels of the Hamiltonian of the working body  $H_s+H_h+H_c$ by $|i,j,k\rangle_{s,h,c}$. We shall define the $j^\text{th}$ hot subspace by 
\begin{equation}
    \mathcal{H}_j:=\text{Span}\;\; \{\forall i, k\;\;|i,j,k\rangle_{s,h,c}\}, 
\end{equation}
and $k^\text{th}$ cold subspace by   
\begin{equation}
    \mathcal{C}_k:=\text{Span}\;\; \{\forall i, j\;\;|i,j,k\rangle_{s,h,c}\}.
\end{equation}
\end{defn}

For instance, in Fig. \ref{fig:swap_vs_graph}, the excited hot and cold subspaces for a two-stroke heat engine are shown by an enclosed red rectangle and a dotted blue line. 

Using the concept of the hot and the cold subspace, for a two stroke engine where the dimension of the hot and cold $d$-level system is $2$, we can simplify $Q_h$, as given in Eq. \eqref{defn:hot_heat} and $Q_c$ given by Eq. \eqref{defn:cold_heat}. As we have assumed that the energy associated with the ground state of the hot and cold two-level system is zero, the amount of heat $Q_h$ can be expressed as a sum of net population flows out of the excited hot subspace, multiplied by the corresponding energy $\omega_h$ of the excited level of the hot two level system, i.e.,
\begin{equation}\label{Heat_jth_subspace}
    Q_h = \sum_j\Delta P_{j}\omega_h.
\end{equation}
Similarly, the cold heat $Q_c$ given by Eq. \eqref{defn:cold_heat} reduces to 
\begin{equation}\label{Cold_Heat_jth_subspace}
    Q_c = \sum_k\Delta P_{k}\omega_c,
\end{equation}
where $\omega_c$ is the energy associated with the excited cold subspace. We use equations \eqref{Heat_jth_subspace} and \eqref{Cold_Heat_jth_subspace} to simplify the calculation of efficiency for the simple permutations due the cyclic structure. 

This simplification mainly arises because $\Delta P_k$ are equal for all $k$ (See Fig. \ref{fig:swap_vs_graph}). If $m$ number of arrows leads to population flow out of the excited hot subspace, then $Q_h = m\Delta P \omega_h$. Similarly, if $n$ number of arrows leads to population flow out of the excited cold subspace, then $Q_c = n\Delta P \omega_h$. On the other hand, if population flows in the excited hot/cold subspace, then corresponding heat will have negative sign (See Eq. \eqref{defn:hot_heat} and \eqref{defn:cold_heat} where $Q_h$ and $Q_c$ depend on the difference of initial and final state)    

We proceed with efficiency optimization of simple-permutation driven two-stroke heat engines, with dimension of $\tau_h$ and $\tau_c$ being two. The result is given by the following Theorem:

\begin{figure*}[htbp]
    \centering
    \includegraphics[width=17cm]{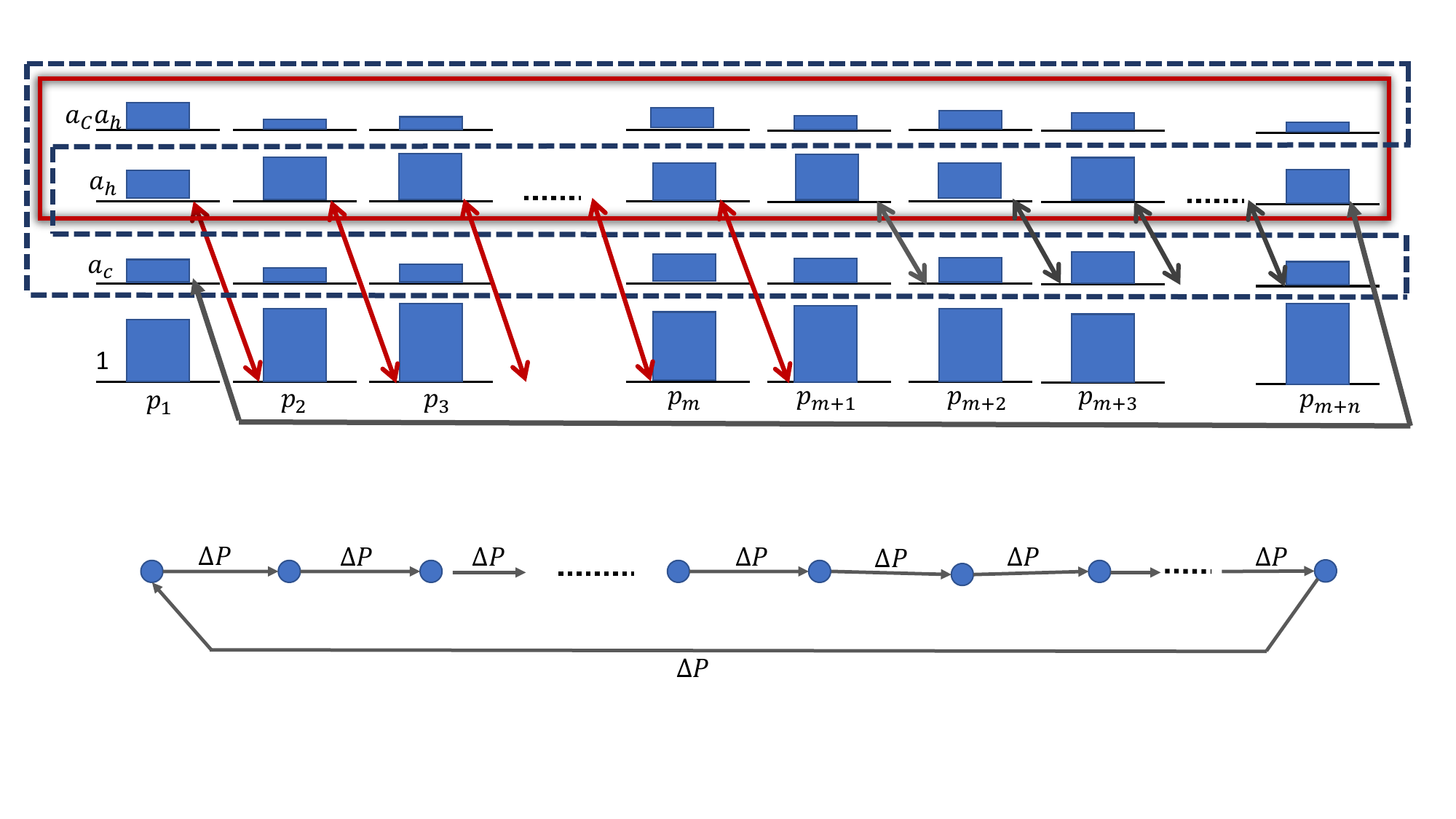}
    \caption{\label{fig:mhotncold} A generic simple permutation described in Eq. \eqref{Pigen} that contains the simple permutation showed in Fig. \ref{fig:dswap} as a special case with $n=1$ and $m=d-1$. This leads to efficiency $\eta = 1-\frac{n\omega_c}{(m+n)\omega_h}$. }
\end{figure*}
\begin{thm}\label{catalyst:enhancement}
    Consider a two-stroke heat engine with initial state of the working body is given by $\rho^{i}_{s,h,c} = \rho_s\otimes\tau_h\otimes \tau_c$ with dimension of $\tau_h$ and $\tau_c$ two. The initial state of the engine $\rho^{i}_{s,h,c}$ transforms via a simple permutation, where the Hamiltonians and the states of the components are 
    \begin{eqnarray}
        H_s &=& 0 \quad;\quad \rho_s= \sum_{j=1}^d p_j |j\rangle\langle j|,\\
        H_h &=& \omega_h|1\rangle\langle 1| \quad;\quad \tau_h = \frac{e^{-\beta_h H_h}}{\Tr(e^{-\beta_h H_h})},\\
        H_c &=& \omega_c|1\rangle\langle 1| \quad;\quad \tau_c = \frac{e^{-\beta_c H_c}}{\Tr(e^{-\beta_c H_c})},
    \end{eqnarray}
    such that 
    \begin{equation}\label{relation_carnot}
        \frac{\omega_c}{   d   \omega_h}\leq1\quad\text{and}\quad d \leq \frac{\beta_c\omega_c}{\beta_h\omega_h}.
    \end{equation}
    Then the optimal efficiency is given by 
    \begin{equation}\label{otto-d}
        \eta_{\text{opt}} = 1 - \frac{\omega_c}{d\omega_h},
    \end{equation}
where $d$ is the dimension of the catalyst. 
\end{thm}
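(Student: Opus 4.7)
The plan is to exploit the cyclic graph structure of simple permutations together with the catalyst-preservation constraint to reduce the efficiency optimization to a purely combinatorial question about integer flows around the catalyst cycle. I would parametrize each simple permutation by, for every catalyst block $i \in \{1,\ldots,d\}$, an ordered pair of distinct two-qubit levels $A_i = (j_{A_i}, k_{A_i})$ and $B_i = (j_{B_i}, k_{B_i})$, with arrow $i$ exchanging the populations at $B_i$ in block $i$ and $A_{i+1}$ in block $i+1$. The net flow along arrow $i$ is then
\begin{equation*}
\Delta P_i = \mathcal{N}\bigl(p_i\,a_h^{j_{B_i}} a_c^{k_{B_i}} - p_{i+1}\,a_h^{j_{A_{i+1}}} a_c^{k_{A_{i+1}}}\bigr),
\end{equation*}
and catalyst preservation forces $\Delta P_i = \Delta P$ for every $i$ in accordance with Eq.~\eqref{cyc_delta_P}.

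Next, I would track the net change in the excited hot and cold subspace populations arrow by arrow. Each arrow $i$ shifts the hot-excited population by $\Delta P\,(j_{A_{i+1}} - j_{B_i})$ and the cold-excited population by $\Delta P\,(k_{A_{i+1}} - k_{B_i})$. Summing around the cycle and reindexing, Eqs.~\eqref{defn:hot_heat}--\eqref{defn:cold_heat} collapse to
\begin{equation*}
Q_h = M\,\omega_h\,\Delta P, \qquad Q_c = -N\,\omega_c\,\Delta P,
\end{equation*}
with integer coefficients $M = \sum_i(j_{B_i}-j_{A_i})$ and $N = \sum_i(k_{A_i}-k_{B_i})$. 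Engine operation ($Q_h > 0,\,Q_c < 0$) demands $\Delta P > 0$, $M \geq 1$, $N \geq 1$, and the efficiency reduces to the combinatorial form $\eta = 1 - N\omega_c/(M\omega_h)$.

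The optimization is then transparent: each summand in $M$ and $N$ lies in $\{-1,0,1\}$, so $M, N \leq d$. Efficiency is maximized by pushing $M$ to its upper bound $d$ and $N$ down to $1$. Setting $M = d$ forces $j_{B_i}=1$, $j_{A_i}=0$ in every block, leaving the cold coordinates free; selecting $k_{A_i}=1,\,k_{B_i}=0$ in exactly one block and $k_{A_i}=k_{B_i}$ in the others achieves $N = 1$. This is precisely the permutation of Fig.~\ref{fig:dswap}, and it attains $\eta_{\text{opt}} = 1 - \omega_c/(d\omega_h)$.

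Finally, I would verify realizability. The $d$ equations $\Delta P_i = \Delta P > 0$ form a cyclic linear recurrence $p_{i+1} = \alpha_i p_i - \Delta P/(\mathcal{N}\,y_i)$ with $\alpha_i = a_h^{j_{B_i}-j_{A_{i+1}}} a_c^{k_{B_i}-k_{A_{i+1}}}$ and $y_i = a_h^{j_{A_{i+1}}} a_c^{k_{A_{i+1}}}$; closing the loop $p_{d+1} = p_1$ yields the consistency relation $p_1(\prod_i \alpha_i - 1) = (\Delta P/\mathcal{N})\,S$ with $S > 0$, so positivity of $\Delta P$ is equivalent to $\prod_i \alpha_i > 1$. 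For the optimal permutation this collapses to $a_h^d / a_c > 1$, equivalently $d\beta_h\omega_h < \beta_c\omega_c$, which is exactly the hypothesis of Eq.~\eqref{relation_carnot}; the companion bound $\omega_c \leq d\omega_h$ then guarantees $\eta_{\text{opt}} \geq 0$. The main obstacle I anticipate is this last step: showing not merely that the cyclic system is solvable with $\Delta P > 0$, but that the resulting $\{p_i\}$ actually lie in the probability simplex. I would handle it by solving the recurrence in closed form and rescaling $\Delta P$ so that $\min_i p_i \geq 0$ together with $\sum_i p_i = 1$.
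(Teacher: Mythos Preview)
Your proposal is correct and follows essentially the same route as the paper. Both arguments use the equal-flow constraint $\Delta P_i=\Delta P$ from catalyst preservation to write $Q_h$ and $Q_c$ as integer multiples of $\omega_h\Delta P$ and $\omega_c\Delta P$; your integers $M=\sum_i(j_{B_i}-j_{A_i})$ and $N=\sum_i(k_{A_i}-k_{B_i})$ are exactly the paper's $(D-C)$ and $(A-B)$ built from the counts $k_{ij}$ in the swap-type table, and both conclude by bounding these integers by $d$ and $1$ respectively, with achievability via the permutation of Fig.~\ref{fig:dswap}. Your additional explicit treatment of the realizability condition $\prod_i\alpha_i>1\Leftrightarrow a_h^d>a_c$ (and the acknowledged need to check simplex membership for the $p_i$) goes slightly beyond what Appendix~\ref{appendix4} does in the optimality proof proper, but the paper handles this separately via the closed-form solution in Appendix~\ref{Work_prod}.
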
 
\begin{proof}[Proof sketch] 
We shall proceed with achievability of the efficiency $\eta_{\text{opt}}$. In order to prove this, we shall consider the simple permutation \cite{BiswasLobejko}:
\begin{equation}\label{Pi1}
\Pi = (\sum_{i=1}^{d-1}|i,1,0\rangle\langle i+1,0,0|+ |1,0,1\rangle\langle d,1,0|)_{s,h,c}+\mathbb{I}_{\text{Rest}},
\end{equation}
as shown in Fig. \ref{fig:dswap} to transform the initial state of the working body during work stroke. Here $\mathbb{I}_{\text{Rest}}$ is identity operator on the orthogonal complement of the first term in the parentheses. This permutation leads to efficiency $1-\frac{\omega_c}{d\omega_h}$. The reason for this can be easily understood from Fig. \ref{fig:dswap}.  Let us assume that the permutation $\Pi$ in Eq. \eqref{Pi1} causes $\Delta \tilde{P}$ amount of net population flow from a vertex to the one on its right. Therefore, there is $d\Delta\tilde{P}$ of net population flow out of the hot subspace (corresponding to red arrows), whereas $\Delta\tilde{P}$ of net population flows into the cold subspace (represented by the grey arrow). From Eq.  \eqref{Heat_jth_subspace} and Eq. \eqref{Cold_Heat_jth_subspace} we obtain
\begin{equation}
    Q_h = d\omega_h\Delta \tilde{P}, \quad Q_c = -\omega_c\Delta \tilde{P}. 
\end{equation}
As the net population flows into the cold subspace, $Q_c$ takes negative sign.
Using the formula of efficiency from Eq. \eqref{efficiency_def}, we obtain $\eta = 1-\frac{\omega_c}{d\omega_h}$. The condition $d\leq \frac{\beta_c\omega_c}{\beta_h\omega_h}$ given in Eq. \eqref{relation_carnot} ensures that the efficiency given in Eq. \eqref{otto-d} is upper bounded by Carnot efficiency. The optimality of the bound  when the working body of the engine is transformed via simple permutations, is proven in Appendix \ref{appendix4}. 
\end{proof}

Note that every permutation obtained from the one shown in Fig. \ref{fig:dswap} by permutating populations of catalyst energy levels $\{p_{1},p_{2},\dots,p_{d}\}$ is characterized by the same values of $Q_{h}$ and $Q_{c}$, and consequently, $\eta$ and $W$. Therefore, the state of the catalyst is determined up to a permutation.
In order to calculate the amount of work produced by the engine, we employ Eq. \eqref{Work_redefined}:
\begin{equation}
    W = Q_h+Q_c = (d\omega_h-\omega_c)\Delta \tilde{P}.
\end{equation}
To calculate $\Delta\tilde{P}$ one needs to apply the condition of preserving the marginal state of the catalyst (cyclicity), which we shall do in Section \ref{Work_extraction_section}.

We now turn our attention to the question: Can we always guarantee catalytic improvements in the efficiency of a two-stroke engine where both the hot and cold systems present in the working body are two-level systems? It is worth noting that the permutation in Eq. \eqref{Pi1} fails to facilitate work extraction whenever Eq. \eqref{relation_carnot} is not satisfied. Let us clarify this. Note that in order to extract a positive amount of work via simple permutation $\Pi$ given in Eq. \eqref{Pi1}, $\omega_c$, $\omega_h$, $\beta_c$, $\beta_h$, and $d$ have to obey Eq. \eqref{relation_carnot}. This sets the following constraint on the dimension of the catalyst 
\begin{equation}\label{range_of_d_eqn}
    \frac{\omega_c}{\omega_h}\leq d \leq \frac{\beta_c\omega_c}{\beta_h\omega_h}.
\end{equation}
Thus, catalytic enhancement in the efficiency is not always ensured via the permutation $\Pi$ given in Eq. \eqref{Pi1} if the inequality from Eq. \eqref{range_of_d_eqn} is not satisfied. (For instance, please see the example given in Sec. \ref{Extending_the_regime} where $\omega_c/\omega_h=1.5$ and $\beta_c\omega_c/\beta_h\omega_h=1.75$. Clearly, there will be no feasible values for  $d$ (i.e., dimension of catalyst) that satisfies Eq. \eqref{range_of_d_eqn} for  $\omega_c/\omega_h=1.5$ and $\beta_c\omega_c/\beta_h\omega_h=1.75$). 

Therefore to encompass such above-mentioned scenario, we would like to construct a a more general simple permutations which lead to catalytic enhancement in efficiency for any possible values of $\omega_c/\omega_h$ and $\beta_c\omega_c/\beta_h\omega_h$ as long as $\omega_c/\omega_h<\beta_c\omega_c/\beta_h\omega_h$.  In other words, we aim to construct a simple permutation that always enables us to exceed the optimal Otto efficiency of a two-stroke engine where both the hot and cold $d$-level systems present in the working body are two-level systems.  We describe the result in the following theorem.
\begin{thm}\label{theorem_imp}
Consider a two-stroke heat engine without a catalyst, where both the hot and cold systems in the working body are two-level systems. The energy of the excited state for the hot two-level system is greater than that for the cold two-level system, i.e., $\omega_h>\omega_c$ such that they satisfies the condition:
 
\begin{equation}
    \beta_h\omega_h<\beta_c\omega_c,
\end{equation}
where $\beta_h$ and $\beta_c$ denotes the inverse temperature associated with the hot and the cold two level systems. 

Then, it is always possible to incorporate a catalyst with the working body and construct a simple permutation that transforms the initial state of the working body during the work stroke, leading to an efficiency $\eta$  that is strictly greater than the optimal Otto efficiency of the non-catalytic two-stroke engine,  i.e., 
\begin{equation}\label{eff:drange2}
    1-\frac{\omega_c}{\omega_h}<\eta<1-\frac{\beta_h}{\beta_c}.
\end{equation}
\end{thm}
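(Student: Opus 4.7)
The plan is to use the generalized family of simple permutations shown in Fig.~\ref{fig:mhotncold} (Eq.~\eqref{Pigen}), parameterized by positive integers $(m,n)$ and implemented with a catalyst of dimension $d=m+n$. According to the discussion preceding the theorem, this family drives $(m+n)$ net-flow arrows out of the excited hot subspace and $n$ arrows into the excited cold subspace, so by Eqs.~\eqref{Heat_jth_subspace} and~\eqref{Cold_Heat_jth_subspace} the heats read $Q_h=(m+n)\omega_h\,\Delta P$ and $Q_c=-n\omega_c\,\Delta P$, yielding $\eta_{m,n}=1-\frac{n\,\omega_c}{(m+n)\,\omega_h}$. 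The theorem thus reduces to selecting positive integers $m,n$ such that $\eta_{m,n}$ lies strictly between the non-catalytic Otto optimum $1-\omega_c/\omega_h$ and the Carnot bound $1-\beta_h/\beta_c$.

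Writing $r:=\beta_h\omega_h/(\beta_c\omega_c)$, the hypothesis $\beta_h\omega_h<\beta_c\omega_c$ gives $r\in(0,1)$, and the target window is equivalent to $r<n/(m+n)<1$. The upper bound is simply $m\ge 1$, while the lower bound rearranges to $n/m>r/(1-r)$. I would fix $m=1$ and choose any integer $n>r/(1-r)$, which is always possible because $r<1$; the associated catalyst has dimension $d=n+1$. The achieved efficiency $\eta_{1,n}=1-\frac{n}{n+1}\cdot\frac{\omega_c}{\omega_h}$ then strictly beats the non-catalytic Otto bound while remaining below Carnot, which is precisely Eq.~\eqref{eff:drange2}.

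It then remains to verify that (i) the cyclicity constraint~\eqref{cyc_delta_P}, which equates all local net flows $\Delta P_1=\cdots=\Delta P_d$, admits a solution in catalyst populations $\{p_i\}_{i=1}^{d}$ forming a bona fide probability distribution, and (ii) the resulting $\Delta P$ is positive, so that $W=\eta\,(n+1)\omega_h\,\Delta P>0$ and the machine indeed operates as an engine. Both reduce to a homogeneous linear system whose coefficients involve the thermal weights $a_h=e^{-\beta_h\omega_h}$ and $a_c=e^{-\beta_c\omega_c}$, amenable to the technique of Appendix~\ref{appendix4}.

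The main obstacle is this last step: the $(m,n)=(1,n)$ pattern is more asymmetric than the $(m,n)=(d-1,1)$ pattern handled in Theorem~\ref{catalyst:enhancement}, so one must track carefully that the unique (up to normalization) solution of the cyclicity system is strictly positive throughout the admissible parameter regimes $\omega_h>\omega_c$, $\beta_h\omega_h<\beta_c\omega_c$, and that the arrow directions in Fig.~\ref{fig:mhotncold} are aligned with the natural ordering of the $p_i$'s so that $\Delta P>0$. Once this verification is complete, combining the efficiency bound with the positivity of $W$ and the preservation of $\rho_s$ yields the theorem.
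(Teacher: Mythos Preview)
Your proposal is essentially the same as the paper's: both use the $\Pi_{\text{gen}}$ family of Eq.~\eqref{Pigen} (Fig.~\ref{fig:mhotncold}), derive $\eta=1-\frac{n\omega_c}{(m+n)\omega_h}$ from the heat balances, and then argue that $(m,n)$ can be chosen so that $\eta$ lands strictly inside the window $(1-\omega_c/\omega_h,\,1-\beta_h/\beta_c)$. The only cosmetic difference is that the paper invokes density of rationals to place $d/n$ anywhere in $(1,\beta_c\omega_c/\beta_h\omega_h)$, whereas you make the concrete choice $m=1$ and $n>\tfrac{r}{1-r}$ with $r=\beta_h\omega_h/(\beta_c\omega_c)$; either works.

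One remark: the verification you flag in (i)--(ii) --- existence of a bona fide catalyst distribution and $\Delta P>0$ --- is not carried out in the paper's proof of this theorem either. It is deferred to Sec.~\ref{Work_extraction_section} and Appendix~\ref{Work_prod} (not Appendix~\ref{appendix4}, which concerns optimality among simple permutations), where the explicit solution gives $\Delta P\propto a_h^{m+n}-a_c^n$. Positivity then follows precisely from the Carnot-side inequality $\tfrac{d}{n}<\tfrac{\beta_c\omega_c}{\beta_h\omega_h}$, so your concern is legitimate but resolves cleanly once one points to that computation.
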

\begin{proof}
Consider the simple permutation $\Pi_{\text{gen}}$ that acts on the initial state of the working body (hot and cold two-level system, and the catalyst) given as follows:
\begin{eqnarray}\label{Pigen2}
\Pi_{\text{gen}}&=&\Big(\sum_{i=1}^{m}|i+1,0,0\rangle\langle i,1,0|+ \sum_{j=m+1}^{m+n-1}|j+1,0,1\rangle\langle j,1,0|\nonumber\\&+&|1,0,1\rangle\langle m+n,1,0|+\text{Herm. conjugate}\Big)_{s,h,c}+\mathbb{I}_{\text{Rest}}\nonumber\\\;,
\end{eqnarray}
where the dimension  of the catalyst denoted by $d$ is given by $m+n$ as shown in Fig. \ref{fig:mhotncold}. Here $\mathbb{I}_{\text{Rest}}$ is identity operator on the orthogonal complement of the first term in the parentheses. Let us assume that $\Delta P$ is the net population flow that happens from the $i^{\text{th}}$ vertex to $(i+1)^{\text{th}}$ vertex where  the index $i$ corresponds to the eigenvectors of the Hamiltonian of the catalyst and the `$+$' denotes the addition modulo $d$ (i.e., from a vertex to its neighbour to the right assuming periodic boundary conditions).

As mentioned earlier, due to preservability of the state of the catalyst, all the net transferred amount of population must be equal. Employing the formula from Eq. \eqref{Heat_jth_subspace} and Eq. \eqref{Cold_Heat_jth_subspace}, we can calculate 
\begin{equation}\label{hot_mn_cold_mn}
    Q_h = (m+n)\Delta P\omega_h=d\Delta P\omega_h\quad Q_c=-n\Delta P\omega_c,
\end{equation}
where negative sign in $Q_c$ implies population flow in the excited cold subspace. Thus, the efficiency can be calculated from Eq. \eqref{efficiency_def} as 
\begin{equation}\label{efficiency_mn}
    \eta = 1+\frac{Q_c}{Q_h} =1-\frac{n\omega_c}{(m+n)\omega_h}=1-\frac{n\omega_c}{d\omega_h}.
\end{equation}
In order to satisfy Eq. \eqref{eff:drange2}, it is enough to satisfy the following inequality.
\begin{eqnarray}\label{ineq:d_range2}
    1< \frac{d}{n} < \frac{\beta_c\omega_c}{\beta_h\omega_h}.
\end{eqnarray}
It is always possible to choose $d$ and $n$ such that the above inequality holds because $\frac{d}{n}$ is a rational number, thus it can  approximate any real number between $1$ and $\frac{\beta_c\omega_c}{\beta_h\omega_h}$ with arbitrary accuracy.
\end{proof} 

Here we would like to make a remark about the simple permutation $\Pi_{\text{gen}}$ defined in Eq. \eqref{Pigen2}. Note that, any possible efficiency that can be achieved by the two-stroke engine must lies in between $0$ and Carnot efficiency as mentioned in Eq. \eqref{Efficiency_range_for_positive_work}. Now any efficiency between $0$ and Carnot efficiency can be realized by choosing $n$ and $d$ in the simple permutation $\Pi_{\text{gen}}$. So, the inclusion of a catalyst of suitably chosen dimension enables to realize any possible value of efficiency of the heat engine, i.e., any $\eta$ that satisfy Eq. \eqref{Efficiency_range_for_positive_work}. 

Thus far, we have demonstrated the catalytic enhancement in efficiency for the  two-stroke engine, with a working body composed of a hot and a cold $d$-level system in two dimensions. We will now address the problem of catalytic enhancement in the efficiency for a generic two stroke engine having arbitrary finite dimensional hot and the cold $d$-level system in the working body.

\subsubsection{The optimal efficiency - analysis for  working body of an arbitrary dimension}\label{Improving_the_eff_d}

In this section, we would like to investigate whether the presence of a catalyst in the working body of the engine, along with arbitrary dimensional hot and cold $d$-level system , always leads to the enhancement of efficiency   with non-zero finite work production. We would like to make an important remark here. Employing the result from Ref. \cite{WilmingPRL} one can show that there will always exist a catalyst and unitary that enables the engine to attain the Carnot efficiency. But attaining Carnot efficiency by a two-stroke engine leads to zero work production per cycle. So, in this section, we aim to explore scenarios where the efficiency of a two-stroke engine, producing non-zero work, can be enhanced by incorporating a catalyst into the working body.  

Addressing this question presents a significant challenge, as to prove catalytic enhancement, knowledge about the optimal efficiency of the corresponding non-catalytic two-stroke engine is needed. Obtaining the optimal efficiency of a non-catalytic two-stroke engines in such scenarios are extremely difficult, as the size of the feasible set and number of relevant parameters grows factorially with dimensions (e.g. numbers of the different energy gaps in the Hamiltonian). Nevertheless, we characterize instances where the catalytic enhancements can be ensured.

Consider a non-catalytic two-stroke heat engine with the working body composed of arbitrary dimensional hot and cold $d$-level system. The working body starts in the initial state $\tau_h\otimes\tau_c$ and transforms to $\rho^{*}_{h,c}$ that leads to the maximum efficiency $\eta^{*}$. There can be two possibilities about the final state of the working body $\rho^{*}_{h,c}$:
\begin{enumerate}
    \item $\rho^{*}_{h,c}$ is a correlated state.
    \item $\rho^{*}_{h,c}$ is a product state.
\end{enumerate}

Let us start with the case when the final state of the working body of the engine $\rho^{*}_{h, c}$ is correlated. We employ a result from theorem 1 of Ref. \cite{HenaoUzdin} that proves the existence of a catalyst in the state $\rho_s$ and a unitary $U$ such that 
    \begin{eqnarray}
&&|\Tr_{s,h}\big(U(\rho_s\otimes\rho^{*}_{h,c})U^{\dagger}\big)\dket \succ |\Tr_{h}(\rho^{*}_{h,c})\dket,\label{naj_inp}\\
       &&\Tr_{s,c}\big(U(\rho_s\otimes\rho^{*}_{h,c})U^{\dagger}\big) = \Tr_{c}(\rho^{*}_{h,c}), \quad\text{and}\quad\label{naj_inp2}\\
       && \Tr_{h,c}\big(U(\rho_s\otimes\rho^{*}_{h,c})U^{\dagger}\big) = \rho_s\label{cat_last},
    \end{eqnarray}
if and only if $\rho^{*}_{h,c}$ is correlated. Here $|(\cdot)\dket$ denotes the spectrum of $(\cdot)$ and the preservability of the marginal state of the catalyst is guranteed by Eq. \eqref{cat_last}. One can observe from Eq. \eqref{naj_inp2} that the final state of hot $d$-level system in the two-stroke heat engine in the catalytic scenario is same as the final state of hot $d$-level system in the non-catalytic scenario. Therefore, the transferred amount of the heat $Q_h$ is same in both cases. Using the definition of heat transferred and cold heat from Eq. \eqref{defn:hot_heat} and Eq. \eqref{defn:cold_heat}, we can write optimal efficiency as
\begin{eqnarray}\label{Fin:eff}
    \eta^{*} =  1+\frac{Q_c}{Q_h} = 1-\frac{\Tr(H_c\rho^{*}_c)-\Tr(H_c\tau_c)}{\Tr(H_h\tau_h)-Tr(H_h\rho^{*}_h)},
\end{eqnarray}
where $\rho^{*}_h = \Tr_{h}(\rho^{*}_{h,c})$  and $\rho^{*}_c = \Tr_{c}(\rho^{*}_{h,c})$. Now, to attain the optimal efficiency the final marginal state of the cold $d$-level system i.e., $\rho^{*}_c$ has to be passive  (For definition of passivity see Eq. \eqref{defn_passive_state}).  If it is non-passive then after implementation of a unitary that transforms $\tau_h\otimes\tau_c$ to $\rho^{*}_{h,c}$, one can apply a local unitary $\mathbb{I}_h\otimes V_c$ on the cold $d$-level system $\rho^{*}_c$ which make the cold $d$-level system passive without altering the energy of the hot $d$-level system. This will increase the efficiency further as can be seen from Eq. \eqref{Fin:eff}. Using the same reasoning we see that in the catalytic scenario, the final state of the hot $d$-level system i.e., $\rho^{*}_h$ also has to be passive in order to obtain the optimal efficiency. Therefore, the final average energy of the cold $d$-level system equals to the passive energy   where we have defined passive energy in Eq. \eqref{defn_passive_state} as follows :
\begin{equation}\label{passivity}
    \min_{V\in \mathbb{U}(d)}\Tr(HV\rho V^{\dagger}) =\Tr(H\rho_\text{pass}).
\end{equation}
As the passive energy of a state is a Schur-concave function (which can be seen from Lemma 1 in Appendix of Ref. \cite{HenaoUzdin}), Eq. \eqref{naj_inp} implies that the final average energy of cold $d$-level system in the catalytic scenario is lower than the final average energy of cold $d$-level system in the non-catalytic scenario. Thus, final average energy of the working body of the engine in the catalytic scenario can be made lower than final average energy of the working body of the engine in the non-catalytic scenario which implies a greater extraction of work by withdrawing same heat from the hot heat bath. Therefore, from Eq. \eqref{Fin:eff} we conclude that efficiency of the two-stroke engine assisted with a catalyst can be made strictly greater than the efficiency of the two-stroke engine in the absence of a catalyst.

 Next, we move on to the case of $\rho^{*}_{h,c}$ being a product state that leads to the optimal efficiency $\eta^{*}$ in the non-catalytic scenario.   Here, we would like to employ the results from Ref. \cite{Sparaciari2017}, where in Eq. $(8)$ a permutation $\tilde{\Pi}$ is constructed such that for any passive state $\rho_{\text{pass}}$ that is not equal to the Gibbs state of the system, the following holds:
\begin{eqnarray}\label{sparaciari_perm}
    \Tr(H\rho_{\text{pass}}) &>& \Tr\big(H(\Tr_s\tilde{\Pi}(\rho_s\otimes\rho_{\text{pass}})\tilde{\Pi}^{\dagger})\big)\\
    \rho_s &=&\Tr_s\big(\tilde{\Pi}(\rho_s\otimes\rho_{\text{pass}})\tilde{\Pi}^{\dagger}\big),
\end{eqnarray}
 where $\rho_s$ is the state of the catalyst and $H$ is the Hamiltonian of the system in the state $\rho_{\text{pass}}$.
 In other words, $\tilde{\Pi}$ always allows to reduce the average energy of a passive state which is not Gibbs state with the aid of a catalyst.   By inspecting Eq. \eqref{Fin:eff}, we see that to achieve maximal efficiency the final state of the hot and cold $d$-level system has to be passive (one can use a similar argument as earlier to conclude the hot $d$-level system  also has to be in a passive state to obtain the optimal efficiency). Now, if the final state of either the hot or cold $d$- level system $\rho^{*}_{h,c}$ is not thermal i.e.,
 \begin{equation}
   \Tr_{h}(\rho^{*}_{h,c}) \neq \frac{e^{-\tilde \beta_hH_h}}{\Tr(e^{-\tilde \beta_h H_h})} \;\; \text{or} \;\; \Tr_{c}(\rho^{*}_{h,c}) \neq \frac{e^{-\tilde \beta_cH_c}}{\Tr(e^{-\tilde \beta_c H_c})},
 \end{equation}
 where $\tilde \beta_h$ and $\tilde \beta_c$ are some inverse temperatures, then we can apply the permutation $\tilde{\Pi}$ given in Eq. \eqref{sparaciari_perm} on that particular non-thermal passive state and a catalyst. If both  the hot or cold $d$-level system are non-thermal passive state, we can can apply the permutation $\tilde{\Pi}$ on either of them.
 
 Let us assume $ \Tr_{h}(\rho^{*}_{h,c})$ is passive but not thermal, then employing Eq. \eqref{sparaciari_perm} we say:
 \begin{eqnarray}\label{sparaciari_perm2}
    \Tr(H_h(\Tr_{h}(\rho^{*}_{h,c}))) &>& \Tr\big(H_h(\Tr_s\tilde{\Pi}(\rho_s\otimes\Tr_{h}(\rho^{*}_{h,c}))\tilde{\Pi}^{\dagger})\big)\nonumber\\
    \text{such that}\;\;\rho_s &=&\Tr_s\big(\tilde{\Pi}(\rho_s\otimes\Tr_{h}(\rho^{*}_{h,c}))\tilde{\Pi}^{\dagger}\big),
\end{eqnarray}
Therefore heat withdrawn from the hot bath in the catalytic scenario is strictly greater than heat withdrawn from the hot bath in the non-catalytic scenario as:
\begin{eqnarray}
    &&\Tr(H_h\tau_h)- \Tr\big(H_h(\Tr_s\tilde{\Pi}(\rho_s\otimes\Tr_{h}(\rho^{*}_{h,c}))\tilde{\Pi}^{\dagger})\big)\nonumber\\&>& \Tr(H_h\tau_h)- \Tr(H_h(\Tr_{h}(\rho^{*}_{h,c}))).
\end{eqnarray}
Thus, the resulting efficiency in the catalytic scenario is strictly greater than non-catalytic scenario as can be seen by employing Eq. \eqref{Fin:eff}: 
\begin{eqnarray}
    &&1-\frac{\Tr(H_c\left(\Tr_c(\rho^{*}_{h,c})\right))-\Tr(H_c\tau_c)}{\Tr(H_h\tau_h)- \Tr\big(H_h(\Tr_s\tilde{\Pi}(\rho_s\otimes\Tr_{h}(\rho^{*}_{h,c}))\tilde{\Pi}^{\dagger})\big)} \nonumber\\
    &>& 1-\frac{\Tr(H_c\left(\Tr_c(\rho^{*}_{h,c})\right))-\Tr(H_c\tau_c)}{\Tr(H_h\tau_h)- \Tr(H_h(\Tr_{h}(\rho^{*}_{h,c})))}.
\end{eqnarray}
A similar argument regarding catalytic enhancement of efficiency can be given when $\Tr_c(\rho^{*}_{h,c})$ is passive but not thermal.

 Let us make an interesting remark here. Consider $\rho^{*}_{h,c}$ that leads to the optimal efficiency for the two-stroke thermal machine in the non-catalytic scenario. Assume $\rho^{*}_{h,c}$ is either correlated or product such that the marginal state of the hot or cold $d$-level system is not thermal. Now, from the definition of work given in Eq. \eqref{Work_redefined}, we can write:
 \begin{eqnarray}
     W &=& Q_h+Q_c \nonumber\\&=&\Tr\big(H_h(\tau_h-\rho^{*}_h)\big)+\Tr\big(H_c(\tau_c-\rho^{*}_c))\nonumber\\
    &=& \Tr\big(H_h\tau_h+H_c\tau_c\big)-  \Tr\big(H_h\rho^{*}_h+H_c\rho^{*}_c\big)\label{Work_redfn_1},
 \end{eqnarray}
 where, 
 \begin{equation}\label{caWork}
     \rho^{*}_h = \Tr_{c}\rho^{*}_{h,c}\quad\text{and}\quad \rho^{*}_c = \Tr_{h}\rho^{*}_{h,c}.
 \end{equation}
We have argued that the average energy of \(\rho^{*}_{c}\) can be further reduced using a catalyst when \(\rho^{*}_{h,c}\) is correlated. Additionally, we have shown that if \(\rho^{*}_{h,c}\) is a product state where either the hot or cold \(d\)-level system is non-thermal, its average energy can also be further reduced with a catalyst. From Eq. \eqref{Work_redfn_1}, it is evident that this reduction can lead to simultaneous improvements in both work production and efficiency.

The only case which is left to be answered is when $\rho^{*}_{h,c}$ is a product of two Gibbs state. Note that, when the dimension of the hot and cold $d$-level system is equal to $2$, we have seen that 
$\rho^{*}_{h,c}$ became equals to $(\tau_c\otimes\tau_h)_{h,c}$ (One can easily check this by applying the optimal permutation given in Eq. \eqref{Pie} on the initial state $(\tau_h\otimes\tau_c)_{h,c}$). Now, the state $\tau_c$ can be thought as the Gibbs state of the hot two-level system at inverse temperature $\tilde{\beta}_h = \frac{\beta_c\omega_c}{\omega_h}$ with Hamiltonian $\omega_h\ketbra{1}{1}$. Similarly, the state $\tau_h$ can be thought as the Gibbs state of the cold two-level system at inverse temperature $\tilde{\beta}_c = \frac{\beta_h\omega_h}{\omega_c}$ with Hamiltonian $\omega_c\ketbra{1}{1}$. As we have shown, in that case we can always obtain catalytic enhancements in the efficiency. We believe that in the case when $\rho^{*}_{h,c}$ is a product of two Gibbs states of arbitrary dimension,  to obtain the catalytic enhancements in the efficiency one needs to figure out what is the optimal efficiency in the non-catalytic scenario first. As mentioned earlier, this is a challenging task due to the complex structure of the problem as it involves many parameters as well as size of the feasible set grows factorially. 

We summarize the results on the catalytic efficiency enhancement in engines in the following table \ref{tab:somelabel}: 

\begin{table}[htbp]
\centering
\begin{tabular}{|c|c|}
\hline
\textbf{Final state $\rho^{*}_{h,c}$ leading} & \textbf{Catalytic} \\
\textbf{ to optimal efficiency} & \textbf{Enhancement} \\
\textbf{in the non-catalytic scenario} & \textbf{} \\
\hline
Correlated & Yes \\
\hline
\hline
Product with either of the & Yes \\
$d$-level system is passive&  \\
but not Gibbs &  \\
\hline
\hline
Product with both of the & Yes \\
$d$-level system is Gibbs&  \\
when $d=2$ &  \\
\hline
\hline
Product with both of the & ? \\
$d$-level system is Gibbs&  \\
with $(d>2)$ &  \\
\hline
\hline
\end{tabular}
\caption{\label{tab:somelabel}}
\end{table}

We will explore now the another kind of catalytic enhancement which is about broadening of the operational regime of the two-stroke heat engine.

\subsection{Extending the regime of operations for the two-stroke thermal machine}\label{Extending_the_regime}

In this section, we shall explore the second kind of catalytic enhancement. We shall demonstrate that incorporating a catalyst with the working body of the engine enables the two-stroke heat engine to produce work in a regime of frequency and temperature where it is impossible for a non-catalytic two-stroke engine.   Let us proceed by defining the regime of operation of the engine. We shall restrict ourselves to microscopic two-stroke heat engine where the dimension of the hot and the cold $d$-level system is two. 
\begin{defn}[Regime of operation of the engine]
    Consider a catalyst assisted two-stroke engine with working body containing the hot and cold $d$-level system of dimension two. The regime of operation is characterized by the following four quantities:
    \begin{equation}
        (\omega_c,\omega_h,\beta_c,\beta_h),
    \end{equation}
    where $\omega_c$ and $\omega_h$ are the energy eigenvalue associated with the excited state of cold and hot $d$-level system Hamiltonian (assuming the ground state energy for both the Hamiltonian is zero), and $\beta_c$ and $\beta_h$ are the inverse temperature associated with the cold and the hot $d$-level system respectively. Here, $\omega_c$ and $\omega_h$ characterize the frequency regime, whereas $\beta_c$ and $\beta_h$ characterize the temperature regime.
\end{defn}
Note that there exists certain values for $(\omega_c,\omega_h,\beta_c,\beta_h)$ where the non-catalytic two-stroke engine fails to extract work. For instance, in the non-catalytic scenario when $\omega_c>\omega_h$, the engine fails to produce work because the initial state of the working body is passive (for the definition of passive state see Eq. \eqref{defn_passive_state}).
We shall show that incorporating a catalyst with the working body of the engine always enables to produce work at any possible value of $\omega_h$, and $\omega_c$ even if $\omega_c>\omega_h$.  Let us illustrate the idea via a numerical example for a two-stroke engine where the dimension of the hot and cold $d$-level system present in the working body is restricted to $2$.

\emph{Example: }Consider a two-stroke engine starts with a working body where the  cold two-level system  is thermalized at inverse temperature $\beta_c = 7$ and the  hot two-level system is thermalized at inverse temperature $\beta_h = 6$. We consider the Hamiltonian of the hot and cold two-level system as:
\begin{eqnarray}
    H_h &=& \omega_h|1\rangle\langle 1|,\quad H_c = \omega_c|1\rangle\langle 1| \nonumber\\&&\quad\text{with}\;\; \omega_h = 2,\;\omega_c =3.
\end{eqnarray}
In this regime of temperatures and frequencies, the two-stroke engine can not produce work since
\begin{equation}
    \frac{\omega_c}{\omega_h}=\frac{3}{2}>1,
\end{equation}
which makes the optimal efficiency (Otto efficiency) given in Eq. \eqref{Otto} negative. One can check this also by computing the initial state which turns out to be passive:
\begin{eqnarray}
    \tau_h\otimes\tau_c &=& \frac{1}{(1+a_h)(1+a_c)}(1,\;a_h,\;a_c,\;a_ca_h)\\
    &=& \frac{1}{(1+e^{-12})(1+e^{-21})}(1,\;e^{-12},\;e^{-21},\;e^{-33}),\nonumber
\end{eqnarray}
where $a_h:=e^{-\beta_h\omega_h}=e^{-12}$ and $a_c:=e^{-\beta_c\omega_c}=e^{-21}$. The passivity is reflected from the fact that $\omega_c>\omega_h$ whereas $a_c<a_h$. Therefore, one can not extract work from such an initial state via unitary operation. Now, we can make this engine to produce positive amount of work with the aid of a catalyst. We have 
\begin{equation}
    \frac{\omega_c}{\omega_h}=\frac{3}{2}<\frac{\beta_c\omega_c}{\beta_h\omega_h}=\frac{7}{4},
\end{equation}
thus, in order to extract positive amount of work via the simple permutation given in Eq. \eqref{Pigen2}, we need to choose the dimension of the catalyst $d$ and the parameter $n$ such that Eq. \eqref{Efficiency_range_for_positive_work} holds. We choose $d=5$ and $n=3$, then clearly
\begin{equation}
    \frac{\omega_c}{\omega_h}=\frac{3}{2}=1.5<\frac{d}{n}=\frac{5}{3}\simeq 1.67 <\frac{\beta_c\omega_c}{\beta_h\omega_h}=\frac{7}{4}=1.75.
\end{equation}
Employing the formula of efficiency from Eq. \eqref{efficiency_mn}, we can easily calculate the efficiency 
\begin{equation}
    \eta = 1-\frac{3\omega_c}{5\omega_h}=1-\frac{9}{10} = 0.1.
\end{equation}
As the efficiency $\eta$ is between $0$ and Carnot efficiency i.e $\eta$ satisfies Eq. \eqref{Efficiency_range_for_positive_work}, this ensures that engine will produce positive work. One can calculate produced amount of work by the engine via calculating $Q_h$ and $Q_c$ from Eq. \eqref{Heat_jth_subspace} and Eq. \eqref{Cold_Heat_jth_subspace}, respectively.

This example exhibits two very crucial features about the catalyst assisted two-stroke engine operating according to the simple permutation $\Pi_{\text{gen}}$.  First, the catalyst enables the engine to produce work in a regime of frequency and temperature, where it is impossible to work for a non-catalytic two-stroke engine. From the second law, we know that in order to extract work the efficiency $\eta$ should satisfy the inequality in Eq. \eqref{Carnot_bd} i.e., 
\begin{equation}
    0<\eta<1-\frac{\beta_h}{\beta_c}.
\end{equation}
Second, if we apply the simple permutation given in Eq. \eqref{Pi1} to extract work, it will not be possible because $n=1$ for this permutation, which implies
\begin{equation}
    \frac{d}{n} = d.
\end{equation}
Therefore, for any value of $d$ the Eq. \eqref{ineq:d_range2} can not be satisfied. In that case we need to suitably choose $d$ and $n$ in a way in the simple permutation Eq. \eqref{Pigen2} such that Eq. \eqref{ineq:d_range2} is satisfied.

In summary, whenever the initial state of the working body (i.e., catalyst, the hot and the cold $d$-level system) transforms via simple permutation given in Eq. \eqref{Pigen2} during the work stroke the resulting efficiency of the engine reduces to (as given in Eq. \eqref{efficiency_mn})
\begin{equation}
    1-\frac{n\omega_c}{d\omega_h},
\end{equation}
where $d$ is the dimension of the catalyst and $n$ is a parameter in the permutation $\Pi_{\text{gen}}$ in Eq. \eqref{Pigen2} (Recall that, $n$ is the number of swap represented by grey colored arrow in Fig. \ref{fig:mhotncold}). Now, one can choose $n$ and $d$ in the above expression of efficiency , such that the following inequality holds: 
\begin{equation}\label{ndrange}
    0<1-\frac{n\omega_c}{d\omega_h}<1-\frac{\beta_h}{\beta_c},
\end{equation}
which ensures that work produced by the engine will be positive. Importantly, inequality given in Eq. \eqref{ndrange} can be satisfied even if $\omega_c>\omega_h$ or $\omega_c=\omega_h$ upon suitably choosing $n$ and $d$.

Therefore, it implies that incorporating a catalyst with the working body can enable the two-stroke engine to function for any regime of frequency and temperature. We summarize the discussion in the following theorem:
\begin{thm}\label{theorem_imp2}
    Consider a catalyst assisted two stroke heat engine with working body containing the hot and cold $d$-level system of dimension two. It can produce work at any regime of frequency (i.e., for any value of $\omega_c$ and $\omega_h$) and temperature (i.e., for any value of $\beta_c$ and $\beta_h$) for suitably chosen dimension of the catalyst and the value of $n$ in the permutation $\Pi_{\text{gen}}$ defined in Eq. \eqref{Pigen2}.
\end{thm}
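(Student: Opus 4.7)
My plan is to reduce the claim to the achievability analysis of Theorem \ref{theorem_imp} by exhibiting, for arbitrary $(\omega_c,\omega_h,\beta_c,\beta_h)$ with $\beta_c > \beta_h$, a choice of catalyst dimension $d$, parameter $n$, and catalyst state $\rho_s$ for which the simple permutation $\Pi_{\text{gen}}$ of Eq. \eqref{Pigen2} yields strictly positive work. Because work, hot heat, and cold heat for this permutation are determined entirely by the common net population flow $\Delta P$ and the cycle structure (Eqs. \eqref{Heat_jth_subspace} and \eqref{Cold_Heat_jth_subspace}), it will suffice to (i) place the resulting efficiency $\eta = 1 - n\omega_c/(d\omega_h)$ strictly between $0$ and $\eta_{\text{Carnot}} = 1 - \beta_h/\beta_c$, and (ii) guarantee that a catalyst state consistent with the cyclicity condition \eqref{cyc_delta_P} can be chosen so that $\Delta P > 0$.

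For step (i), I would first observe that the two inequalities
\begin{equation}
0 < 1 - \frac{n\omega_c}{d\omega_h} < 1 - \frac{\beta_h}{\beta_c}
\end{equation}
are equivalent to
\begin{equation}
\frac{\omega_c}{\omega_h} < \frac{d}{n} < \frac{\beta_c\omega_c}{\beta_h\omega_h}.
\end{equation}
The assumption $\beta_c > \beta_h$ implies that the upper endpoint exceeds the lower one strictly, so the admissible interval for the rational $d/n$ has positive length. By density of $\mathbb{Q}$ in $\mathbb{R}$ I can choose positive integers $d > n \geq 1$ with $d/n$ in this interval (and in particular $d > \omega_c/\omega_h$, which already guarantees the existence of enough "hot arrows" in the construction of $\Pi_{\text{gen}}$). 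This fixes the geometry of the permutation and the nominal efficiency $\eta$.

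For step (ii), the harder part, I must show that with $(d,n)$ so chosen there exists a probability vector $(p_1,\dots,p_d)$ for the catalyst such that all the block-to-block net flows $\Delta P_i$ coincide with a strictly positive value $\Delta P$. Writing down the $d$ linear relations imposed by Eq. \eqref{cyc_delta_P} using the explicit image of $\Pi_{\text{gen}}$ yields a homogeneous linear system in the $p_i$ whose solution space contains the uniform distribution (which gives $\Delta P = 0$). The goal is to exhibit a non-uniform solution in the probability simplex with $\Delta P > 0$; this can be done by perturbing the uniform distribution along a direction that increases the $p_i$ lying in blocks feeding the hot-subspace exits and decreases the $p_i$ lying in blocks feeding the cold-subspace exits, a direction consistent with all $d$ cyclicity equations simultaneously because the constraint graph is a single cycle of vertices. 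Positivity is maintained by keeping the perturbation small.

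The main obstacle, as I see it, is precisely step (ii): ensuring that the cyclic system of balance equations admits a probability vector with strictly positive $\Delta P$. Once this is in place, $Q_h = d\Delta P\,\omega_h > 0$ by Eq. \eqref{Heat_jth_subspace}, and then
\begin{equation}
W = \eta\, Q_h = \left(1 - \frac{n\omega_c}{d\omega_h}\right) d\Delta P\,\omega_h > 0
\end{equation}
by construction, establishing the theorem. The argument puts no restriction on the ordering of $\omega_c$ and $\omega_h$, so in particular the passive regime $\omega_c \geq \omega_h$ that forbids operation in the non-catalytic case is covered, as illustrated by the worked example preceding the statement.
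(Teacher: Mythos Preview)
Your step (i) is exactly the paper's argument: the discussion preceding Theorem \ref{theorem_imp2} simply notes that the efficiency associated with $\Pi_{\text{gen}}$ is $1-n\omega_c/(d\omega_h)$ and that, because $\beta_c>\beta_h$ forces $\omega_c/\omega_h<\beta_c\omega_c/(\beta_h\omega_h)$, one can always pick integers $d,n$ with $d/n$ in that open interval, yielding $0<\eta<\eta_{\text{Carnot}}$. The paper then declares that this ``ensures that work produced by the engine will be positive.'' So on the high-level structure you and the paper agree.

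You are right that a further step is logically required: $0<\eta<\eta_{\text{Carnot}}$ by itself does not force $W>0$ unless one also knows $Q_h>0$, i.e., $\Delta P>0$. The paper defers this to the explicit solution in Section \ref{Work_extraction_section} and Appendix \ref{Work_prod}. Your attempt to supply it, however, contains a concrete error. The cyclicity conditions in Eq. \eqref{cyc_delta_P} together with normalization do \emph{not} admit a family of catalyst states that you can perturb; for fixed $(a_h,a_c,d,n)$ they determine the $p_i$ and $\Delta P$ uniquely (up to relabelling). In particular, the uniform distribution is \emph{not} a solution with $\Delta P=0$: for $p_i\equiv 1/d$ one gets $\Delta P_s=\mathcal N(a_h-1)/d$ on the ``hot'' arrows and $\Delta P_{m+t}=\mathcal N(a_h-a_c)/d$ on the ``cold'' arrows, which disagree unless $a_c=1$. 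So your perturbation picture does not get off the ground.

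The correct way to close the gap is the paper's explicit computation (Appendix \ref{Work_prod}), which gives $\Delta P=\mathcal N\,(a_h^{\,d}-a_c^{\,n})/f(a_h,a_c,m,n)$. The numerator is positive precisely when $d\beta_h\omega_h<n\beta_c\omega_c$, i.e., exactly when $d/n<\beta_c\omega_c/(\beta_h\omega_h)$, which is the sub-Carnot condition already imposed in step (i). Thus $\Delta P>0$, hence $Q_h=d\omega_h\Delta P>0$ and $W=\eta Q_h>0$, with no perturbation argument needed.
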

In Fig. \ref{fig:Enhancement_Range} we have schematically represent the result of theorem \ref{theorem_imp} and theorem \ref{theorem_imp2}. In Fig. \ref{fig:Regime_maximum} we plot the regime of operation for the two-stroke engine defined for two-dimensional hot and cold systems. We see that already catalysts of small dimensions, e.g, $2,\;3,\;4$, significantly extend the regime of the operation for the two-stroke engine. We see that a catalyst enables the engine to extract work in an arbitrary regime of temperatures, as $\frac{\omega_h}{\omega_c}>\frac{\beta_h\omega_h}{\beta_c\omega_c}$ is always satisfied.
\begin{figure*}[t]
    \centering
    \includegraphics[width=14 cm]{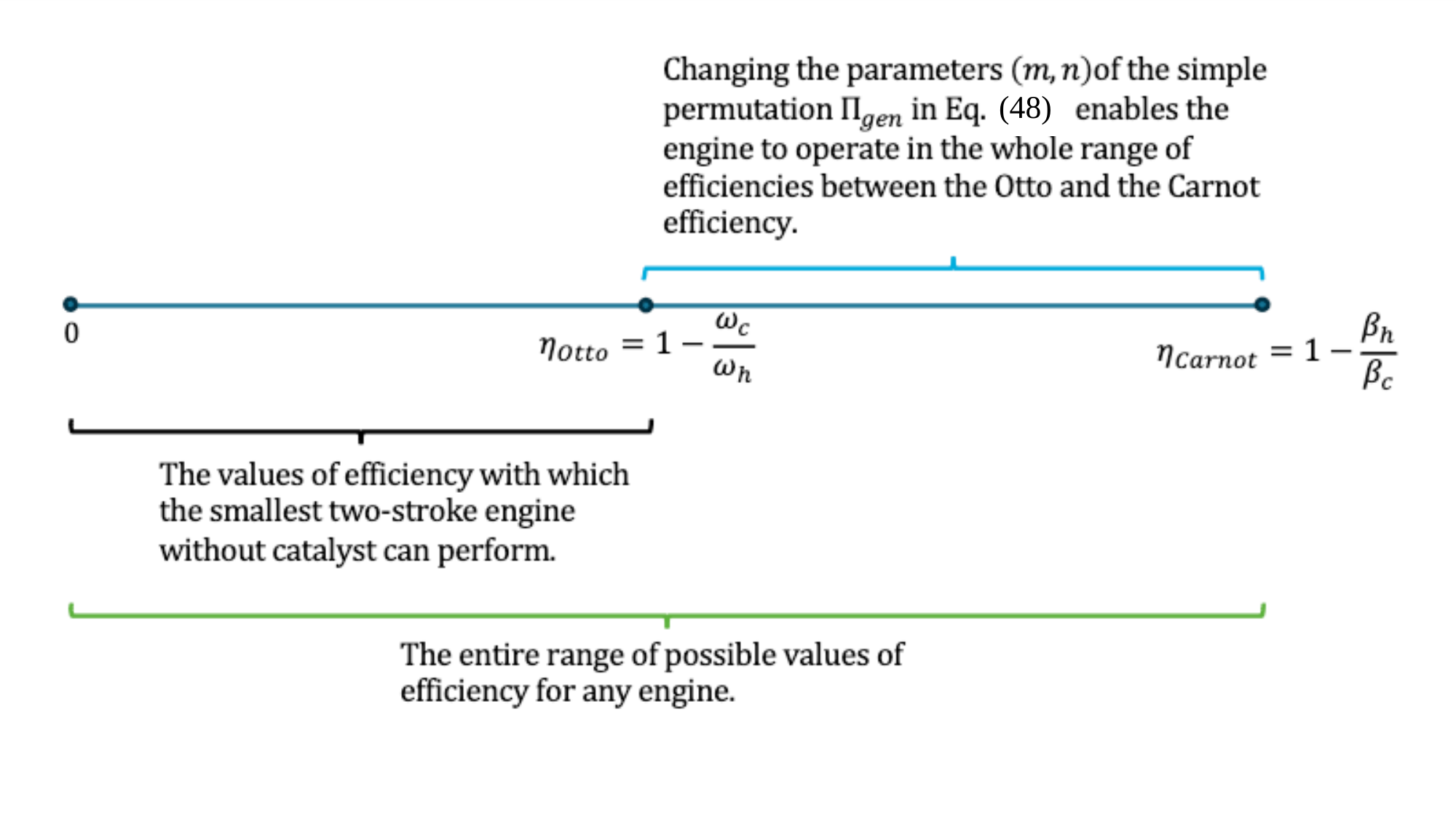}
    \caption{This figure illustrates results of Theorem \ref{theorem_imp} and Theorem \ref{theorem_imp2}. The efficiency $\eta$ for any two-stroke engine has to be in between $0$ and Carnot efficiency (as represented within the green curly brace). For a non-catalytic two-stroke engine where the dimension of the both hot and cold $d$-level system present in the working body is $2$ such that $\omega_h>\omega_c$, the value of the efficiency should be in between $0$ and the Otto efficiency as proved in Sec. \ref{qubit_otto} (as represented in the black curly brace). With suitably choosing the dimension of the catalyst $d$ and $n$ in the simple permutation $\Pi_{\text{gen}}$ in Eq. \eqref{Pigen2}, one can operate the catalyst assisted two-stroke engine with any efficiency between the Otto efficiency and the Carnot efficiency as described in Theorem \ref{theorem_imp} (as represented in the blue curly brace). On the other hand, for any frequency and temperature ratio $\omega_c/\omega_h$ and $\beta_h/\beta_c$, one can choose $d$ and $n$ in the simple permutation $\Pi_{\text{gen}}$ in Eq. \eqref{Pigen2} such that corresponding efficiency is between $0$ and the Carnot efficiency. Therefore, catalyst allows the two-stroke engine to operate with any physically plausible values of efficiency.} 
    \label{fig:Enhancement_Range}
\end{figure*}
\begin{figure*}[t]
    \centering
    \includegraphics[width= 14 cm]{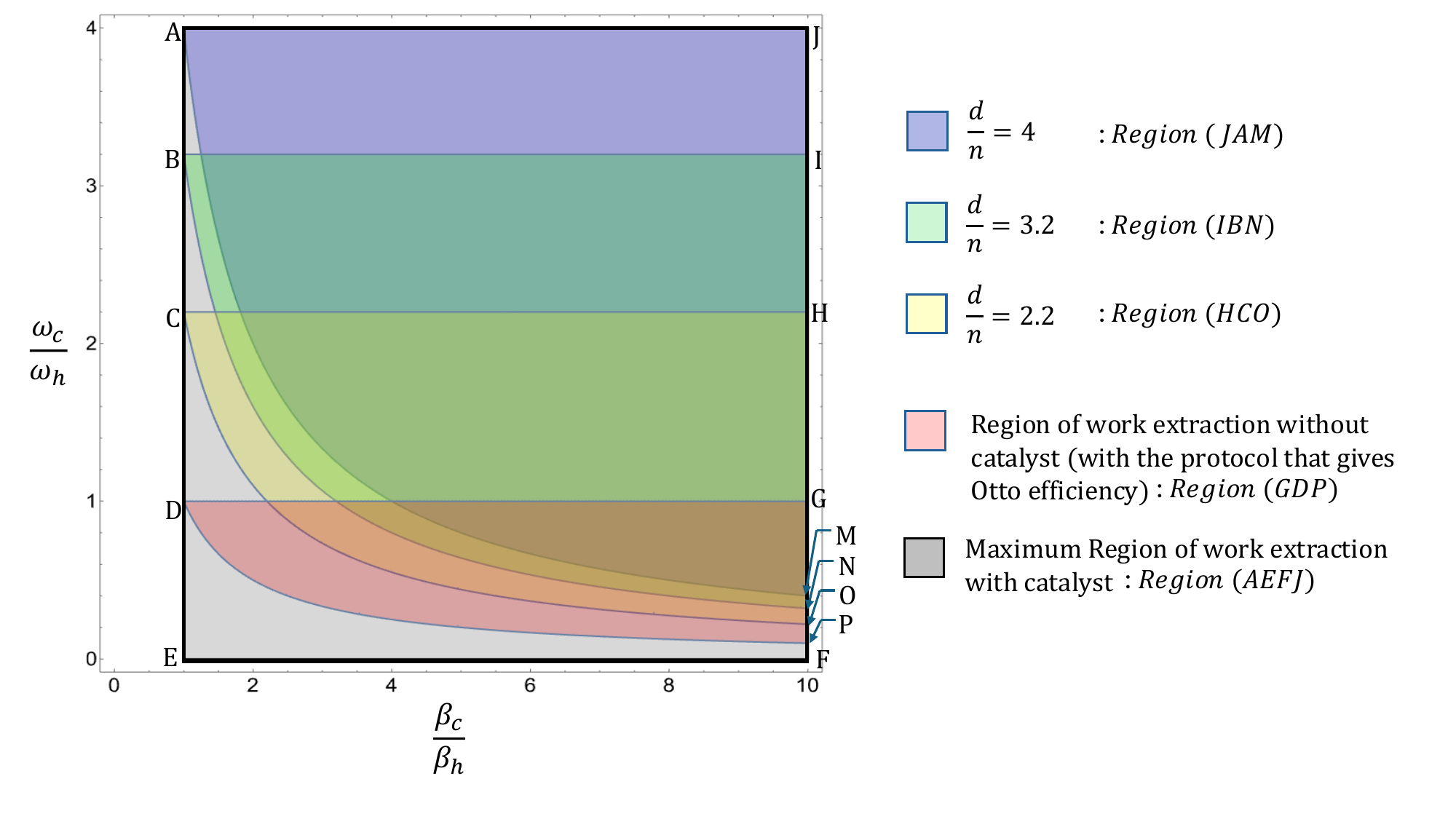}
    \caption{\label{fig:Regime_maximum} This figure depicts the different region where the engine can produce positive amount of work in presence and absence of catalyst in the working body. The region is parameterized by frequency ratio  of the hot and the cold $d$-level system i.e., $(\frac{\omega_c}{\omega_h})$ in the $y$-axis and the temperature ratio  of the hot and the cold $d$-level system i.e., $(\frac{\beta_c}{\beta_h})$ in the $x$-axis. The region characterized by efficiency of the engine $\eta$ satisfy the inequality $0<\eta<1-\frac{\beta_h}{\beta_c}$ shown in solid enclosed square (Region AEFJ). This is the maximum region where any engine can produce work. The region in light red (Region GDP) depicts the frequency and temperature regime where a two-stroke engine can produce work with the Otto efficiency $(i.e., 1-\frac{\omega_c}{\omega_h})$ in the non-catalytic scenario. This region is characterized by $\frac{\omega_c}{\omega_h}<1$ and $1-(\frac{\omega_c}{\omega_h})<1-\frac{\beta_c}{\beta_h}$. As a result region in light red is characterized by the condition: $\frac{\omega_c}{\omega_h}<1<\frac{\beta_c\omega_c}{\beta_h\omega_h}$.    If the working body of the engine transforms by simple permutation $\Pi_{\text{gen}}$ given in Eq. \eqref{Pigen2} during work stroke, one can obtain efficiency $\eta = 1-\frac{n\omega_c}{(m+n)\omega_h}= 1-\frac{n\omega_c}{d\omega_h}$ where $d = m+n$ is the dimension of the catalyst. By suitably choosing $m$ and $n$ in the simple permutation given in Eq. \eqref{Pigen2}, engine with any efficiency $\eta$ that satisfy the inequality $0<\eta<1-\frac{\beta_h}{\beta_c}$ can be constructed. Thus, absolute maximum region where any engine can produce work (characterized by the inequality $0<\eta<1-\frac{\beta_h}{\beta_c}$) is same with the maximum region where a two-stroke engine with working body containing a catalyst of suitably chosen dimension can produce work.  The regime for work extraction via simple permutation $\Pi_{\text{gen}}$ given in Eq. \eqref{Pigen2} for different values of $\frac{d}{n}$ is plotted for  $\frac{d}{n}=2.2$, $\frac{d}{n}=3.2$, and $\frac{d}{n}=4$ has been shown in this figure in yellow (Region HCO), green  (Region IBN), blue shaded region  (Region JAM). } 
\end{figure*}

 So far we have examined how introducing a catalyst into the working body can improve the efficiency and broaden the operational range of a two-stroke heat engine. However, we have not yet calculated the work produced by the engine. Next, we will calculate the work produced by a two-stroke heat engine with a working body consisting of two two-level systems, each thermalized at different temperatures, and a $d$-level catalyst. Our focus will be on a specific case where the working body undergoes a transformation through the simple permutation $\Pi_{\text{gen}}$ as defined in Eq. \eqref{Pigen2}.

\section{Work extraction}\label{Work_extraction_section}
In this section, we shall calculate the amount of work that can be extracted when the initial state of the working body is consists of catalyst of dimension $d$, and the hot and the cold $d$-level system is of dimension $2$, transformed via simple permutation $\Pi_{\text{gen}}$ given in Eq. \eqref{Pigen2} during the work stroke. Recall that
\begin{eqnarray}\label{Pigen}
\Pi_{\text{gen}}&=&\big(\sum_{i=1}^{m}|i+1,0,0\rangle\langle i,1,0|\nonumber\\&+& \sum_{j=m+1}^{m+n-1}|j+1,0,1\rangle\langle j,1,0|\\&+&|1,0,1\rangle\langle m+n,1,0|+\text{Herm. conjugate}\big)_{s,h,c}+\mathbb{I}_{\text{Rest}}\nonumber\;,
\end{eqnarray}
which is depicted in Fig. \ref{fig:mhotncold} where dimension of the catalyst 
\begin{equation}
    d=m+n,
\end{equation}
and, $\mathbb{I}_{\text{Rest}}$ is identity operator on the orthogonal complement of the first term in the parentheses.

The definition of work is given in Eq. \eqref{Work_redefined} $W=Q_h+Q_c$, where $Q_h$ is the heat consumed from the hot bath and $Q_c$ is the heat dumped into the cold bath. As we assume that the ground state energy of both the qubits is zero, $Q_h$ and $Q_c$ are simply given by the difference between the initial and final population in the excited hot and cold subspaces (see Eq. \eqref{Heat_jth_subspace} and Eq. \eqref{Cold_Heat_jth_subspace}). Following the result from Eq. \eqref{hot_mn_cold_mn}, we calculate $Q_h$ and $Q_c$ for the simple permutation $\Pi_{\text{gen}}$ as 
\begin{equation}\label{QcQh1}
    Q_h = (m+n)\omega_h\Delta P =d\omega_h\Delta P\quad,\quad Q_c = -n\omega_c\Delta P.
\end{equation}
Therefore, the amount of produced work by the engine is given by
\begin{equation}\label{eq:work_renewed1}
    W= \big((m+n)\omega_h-n\omega_c\big)\Delta P.
\end{equation}
In order to calculate the amount of work produced by the engine, we need to calculate $\Delta P$ in terms of $a_h$, $a_c$, $m$, and $n$. In the appendix \ref{Work_prod}, we provide the detailed calculation of the $\Delta P$ resulting in the following equality
\begin{equation}
    \Delta P = \frac{1}{f(a_h,a_c,m,n)}\N (a_h^{m+n}-a_c^n),
\end{equation}
where
\begin{equation}
    \N = \frac{1}{(1+a_h)(1+a_c)},
\end{equation}
and $f$ is a function of $a_h,\;a_c,\;m,\;n$ given as follows:
 
\begin{widetext}
\begin{equation}
    f(a_h,a_c,m,n):=\frac{a_h\left(1-a_c\right)^2\left\{\left(1-a_h^m\right)\left(a_h^n-a_c^n\right)\right\}+\left\{\left(a_h^{(m+n)}-a_c^n\right)\left(a_h-a_c\right)\left(1-a_h\right)\right\}\left\{n(1-a_h)-m(a_h-a_c)\right\}}{\left(a_h-a_c\right)^2\left(1-a_h\right)^2}.
\end{equation}
\end{widetext}

Therefore, the amount of the work produced by the engine is given by 
\begin{equation}\label{Eq:Produced_work}
    W = \frac{ \big((m+n)\omega_h-n\omega_c\big)}{f(a_h,a_c,m,n)}\N(a_h^{m+n}-a_c^n).
\end{equation}

In Fig. \ref{fig:with_without}, we plotted  the $W$ as a function of $n$ for $d=m+n=30$ for different values of $a_h$ and $a_c$,  observing that for certain values of $n$, amount of produced work by the two-stroke engine with a catalyst surpasses the work produced by the two-stroke engine without a catalyst. It also suggests that it is possible to achieve the simultaneous enhancements in the work extraction and efficiency by suitably choosing the values of $n$ and dimension of the catalyst $d$.

Let us now investigate two special instances of the work extraction by the simple permutation given in Eq. \eqref{Pigen}. Note that simple permutation $\Pi$ given in Eq. \eqref{Pi1} can be obtained by substituting $m=d-1$ and $n=1$ in the permutation $\Pi_{\text{gen}}$ given in Eq. \eqref{Pigen}. Thus, one can easily obtain the produced amount of work when the working body of the engine transformed via simple permutation $\Pi$ by substituting $m=d-1$ and $n=1$ in Eq. \eqref{Eq:Produced_work}. In this case, the work produced by the engine is given  
\begin{eqnarray}
    W_d := \frac{1}{f_d}(d \omega_h - \omega_c)\N(a_h^d-a_c),
\end{eqnarray}
where
\begin{equation}
    f_d := \frac{(1-a_h^d) (1-a_c)}{(1-a_h)^2} - \frac{d (a_h^d - a_c) }{(1-a_h)}.
\end{equation}

Next, let us consider the catalyst assisted two-stroke engine, with each component of the working body modelled by two-level systems (i.e., a hot two level system, a cold two level system, and a two-level catalyst). This forces us to set $m=1$ and $n=1$. The work produced by the engine boils down to
\begin{eqnarray}\label{W2}
  W_2:=\frac{(2\omega_h - \omega_c)}{1+a_c+2a_h}\N(a_h^2-a_c),
\end{eqnarray}
whereas the efficiency can be calculated from Eq. \eqref{efficiency_def} as
\begin{equation}\label{eta2}
    \eta_2:=1-\frac{\omega_c}{2\omega_h}.
\end{equation}

\begin{figure*}[t]
    \centering
    \includegraphics[width=14cm]{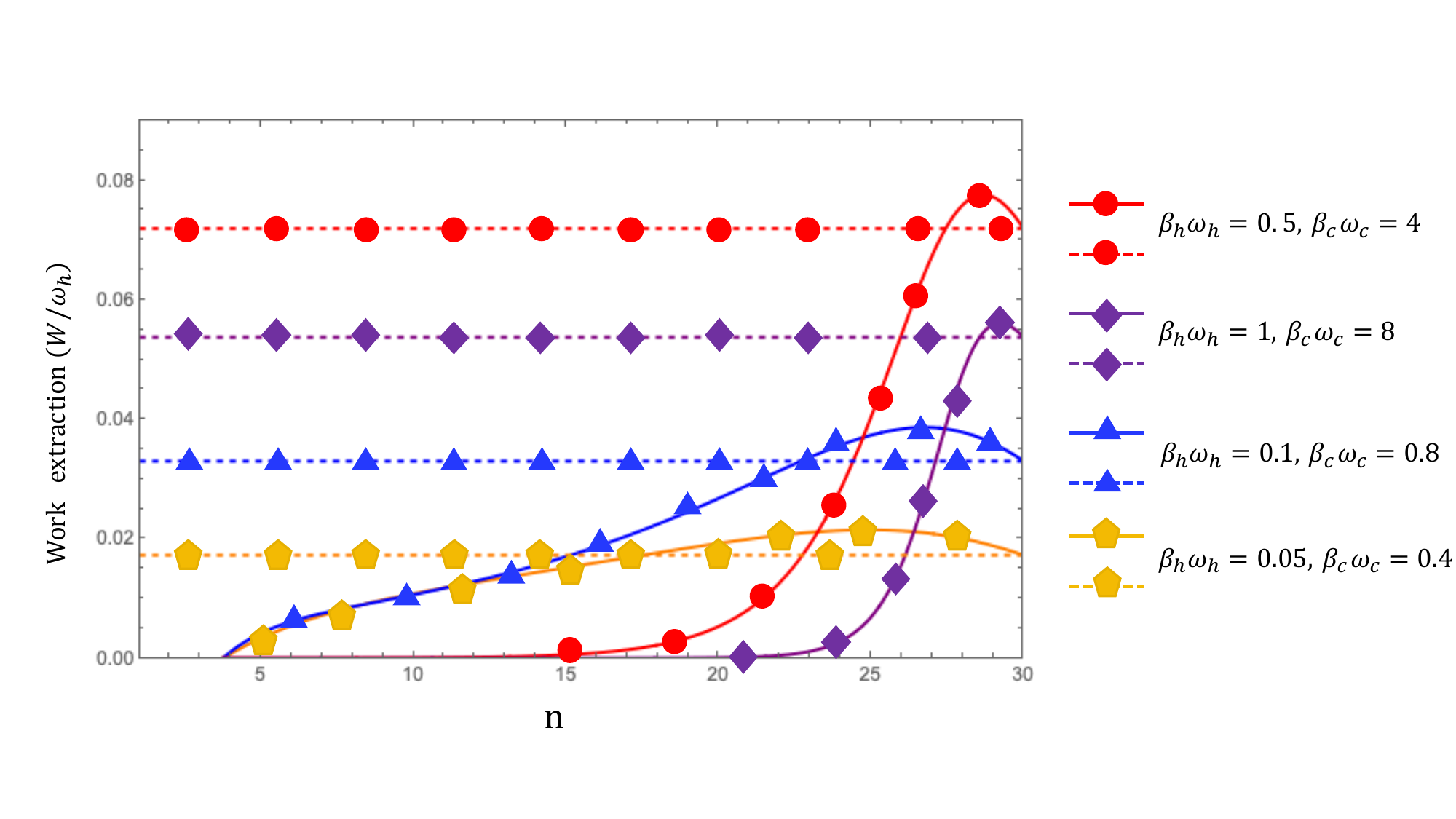}
    \caption{\label{fig:with_without} This figure depicts the variation of the work produced by the catalyst assisted two-stroke engine with $n$ as formulated in Eq. \eqref{Eq:Produced_work} for different values of $\beta_h\omega_h$ and $\beta_c\omega_c$ (represented by the solid curves with different data symbols)  with $\frac{\beta_c\omega_c}{\beta_h\omega_h}=8$. Here we have taken the dimension of catalyst is $30$. Therefore, $n>3.75$ we have positive work produced by the engine as it requires $\frac{d}{n}<\frac{\beta_c\omega_c}{\beta_h\omega_h}$ (See Eq. \eqref{ineq:d_range2}). The dashed lines with different data symbols represent the work produced by the two-stroke engine without a catalyst with the corresponding values of $\beta_h\omega_h$ and $\beta_c\omega_c$. One can see from the plot that work produced by the engine can be enhanced due to catalyst taking certain values of $n$ for each pair of values of $\beta_h\omega_h$ and $\beta_c\omega_c$. On the other hand, from left to right the efficiency of the catalyst assisted two-stroke engine is decreasing as the efficiency is given by the formula $1-\frac{n\omega_c}{d\omega_h}$, whereas the efficiency of the two-stroke engine without catalyst is given by the Otto efficiency. Hence, at rightmost point of the $x$-axis (i.e., at $n=30$) the efficiency of the catalyst assisted two stroke engine matches with the Otto efficiency whereas for $n<30$ the efficiency is strictly greater than Otto efficiency. Therefore, due to catalysis one can obtain a simultaneous enhancement on the work produced by the engine and the efficiency.} 
\end{figure*}

Our analysis centres here around a particular simple permutation $\Pi_{\text{gen}}$ for work extraction. 

  In the appendix \ref{Work_ext_LP}, we have shown that the problem for maximizing work extraction for any arbitrary unitary operation subjected to cyclicity condition can be caste as a convex programming problem. It involves optimization of probability vector achieved from a fixed one via unistochastic matrix, which gives rise to some non-linear constraints. This makes the problem hard to solve. We have relaxed the optimization problem by subjecting it to a bigger set of probability vector achieved from a fixed one via bistochastic matrix. This relaxation reduces the optimization problem to a linear programming problem. We formulate the primal and dual linear programs, with the solution of the latter providing an upper bound on the work produced by the catalyst-assisted qubit two-stroke engines.   The solution of the problem depends on the explicit form of the initial state which is out of scope for this paper. 

\section{Conclusion}

In our study, we present protocols aimed at boosting efficiency of two-stroke heat engines through the assistance of a catalyst. Our approach involves devising of a comprehensive thermodynamic framework for a two-stroke thermal machine operating with and without a catalyst. We focus on the two-stroke heat engine whose working body consists of two two-level systems thermalized at two distinct temperatures. Without a catalyst, the optimal efficiency for such a two-stroke heat engine aligns with the Otto efficiency. In the accompanying paper in Ref. \cite{BiswasLobejko}, a protocol was introduced to surpass the optimal Otto efficiency by incorporating a catalyst into the two-stroke engine. In particular, the accompanying work in Ref. \cite{BiswasLobejko} outlines a protocol resulting in an efficiency of $1-\frac{\omega_c}{d\omega_h}$ for a two-stroke engine when a catalyst is present. 

Here we identify a specific set of permutations, called simple permutations, where the efficiency expressed by $1-\frac{\omega_c}{d\omega_h}$ stands as the optimal value if $\frac{\omega_c}{\omega_h}<d<\frac{\beta_c\omega_c}{\beta_h\omega_h}$. We have constructed a simple permutation $\Pi_{\text{gen}}$ given in Eq. \eqref{Pigen2} (depicted in Fig. \ref{fig:mhotncold}) which allow us to operate the two-stroke engine having a working body consists of two-two level system and a catalyst at any efficiency between $0$ and Carnot efficiency by suitably choosing parameter $n$ and the dimension of the catalyst $d$. Additionally, this permutation enables the engine to operate in any frequency and the temperature regime. Moreover, within a subset of these simple permutations, we conducted an analysis of the trade-off between the work produced by the engine per cycle and the efficiency. Despite an intricate structure of the feasible set for work optimization, we are able to present an upper bound on the work produced by a catalyst-assisted two-stroke engine using linear programming methodologies. On the other hand, due to nonlinearity of efficiency, its optimization becomes more challanging. Finally, we identify scenarios where catalytic enhancements reliably guarantee efficiency enhancements in a two-stroke heat engine.

These results lead to numerous questions: \emph{Firstly, does incorporating a catalyst with the working body always allow to surpass the optimal efficiency of the two-stroke engines without a catalyst?} In this work we have shown that catalytic enhancement in the efficiency of the two-stroke heat engine is always possible if the optimal efficiency is achieved in the non-catalytic scenario for the final state having correlations between the hot and cold $d$-level system. Additionally, we have established that catalytic enhancements in efficiency are achievable even when the optimal efficiency in the non-catalytic scenario is obtained for the final state of the hot and cold $d$-level system is a product of two states, as long as at least one of them is not Gibbs.  Finally, our results show that when the dimension of the hot and cold systems is two, the catalyst always brings the efficiency advantage.  Whether this holds true for an arbitrary dimension is still unknown. We conjecture that catalytic enhancement is always possible.

\emph{Secondly, does incorporating a catalyst lead to enhancement in the performance of other thermodynamic devices and tasks?} Using methodology developed for the two-stroke heat engines, one can address the problem of enhancing the efficiency of the cooler and the  heat accelerators  \cite{Henao2021catalytic, HenaoUzdin}.   In Ref. \cite{Kuba_and_Alex}, it has been shown that one can enlarge the set of achievable states via Markovian thermal process using catalyst of infinite dimension. One could explore the role small-sized catalyst in this scenario to characterize the set achievable achievable states via Markovian thermal process assisted with catalyst of small dimension. 

\emph{Finally, we are interested in bridging the gap between the two-stroke heat engines and self-contained heat engines.} In this paper, our focus has been on stroke-based heat engines, where work extraction and rethermalization occur in distinct, discrete time intervals. It raises the question: whether the enhancement in the efficiency for the two-stroke heat engines is also valid for self-contained heat engines, where thermalization and work extraction happen simultaneously and continuously? We anticipate that there will be a correspondence between the two-stroke discrete heat engine and the self-contained heat engine. In particular, one can expect to show the catalytic enhancement in the efficiency for self-contained thermal machine as well. Note that, the analysis of the self-contained heat engine allows us to compute power or rate of work produced by the engine in time, whereas the proposed two-stroke heat engine provides average work per cycle. After establishing the correspondence between the two-stroke discrete heat engine and the self-contained heat engine rigorously, one can ask: what is the minimum time required by the self-contained heat engine to complete a cycle? We hope that the connection between the discrete two-stroke heat engine and the continuous self-contained heat engine may provide  insights into the experimental realization of two-stroke heat engines. In Ref. \cite{Yu19}, an experimental implementation of the self-contained refrigerator within the framework of cavity quantum electrodynamics was proposed.   Furthermore, in  a more recent work detailed in Ref. \cite{NYHSG}, it has been demonstrated that autonomous absorptive refrigerator can be used for a qubit resetting .   Precisely in this setup, one can add an additional superconducting qubit, acting as a catalyst, which with appropriately designed coupling can enhance the performance of this process. As far as this is implementation of the same theoretical idea, it goes a bit beyond the scope of our paper since it is an autonomous setup.  Hence, we believe that designing a two-stroke heat engine, with or without a catalyst, can also be achieved within the frameworks of cavity QED and superconducting circuits. 

\section*{Acknowledgement}
T.B acknowledges Pharnam Bakhshinezhad for insightful discussions and comments during a visit in quantum information and thermodynamics group at TU, Vienna.  The part of the work done at Los Alamos National Laboratory (LANL) was carried out under the auspices of the U.S. Department
of Energy and National Nuclear Security Administration under Contract No.~DEAC52-06NA25396. TB also acknowledges support by the Department of Energy Office of Science, Office of Advanced Scientific Computing Research, Accelerated Research for Quantum Computing program, Fundamental Algorithmic Research for Quantum Computing (FAR-QC) project. M.Ł, P.M and T.B acknowledge support from the Foundation for Polish Science through International Research Agendas Programme project co-financed by EU within the Smart Growth Operational Programme (contract no.2018/MAB/5). M.H. acknkowledges the support by the Polish National Science Centre grant OPUS-21 (No: 2021/41/B/ST2/03207). M.H. is also partially supported by the QuantERA II Programme, grant No. 2021/03/Y/ST2/00178, ExTRaQT (Experiment and  Theory of Resources in Quantum Technologies), under Grant Agreement No. 101017733, that has received funding from the European Union’s Horizon 2020.

\appendix
\section{Thermodynamical framework}\label{framework_thermodynamic}

In this section of the appendix, we shall develop the thermodynamic framework for the generic two-stroke thermal machines. By selecting the suitable unitary that transforms the working body in the work stroke, the thermal machine can function as an engine, cooler, or heat accelerator. We shall derive the Clausius inequality  from the definition of work given in Eq. \eqref{defn_of_work2}, definition of heat given in Eq. \eqref{defn:hot_heat}, and cold heat given in Eq. \eqref{defn:cold_heat}.
This will allow us to show how the efficiencies of different modes of a thermal machine are interrelated. Let us proceed by describing different modes of thermal machines:
\subsection{Modes of operation for the two-stroke thermal machines}\label{modes}
We can classify the two-stroke thermal machines into three distinct modes based on the work and heat that is associated with its transformation \cite{Campisi1,Campisi2,Campisi3,Campisi4}. 
\begin{enumerate}\label{modes_of_engine}
    \item \emph{Heat engines}: A two-stroke thermal machine work as a heat engine if the work associated with the transformation acting on the working body is positive, i.e., $W>0$. This means the thermal machine is producing the work using the temperature difference between the hot and the cold $d$-level system.
    \item \emph{Coolers}: On the other hand, a two-stroke thermal machine is classified as a cooler if the work associated with the transformation acting on the working body  is negative i.e., $W < 0$ whereas the cold heat $Q_c > 0$. This means that thermal machine utilizes the work in order to cool the target cold $d$-level system by releasing heat into a cold bath. 
    
    \item \emph{  Heat accelerators }: Lastly, the two-stroke thermal machine operates as a   heat accelerators  when the work associated with it is non-positive, indicated as $W \leq 0$, while the heat consumed by the thermal machine is positive, denoted as $Q_h \geq 0$.
\end{enumerate}

\subsection{Derivation of the second law for the two-stroke thermal machines}\label{Thermo_framework}

In this section, we will formulate the second law of thermodynamics for the two-stroke thermal machines. In particular, we shall show that a two-stroke thermal machine operating as a heat engine must withdraw a positive amount of heat from the hot bath in order to produce work. Furthermore, we will establish the efficiency of a two-stroke heat engine is upper bounded by the  Carnot efficiency. On the other hand, when the two-stroke thermal machine operates as a cooler, we will prove that the efficiency is, in turn, lower bounded by the Carnot efficiency. 

We begin by recalling that the initial state of the working body of the two-stroke thermal machine as $\rho^{i}_{s,h,c}=\rho_s\otimes\tau_h\otimes\tau_c$ that transforms via a unitary transformation $U$, that preserves the state of the catalyst as stated definition \ref{principles}. Now, we introduce a class of transformation on the working body of engine called  entropy non-decreasing transformation. This concept will be crucial to establish the thermodynamic framework for the two-stroke thermal machine. 
\begin{defn}
    [\textbf{Entropy non-decreasing transformation}] An entropy non-decreasing transformation $\mathcal{U}$ is a complete positive trace preserving (CPTP) map that acts on the initial state of the working body (i.e., catalyst, the hot and the cold $d$-level system) $\rho_{s,h,c}$, such that 
    \begin{equation}
        S(\rho_{s,h,c})\leq S(\mathcal{U}(\rho_{s,h,c})),
    \end{equation}
    where $S(\cdot)$ denotes the von Neumann entropy which is defined as 
    \begin{equation}\label{vNentropy}
        S(\cdot)=-\Tr((\cdot)\ln(\cdot)).
    \end{equation}
\end{defn}
Clearly, any unitary operation is an entropy non-decreasing transformation as it preserves the von-Neumann entropy. Moreover, the set of states that can be achieved from a fixed state $\rho$ via entropy non-decreasing transformations is convex. This can be seen as follows:
\begin{multline}
    S(\lambda\mathcal{U}_1(\rho)+(1-\lambda)\mathcal{U}_2(\rho))\geq \lambda S(\mathcal{U}_1(\rho))+(1-\lambda)S(\mathcal{U}_2(\rho))\\
    \geq\lambda S(\rho)+(1-\lambda)S(\rho) = S(\rho).
\end{multline}
The main motivation for introducing these class of operations is to consider a generic set of transformations on the Hilbert space of the working body (i.e., catalyst, hot and cold $d$-level system) that encompasses any arbitrary unitary transformations on them. 

In the next section, we shall see that optimization of the efficiency as well as work produced by the two-stroke thermal engines boils down to the set of all states that can be achieved via permutations from the initial state of the working body. Furthermore, employing the convexity of the achievable states via the entropy non-decreasing transformation allows us to write the extraction of work as a linear program.

Now, we shall show the second law holds even if the initial state of working body $\rho_s\otimes\tau_h\otimes\tau_c$ transforms via entropy non-decreasing transformation, which naturally gives the second law if the initial state of the working body transforms via a unitary.

\begin{lem} [Clausius inequality]\label{second_law_lemma}
For any  two-stroke thermal machine with initial state of the working body (i.e., catalyst, the hot and the cold $d$-level system)  $\rho^{i}_{s,h,c}$ that transforms via an entropy non-decreasing transformation to the final state $\rho^{f}_{s,h,c}$ should satisfy the following inequality:
\begin{eqnarray} \label{second_law_ineq}
    \beta_hQ_h+\beta_cQ_c\leq 0.
\end{eqnarray}
\end{lem}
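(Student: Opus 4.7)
The plan is to reduce the Clausius inequality to two standard ingredients: the non-decrease of the total von Neumann entropy (which holds by hypothesis on the transformation) and the non-negativity of the quantum relative entropies of the final marginals with respect to the Gibbs states $\tau_{h}$ and $\tau_{c}$.

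First, I would exploit the product structure of the initial state to write $S(\rho^{i}_{s,h,c})=S(\rho_{s})+S(\tau_{h})+S(\tau_{c})$, and on the final state apply subadditivity to get $S(\rho^{f}_{s,h,c})\le S(\rho^{f}_{s})+S(\rho^{f}_{h})+S(\rho^{f}_{c})$. Combined with the catalyst-preservation condition \eqref{eq:marginal_cyclic}, which is part of the definition of a two-stroke thermal machine and gives $S(\rho^{f}_{s})=S(\rho_{s})$, and with the entropy-non-decreasing hypothesis $S(\rho^{f}_{s,h,c})\ge S(\rho^{i}_{s,h,c})$, the catalyst contribution cancels and I obtain the intermediate inequality
\begin{equation*}
\bigl[S(\tau_{h})-S(\rho^{f}_{h})\bigr]+\bigl[S(\tau_{c})-S(\rho^{f}_{c})\bigr]\le 0.
\end{equation*}

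Next, I would invoke the identity $S(\rho\,\|\,\tau_{i})=-S(\rho)+\beta_{i}\Tr(H_{i}\rho)+\ln Z_{i}\ge 0$ for $i\in\{h,c\}$, which follows directly from $\tau_{i}=e^{-\beta_{i}H_{i}}/Z_{i}$. Since $S(\tau_{i})=\beta_{i}\Tr(H_{i}\tau_{i})+\ln Z_{i}$ exactly, this rearranges to
\begin{equation*}
S(\tau_{i})-S(\rho^{f}_{i})\;\ge\;\beta_{i}\Tr\bigl(H_{i}(\tau_{i}-\rho^{f}_{i})\bigr)\;=\;\beta_{i}Q_{i},
\end{equation*}
where the last equality uses the definitions \eqref{defn:hot_heat} and \eqref{defn:cold_heat}. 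Substituting both bounds into the intermediate inequality yields $\beta_{h}Q_{h}+\beta_{c}Q_{c}\le 0$, which is the claim.

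No serious technical obstacle arises. The main point of care is bookkeeping: the catalyst-preservation condition must be applied precisely at the step where the $s$-subsystem entropies cancel, so that no assumption on the catalyst Hamiltonian $H_{s}$ enters the argument. This is in keeping with the paper's earlier observation that the energy stored in the catalyst plays no thermodynamic role. For a non-catalytic machine the $s$ subsystem is simply omitted throughout and the same four-line computation applies verbatim.
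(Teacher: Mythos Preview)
Your proof is correct, but it takes a different decomposition from the paper's. The paper works with a single global relative entropy $D(\rho^{f}_{s,h,c}\,\|\,\rho^{i}_{s,h,c})\ge 0$, expands $\log\rho^{i}_{s,h,c}=\log\rho_{s}+\log\tau_{h}+\log\tau_{c}$ using the product structure, and uses catalyst preservation to kill the $\log\rho_{s}$ term directly; the entropy-non-decrease hypothesis is inserted as a single slack $\delta\ge 0$. You instead split the argument into two layers: first subadditivity on the final state combined with entropy non-decrease gives the entropy balance on the $h$ and $c$ marginals, then two separate local relative entropies $S(\rho^{f}_{i}\,\|\,\tau_{i})\ge 0$ convert those entropy differences into the heats $\beta_{i}Q_{i}$. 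Both routes use the same ingredients (Gibbs form of $\tau_{i}$, catalyst preservation, entropy non-decrease, non-negativity of relative entropy); the paper's is slightly more compact, while yours makes the role of each marginal and the connection to free-energy decrease more explicit. One minor advantage of your approach is that it never needs $\log\rho_{s}$, so no implicit full-rank assumption on the catalyst state creeps in.
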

\begin{proof} The proof of the lemma follows from the non-negativity of relative entropy i.e 
\begin{eqnarray}
    &&D(\rho^{f}_{s,h,c}\| \rho^{i}_{s,h,c}):=-S(\rho^{f}_{s,h,c}) -\Tr(\rho^{f}_{s,h,c}\log\rho^{i}_{s,h,c})\geq 0\nonumber\\
    &\Rightarrow& -S(\rho^{i}_{s,h,c}) -\Tr(\rho^{f}_{s,h,c}\log\rho^{i}_{s,h,c})-\delta \geq 0\nonumber \\
    &\Rightarrow& \Tr\big((\rho^{i}_{s,h,c}-\rho^{f}_{s,h,c})\log\rho^{i}_{s,h,c}\big)-\delta \geq 0\nonumber\\
    &\Rightarrow& \Tr\big((\rho^{i}_{s,h,c}-\rho^{f}_{s,h,c})(\log\rho_s+\log\tau_h+\log\tau_c)\big)-\delta \geq 0\nonumber\\&\Rightarrow& \Tr\big((\rho^{i}_{s,h,c}-\rho^{f}_{s,h,c})(-\beta_hH_h-\beta_cH_c)\big)-\delta \geq 0\nonumber\\
    &\Rightarrow& -(\beta_hQ_h+\beta_cQ_c)-\delta \geq 0\nonumber\\
    &\Rightarrow& 0 \geq-\delta \geq (\beta_hQ_h+\beta_cQ_c).\nonumber
\end{eqnarray}
Here, to write the first implication we use the definition of the entropy non-decreasing transformation with $\delta\geq 0$, to write the second implication we use simply the definition of von-Neumann entropy, to write the third implication we uses the fact that the initial state of the working body $\rho^i_{s,h,c}=\rho_s\otimes\tau_h\otimes\tau_c$. Now, in order to write the fourth implication, we use the fact that marginal state of the catalyst in the initial and the final state are same, and the definition of $\tau_h$ and $\tau_c$ from Eq. \eqref{Gibbs_defn}.
\end{proof}

Note that, the second law inequality given in Eq. \eqref{second_law_ineq} holds even if we consider a two-stroke thermal machines having working body that does not contain any catalyst. This can be seen easily from the derivation. Next, using the Clausius inequality from lemma \ref{second_law_lemma}, we shall prove that the transferred amount of heat from the hot bath is always positive for the two-stroke thermal machine operating as an engine.

\begin{prop}[Positivity of heat transfer in the two-stroke heat engines]\label{Lemma_Work_positiv(1-g_H)eat_Positive}
    Any two-stroke thermal machine operating as an engine with initial state of the working body (i.e., the catalyst, the hot and the cold $d$-level system) $\rho^{i}_{s,h,c}$ evolves via an entropy non-decreasing transformation to the final state $\rho^{f}_{s,h,c}$, always leads to a positive amount of heat transfer from the hot heat bath i.e.,
\begin{equation}\label{Work_pos_heat_pos}
    W> 0 \Rightarrow Q_h>0.
\end{equation}
\end{prop}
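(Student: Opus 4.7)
The plan is to prove the contrapositive, or equivalently argue by contradiction: assume $Q_h \le 0$ and derive $W \le 0$, which contradicts the hypothesis that the machine operates as an engine. The two ingredients needed are already in place: the first law $W = Q_h + Q_c$ (Eq.~\eqref{Work_redefined}), and the Clausius inequality $\beta_h Q_h + \beta_c Q_c \le 0$ from Lemma~\ref{second_law_lemma}. The only other fact I would use is the standing assumption $0 < \beta_h < \beta_c$.

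Suppose for contradiction that $Q_h \le 0$. Rearranging the Clausius inequality gives
\begin{equation}
    Q_c \le -\frac{\beta_h}{\beta_c} Q_h.
\end{equation}
Adding $Q_h$ to both sides and invoking the first law yields
\begin{equation}
    W = Q_h + Q_c \le Q_h\left(1 - \frac{\beta_h}{\beta_c}\right).
\end{equation}
Since $\beta_c > \beta_h > 0$, the factor $1 - \beta_h/\beta_c$ is strictly positive, and $Q_h \le 0$ by assumption, so the right-hand side is $\le 0$. This contradicts $W > 0$, hence $Q_h > 0$, as required.

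The argument is essentially a one-liner once the first and second laws are in hand, so I do not anticipate any real obstacle. The only subtlety worth flagging is that the statement establishes a \emph{strict} inequality $Q_h > 0$ from a \emph{strict} $W > 0$: this is why it matters that $1 - \beta_h/\beta_c$ is strictly positive (not merely non-negative) and that the Clausius inequality combined with $Q_h \le 0$ forces $W \le 0$ rather than $W < 0$. A completely analogous argument would simultaneously show that $Q_c < 0$ in the engine regime, which is the other half of the consistency check used implicitly in defining the efficiency $\eta = 1 - |Q_c|/Q_h$ in Eq.~\eqref{efficiency_def}.
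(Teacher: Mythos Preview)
Your proof is correct and essentially identical to the paper's: both combine the first law $W=Q_h+Q_c$ with the rearranged Clausius inequality $Q_c\le -(\beta_h/\beta_c)Q_h$ to obtain $W\le Q_h(1-\beta_h/\beta_c)$, and then use strict positivity of $1-\beta_h/\beta_c$ to conclude. The paper phrases this as a direct chain of implications while you frame it as a contradiction, but the underlying argument is the same.
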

\begin{proof} 
This proof directly follows from second law inequality given in Eq. \eqref{second_law_ineq}, and the fact that work associated with thermal machine operated as an engine is always positive as defined in Sec. \ref{modes}. The second law inequality in Eq. \eqref{second_law_ineq}  can be rewritten as 
\begin{equation}\label{second_law_v2}
   -\frac{\beta_h}{\beta_c}Q_h \geq Q_c. 
\end{equation}
On the other hand, positivity of work implies 
\begin{equation}\label{above}
    W = Q_h+Q_c > 0 \Rightarrow Q_h(1-\frac{\beta_h}{\beta_c}) > 0 \Rightarrow Q_h>0,
\end{equation}
where we use the definition of work from Eq. \eqref{Work_redefined}, the inequality from Eq. \eqref{second_law_v2} and the fact $\beta_c>\beta_h$, to draw the implication in Eq. \eqref{above}.
\end{proof}
So, from this proposition, we conclude that if the two-stroke thermal machines operate as an engine, then it should withdraw a positive amount of heat from the hot heat bath to produce work. Finally, employing lemma \ref{second_law_lemma} and theorem \ref{Lemma_Work_positiv(1-g_H)eat_Positive}, we shall establish the ordering among the efficiency between different modes of the two-stroke thermal machines.
\begin{prop}[Ordering of the efficiency between different modes of thermal machines]\label{interplay}
    The following statements holds true for any two-stroke thermal machines operating with the efficiency $\eta$:
    \begin{enumerate}
        \item $0 <\eta \leq\left(1-\frac{\beta_h}{\beta_c}\right)$  if and only if the two-stroke thermal machines operates as an engine.
        \item $\eta \geq \left(1-\frac{\beta_h}{\beta_c}\right)$  if and only if the two-stroke thermal machines operates as a cooler.
        \item $\eta \leq 0$ if and only if the two-stroke thermal machines operates as a   heat accelerators .
    \end{enumerate}
\end{prop}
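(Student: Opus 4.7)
The plan is to deduce all three biconditionals by sign-tracking from three inputs already established in the appendix: the first law $W = Q_h + Q_c$ (Eq.~\eqref{Work_redefined}), the Clausius inequality $\beta_h Q_h + \beta_c Q_c \le 0$ (Lemma~\ref{second_law_lemma}), and the implication $W > 0 \Rightarrow Q_h > 0$ (Proposition~\ref{Lemma_Work_positiv(1-g_H)eat_Positive}). The key algebraic move throughout is to divide the Clausius inequality by $\beta_c Q_h$ and convert it into a bound on $\eta = 1 + Q_c/Q_h$, keeping track of whether the division flips the inequality.

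For the engine case, the forward direction is direct: $W > 0$ gives $Q_h > 0$ by Proposition~\ref{Lemma_Work_positiv(1-g_H)eat_Positive}, hence $\eta > 0$, and dividing Clausius by the positive quantity $\beta_c Q_h$ yields $Q_c/Q_h \le -\beta_h/\beta_c$, so $\eta \le 1 - \beta_h/\beta_c$. For the converse, $0 < \eta \le 1 - \beta_h/\beta_c$ forces $W$ and $Q_h$ to share a common sign; the case in which both are negative, pushed through Clausius, saturates only on the Carnot boundary where $W = Q_h(1-\beta_h/\beta_c)$ collapses to the reversible limit shared with the cooler, so in the interior of the range we genuinely have engine operation.

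For the cooler, $W < 0$ and $Q_c > 0$ combined with the first law give $Q_h = W - Q_c < 0$; dividing Clausius by the now-negative $\beta_c Q_h$ flips the inequality and gives $Q_c/Q_h \ge -\beta_h/\beta_c$, hence $\eta \ge 1 - \beta_h/\beta_c$. The converse mirrors the engine argument: $\eta \ge 1 - \beta_h/\beta_c$ together with the first law and Clausius is compatible only with the cooler sign pattern $(W<0,\,Q_c>0)$ outside the shared Carnot boundary. For the heat accelerator, $W \le 0$ and $Q_h \ge 0$ immediately yield $\eta \le 0$; conversely $\eta \le 0$ forces opposite signs of $W$ and $Q_h$ (or $W=0$), and among these the Clausius inequality together with the first law singles out precisely $(W \le 0,\,Q_h \ge 0)$.

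The main obstacle is the boundary bookkeeping. The three regimes meet on the Carnot line $\eta = 1 - \beta_h/\beta_c$ (where Clausius saturates and $W$ vanishes in the reversible limit) and on the line $\eta = 0$; at these lines the mode classification is inherently non-unique. I would handle this by matching the non-strict inequalities already present in the statement of the proposition, assigning the boundary to either adjacent regime rather than trying to enforce a unique mode there, which is physically consistent with the reversible Carnot point being the natural limit shared between engine and cooler operation.
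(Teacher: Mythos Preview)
Your forward-direction arguments (mode $\Rightarrow$ efficiency range) coincide exactly with the paper's proof, which in fact establishes only those directions: engine $\Rightarrow 0<\eta\le 1-\beta_h/\beta_c$ via Proposition~\ref{Lemma_Work_positiv(1-g_H)eat_Positive} and division of Clausius by $\beta_c Q_h>0$; cooler $\Rightarrow \eta\ge 1-\beta_h/\beta_c$ by noting $Q_h<0$ and dividing Clausius with the inequality flipped; heat accelerator $\Rightarrow \eta\le 0$ directly from the sign conventions. The paper does not attempt the converses, and what it actually uses downstream is only the ordering in Eq.~\eqref{crucial_ordering_of_efficiency}, for which the forward implications suffice.

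Your attempt at the converses is where a genuine gap appears, specifically in the cooler case. The claim that $\eta\ge 1-\beta_h/\beta_c$ is ``compatible only with the cooler sign pattern $(W<0,\,Q_c>0)$ outside the shared Carnot boundary'' is false. Take any transformation that raises the local energy of both subsystems, for instance the swap $|00\rangle\leftrightarrow|11\rangle$ (row 24 of Table~\ref{tab:my_table_efficiency_24}): there $Q_h<0$, $Q_c<0$, $W<0$, and $\eta=1+\omega_c/\omega_h>1>1-\beta_h/\beta_c$. This is not a cooler (since $Q_c<0$), not a heat accelerator (since $Q_h<0$), and not an engine; it lies outside all three modes. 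More generally, whenever $Q_h<0$ the Clausius division yields $\eta\ge 1-\beta_h/\beta_c$, but $Q_c=(\eta-1)Q_h$ is positive only when $\eta<1$; for $\eta>1$ one always lands in this ``both heated'' regime rather than cooling. Hence the biconditional in statement~2 does not hold as written. Your engine and heat-accelerator converses are essentially sound (modulo the boundary overlap you correctly flag), but no amount of sign-tracking will rescue the cooler converse, because the three modes as defined are not exhaustive.
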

\begin{proof}
    From theorem \ref{Lemma_Work_positiv(1-g_H)eat_Positive}, we see that for two-stroke thermal machine operated as an engine, the transferred amount of heat from the hot bath $Q_h$ is positive which proves the efficiency is positive. On the other hand, from the second law inequality given in Eq. \eqref{second_law_v2}, and using the fact $Q_h>0$ we can derive $-\frac{\beta_h}{\beta_c}\geq \frac{Q_c}{Q_h}$. This implies $\eta = 1+\frac{Q_c}{Q_h}\leq 1-\frac{\beta_h}{\beta_c}$.

    On the other hand, for the two stroke thermal machine operated in the cooling mode have $Q_h<0$ (see the Sec. \ref{modes}). Therefore, the second law inequality given in Eq. \eqref{second_law_v2} can be reduced to $-\frac{\beta_h}{\beta_c}\leq \frac{Q_c}{Q_h}$ which implies $\eta=1+\frac{Q_c}{Q_h}\geq 1-\frac{\beta_h}{\beta_c}$.

    Finally, for the two stroke thermal machine operated as an   heat accelerators  have $W<0$ and $Q_h>0$ as defined in Sec. \ref{modes} which makes its efficiency $\eta\leq 0$. 
\end{proof}

Thus, the proposition \ref{interplay} sets the ordering the efficiency of the different modes of the thermal machine starting with the fixed  initial state of the working body $\rho_{s,h,c}^{i}=\rho_s\otimes\tau_h\otimes\tau_c$ in the following manner:
\begin{equation}\label{crucial_ordering_of_efficiency}
    \eta^{\text{max}}_{\text{  heat accelerators }} \leq \eta^{\text{min}}_{\text{heat engine}} \leq \eta^{\text{max}}_{\text{heat engine}} \leq \eta^{\text{min}}_{\text{cooler}},
\end{equation}
where $\eta^{\text{max}}_{x}$ and $\eta^{\text{min}}_{x}$ denotes the maximum and minimum efficiency when the two-stroke thermal machine operates in the mode $x$. 

  It is important to emphasize that Eq. \eqref{crucial_ordering_of_efficiency} only holds for fixed initial state of the thermal machine. Hence, we can not use this inequality directly to compare in a two-stroke thermal machine with catalyst where the initial state of the catalyst varies with the transformation acting on the initial state of the working body of the thermal machine. This results changing of the initial state of thermal machine with the choice of the unitary that is implemented on the working body, as the state of the catalyst depends on the unitary that has been implemented (In other words, state of the catalyst comes from the solution of Eq. \eqref{eq:marginal_cyclic}).  

Since proposition \ref{Lemma_Work_positiv(1-g_H)eat_Positive} and proposition \ref{interplay} both stems from the Clausius inequality established in lemma \ref{second_law_lemma} that holds true for the two-stroke thermal machines without a catalyst, these results in theorem \ref{Lemma_Work_positiv(1-g_H)eat_Positive} and corollary \ref{interplay} also hold true for such machines. As the initial state of the working body for the two-stroke thermal machine in the absence of the catalyst is $\rho^i_{h,c}=\tau_h\otimes\tau_c$ which does not depends on the transformation, the inequality given in Eq. \eqref{crucial_ordering_of_efficiency} applies directly   (Because there is no catalyst in the working body. The state of the catalyst in the working body depends on the unitary that has been implemented in the work stroke, whereas state of the hot and the cold $d$-level system does not depend on the unitary that is going to be implemented during the work stroke) . We shall use this inequality later to optimize the efficiency of two-stroke heat engines. 

\section{Proof of theorem \ref{thm_optimal_efficiency}}\label{Permutation_optimal_eff}
\begin{proof}
    Starting from Eq. \eqref{defn_of_work2}, we can write work produced by a two-stroke heat engine without catalyst as $W= \Tr\big((H_h+H_c)(\rho^i_{h,c}-U\rho^i_{h,c}U^{\dagger})\big)$. It is straightforward to see that $W$ is a linear function of $\rho^i_{h,c}$ and does not depend on the off-diagonal terms of the state $U\rho^{i}_{h,c}U^{\dagger}$. Thus we can write work produced by this engine as
    \begin{equation}\label{work_t4}
        W = \Tr\Big((H_h+H_c)\big(\rho_{h,c}^i-\mathcal{D}(U\rho_{h,c}^iU^{\dagger})\big)\Big),
    \end{equation}
    where $\mathcal{D}(\cdot)$ denotes the dephasing in the eigenbasis of the Hamiltonian $H_h+H_c$. According to Schur-Horn theorem, $|\mathcal{D}(U\rho_{h,c}^iU^{\dagger})\dket = B|\rho_{h,c}^i\dket$ where $|\cdot\dket$ denotes the spectrum of $(\cdot)$, and $B$ is some bistochastic matrix. Therefore, it is straight-forward to rewrite the expression of work $W$ in Eq. \eqref{work_t4} as 
    \begin{eqnarray}\label{work_in_proof_sketch}
        &&W=\dbra H_h+H_c|\rho_{h,c}^i\dket-\dbra H_h+H_c|B|\rho_{h,c}^i\dket.
    \end{eqnarray}
From Birkhoff-vonNeumann theorem we know that one can decompose any bistochastic matrix as a convex sum of permutations i.e $B=\sum_i \alpha_i\Pi_i$. Therefore, we can have the following:
\begin{eqnarray}
        W&=& \sum_{k}\alpha_k\Big(\dbra H_h+H_c|\rho_{h,c}^i\dket-\dbra H_h+H_c|\Pi_k|\rho^i_{h,c}\dket\Big)\nonumber\\
        &\leq& \dbra H_h+H_c|\rho_{h,c}^i\dket-\dbra H_h+H_c|\Pi^{*}|\rho^i_{h,c}\dket,
    \end{eqnarray}
where $\Pi^{*}$ is the permutation such that
\begin{equation}
   \min_k \dbra H_h+H_c|\Pi_k|\rho^i_{h,c}\dket = \dbra H_h+H_c|\Pi^{*}|\rho^i_{h,c}\dket := W_{\Pi^{*}}.
\end{equation}
Next, in order to optimize the efficiency we proceed by writing efficiency as follows:
\begin{eqnarray}\label{convex_sum_eff}
    \eta &=&\sum_{k}\frac{\alpha_k Q_{\Pi_k}}{Q_h}\frac{W_{\Pi_k}}{Q_{\Pi_k}}=\sum_{k}\frac{\alpha_k Q_{\Pi_k}}{Q_h}\eta_{\Pi_k},\label{eq:eff_decomposition2}
\end{eqnarray}
where 
\begin{eqnarray}
    W_{\Pi_k} &=& (\dbra H_h+H_c | \rho^{i}_{h,c}\dket - \dbra H_h+H_c |  \Pi_k|\rho^{i}_{h,c}\dket),
    \\Q_{\Pi_k} &=& (\dbra H_h | \rho^{i}_{h,c}\dket - \dbra H_h |  \Pi_k|\rho^{i}_{h,c}\dket),\\
    Q_h &=& (\dbra H_h | \rho^{i}_{h,c}\dket - \dbra H_h | B|\rho^{i}_{h,c}\dket) = \sum_k \alpha_k Q_{\Pi_k}.\label{Heat_in_proof_sketch}
\end{eqnarray}
Note that, Eq. \eqref{convex_sum_eff} is not a convex sum because there may exist some $k$ for which $Q_{\Pi_k}$ is negative. On the other hand $Q_h>0$ due to proposition \ref{Lemma_Work_positiv(1-g_H)eat_Positive}. Furthermore, employing proposition \ref{Lemma_Work_positiv(1-g_H)eat_Positive} again we infer that $Q_{\Pi_j}>0$ whenever $W_{\Pi_j}>0$, which implies any permutation appeared in Eq. \eqref{convex_sum_eff} must belong to either of the following sets
\begin{eqnarray}\label{sets}
    \mathcal{J}_E &:=& \{j \quad\text{such that}\quad W_{\Pi_j} > 0\},\\
    \mathcal{J}_C &:=& \{j \quad\text{such that}\quad W_{\Pi_j} < 0\;\text{and}\; Q_{\Pi_j}<0\},\\
    \mathcal{J}_H &:=& \{j \quad\text{such that}\quad W_{\Pi_j} < 0\;\text{and}\; Q_{\Pi_j}\geq0\}.
\end{eqnarray}
This simply means for $j\in\mathcal{J}_E$ the permutation $\Pi_j$ corresponds to engine mode,  for $j\in\mathcal{J}_C$ the permutation $\Pi_j$ corresponds to cooling mode, and for $j\in\mathcal{J}_H$ the permutation $\Pi_j$ corresponds to   heat accelerators  mode. Hence, the expression of efficiency given in Eq. \eqref{convex_sum_eff} reduces to
\begin{equation}\label{eff_intermediate}
    \eta = \sum_{j\in \mathcal{J}_E}\frac{\alpha_j Q_{\Pi_j}}{Q_h}\eta_{\Pi_j}+\sum_{j\in\mathcal{J}_C}\frac{\alpha_j Q_{\Pi_j}}{Q_h}\eta_{\Pi_j}+\sum_{j\in\mathcal{J}_H}\frac{\alpha_j Q_{\Pi_j}}{Q_h}\eta_{\Pi_j}.
\end{equation}
Now let us define 
\begin{eqnarray}
    \eta^{\text{max}}_{\mathcal{J}_x}:=\max_{\Pi_j\in \mathcal{J}_x}\eta_{\Pi_j};\quad \eta^{\text{min}}_{\mathcal{J}_x}:=\min_{\Pi_j\in \mathcal{J}_x}\eta_{\Pi_j} \nonumber 
\end{eqnarray}
where $x \in \{E,H,C\}$. Thus we can write the following inequality
\begin{eqnarray}\label{ineq_app_intermediate}
    &&\eta\leq \sum_{j\in \mathcal{J}_E}\frac{\alpha_j Q_{\Pi_j}}{Q_h}\eta^{\text{max}}_{\mathcal{J}_E}+\sum_{j\in\mathcal{J}_C}\frac{\alpha_j Q_{\Pi_j}}{Q_h}\eta^{\text{min}}_{\mathcal{J}_C}\nonumber\\&+&\sum_{j\in\mathcal{J}_H}\frac{\alpha_j Q_{\Pi_j}}{Q_h}\eta^{\text{max}}_{\mathcal{J}_H},
\end{eqnarray}
where we use the fact $Q_{\Pi_j}>0$ for all $j\in \mathcal{J}_E$ and for all $j\in \mathcal{J}_H$ whereas $Q_{\Pi_j}<0$ for all $j\in \mathcal{J}_C$ as defined in Eq. \eqref{sets}. Due to inequality at Eq. \eqref{crucial_ordering_of_efficiency} which is true for any unitaries that transforms the state of the thermal machine during work stroke, we can write
\begin{eqnarray}
    \eta^{\text{max}}_{\mathcal{J}_H}\leq \eta^{\text{max}}_{\text{  heat accelerators }}&\leq&\eta^{\text{min}}_{\text{heat engine}}\\&\leq& \eta^{\text{max}}_{\mathcal{J}_E}\leq\eta^{\text{min}}_{\text{cooler}}\leq\eta^{\text{min}}_{\mathcal{J}_C}\nonumber,
\end{eqnarray}
which allows us to further bound the inequality given in Eq. \eqref{ineq_app_intermediate}
\begin{eqnarray}
    &&\eta\leq \sum_{j\in \mathcal{J}_E}\frac{\alpha_j Q_{\Pi_j}}{Q_h}\eta^{\text{max}}_{\mathcal{J}_E}+\sum_{j\in\mathcal{J}_C}\frac{\alpha_j Q_{\Pi_j}}{Q_h}\eta^{\text{min}}_{\mathcal{J}_C}\nonumber\\&+&\sum_{j\in\mathcal{J}_H}\frac{\alpha_j Q_{\Pi_j}}{Q_h}\eta^{\text{max}}_{\mathcal{J}_H}\nonumber\leq \eta^{\text{max}}_{\mathcal{J}_E}\Big(\sum_{j}\frac{\lambda_j Q_{\Pi_j}}{Q_h}\Big)\nonumber\\&=& \eta^{\text{max}}_{\mathcal{J}_E},
\end{eqnarray}
which completes the proof.

\end{proof}
Thus, from the proof of the theorem \ref{thm_optimal_efficiency}, we see that optimal efficiency for a two-stroke heat engine is achieved for some permutation. In other words, if the initial state of the working body transformed via some unitary $\tilde{U}$ such that $|\tilde{\mathcal{D}}(\tilde{U}\rho^i_{h,c}\tilde{U}^{\dagger})\dket = \tilde{B}|\rho^i_{h,c}\dket$ where $\tilde{B}$ is some non-trivial mixture of permutations, then $\tilde{U}$ can not lead to optimal production of work or the efficiency. 
\section{ Performance of the non-catalytic two-stroke heat engine  where the dimension of the hot and cold $d$ level systems present in the working body is two. }
\label{SmallestHE}
\begin{table*}[ht]
\centering
\begin{tabular}{|*{5}{c|}}
\hline
 & \textbf{Permutation matrices} &  \textbf{Work $(W)$} &  \textbf{Efficiency $\left(\eta\right)$} \\
\hline
\hline
1 & $\mathbb{I}=\ketbra{00}{00}+\ketbra{01}{01}+\ketbra{10}{10}+\ketbra{11}{11}$  & $0$  & $0$ \\\hline
\cblue 2    &\cblue $\Pi=\ketbra{00}{00}+\ketbra{01}{10}+\ketbra{10}{01}+\ketbra{11}{11}$    & $\cblue \N\left(a_h-a_c\right)(\omega_h-\omega_c)$    & \cblue $1-\frac{\omega_c}{\omega_h}$   \\
\hline
 3 &$\ketbra{00}{00} + \ketbra{01}{11} + \ketbra{10}{01} + \ketbra{11}{10}$  &  $-\N\left[\omega_c\left(a_h-a_c\right)+a_c\omega_h(1-a_h)\right]$  &  $1+\frac{(a_h-a_c)\omega_c}{a_c(1-a_h)\omega_h}$ \\
\hline
\cblue4   & \cblue $\ketbra{00}{00} + \ketbra{01}{10} + \ketbra{10}{11} + \ketbra{11}{01}$    & \cblue $\N[\omega_h(a_h-a_c)-\omega_c(a_h-a_ha_c)]$    & \cblue $1-\frac{(a_h-a_ca_h)\omega_c}{(a_h-a_c)\omega_h}$    \\
\hline
5 &$\ketbra{00}{00} + \ketbra{01}{11} + \ketbra{10}{10} + \ketbra{11}{01}$  & $-\N \omega_ha_c(1-a_h)$  & $1$ \\
\hline
6 &$\ketbra{00}{00} + \ketbra{01}{01} + \ketbra{10}{11} + \ketbra{11}{10}$  & $-\N\left(1-a_c\right)a_h\omega_c$  & $-\infty$ \\
\hline
7 &$\ketbra{00}{01} + \ketbra{01}{00} + \ketbra{10}{10} + \ketbra{11}{11}$  & $-\N\left(1-a_c\right)\omega_c$  & $-\infty$ \\
\hline
8 &$\ketbra{00}{01} + \ketbra{01}{00} + \ketbra{10}{11} + \ketbra{11}{10}$  & $-\N\left(1-a_c\right)\left(1+a_h\right)\omega_c$  & $-\infty$ \\
\hline
9 & $\ketbra{00}{01} + \ketbra{01}{10} + \ketbra{10}{00} + \ketbra{11}{11}$   &  $-\N[\omega_h(1-a_h)+\omega_c(a_h-a_c)]$  &  $1+\frac{(a_h-a_c)\omega_c}{(1-a_h)\omega_h}$  \\
\hline
 10  &  $\ketbra{00}{01} + \ketbra{01}{11} + \ketbra{10}{00} + \ketbra{11}{10}$  &  $-\N[\left(a_h-a_c\right)\omega_c+(1-a_ca_h)\omega_h]$    &  $1+\frac{
(a_h-a_c)\omega_c}{(1-a_ca_h)\omega_h}$  \\
\hline
11 &$\ketbra{00}{01} + \ketbra{01}{10} + \ketbra{10}{11} + \ketbra{11}{00}$  & $-\N[(1-a_c)(1+a_h)\omega_c+(1-a_h)\omega_h]$  & $1+\frac{(1-a_c)(1+a_h)\omega_c}{(1-a_h)\omega_h}$ \\
\hline
12 &$\ketbra{00}{01} + \ketbra{01}{11} + \ketbra{10}{10} + \ketbra{11}{00}$ & $-\N[\omega_c(1-a_c)+\omega_h(1-a_ca_h)]$  & $1+\frac{\omega_c(1-a_c)}{\omega_h(1-a_ca_h)}$ \\
\hline
\cblue 13    & \cblue $\ketbra{00}{10} + \ketbra{01}{00} + \ketbra{10}{01} + \ketbra{11}{11}$    & $ \cblue \N[\omega_h(a_h-a_c)-\omega_c(1-a_c)]$    & \cblue $1-\frac{\omega_c(1-a_c)}{\omega_h(a_h-a_c)}$    \\
\hline
14 &$\ketbra{00}{10} + \ketbra{01}{11} + \ketbra{10}{01} + \ketbra{11}{00}$  & $-\N[(1-a_c)\omega_c+(1+a_c)(1-a_h)\omega_h]$  & $1+\frac{(1-a_c)\omega_c}{(1+a_c)(1-a_h)\omega_h}$ \\
\hline
\cblue15   & \cblue $\ketbra{00}{10} + \ketbra{01}{00} + \ketbra{10}{11} + \ketbra{11}{01}$    & \cblue $\N[(a_h-a_c)\omega_h-(1-a_ca_h)\omega_c]$    & \cblue $1-\frac{\omega_c(1-a_ca_h)}{\omega_h(a_h-a_c)}$    \\
\hline
16 &$\ketbra{00}{10} + \ketbra{01}{11} + \ketbra{10}{00} + \ketbra{11}{01}$  & $-\N(1+a_c)(1-a_h)\omega_h$  & $1$ \\
\hline
17 &$\ketbra{00}{10} + \ketbra{01}{01} + \ketbra{10}{00} + \ketbra{11}{11}$  & $-\N(1-a_h)\omega_h$  & $1$ \\
\hline
18 &$\ketbra{00}{10} + \ketbra{01}{01} + \ketbra{10}{11} + \ketbra{11}{00}$  & $-\N [\omega_c(1-a_ca_h)+\omega_h(1-a_h)]$  & $1+\frac{\omega_c(1-a_ca_h)}{\omega_h(1-a_h)}$ \\
\hline
19 &$\ketbra{00}{11} + \ketbra{01}{00} + \ketbra{10}{01} + \ketbra{11}{10}$  & $-\N[(1-a_c)(1+a_h)\omega_c+a_c(1-a_h)\omega_h]$  & $1+\frac{(1-a_c)(1+a_h)\omega_c}{(1-a_h)a_c\omega_h}$ \\
\hline
20 &$\ketbra{00}{11} + \ketbra{01}{10} + \ketbra{10}{01} + \ketbra{11}{00}$  & $\N[(1+a_h)(1-a_c)\omega_c+(1+a_c)(1-a_h)\omega_h]$  & $1+\frac{(1-ac)(1+a_h)\omega_c}{(1+a_c)(1-a_h)\omega_h}$ \\
\hline
21 &$\ketbra{00}{11} + \ketbra{01}{00} + \ketbra{10}{10} + \ketbra{11}{01}$  & $-\N[(1-a_ca_h)\omega_c+a_c(1-a_h)\omega_h]$  & $1+\frac{\omega_c(1-a_ca_h)}{\omega_h(a_c-a_ca_h)}$ \\
\hline
22 &$\ketbra{00}{11} + \ketbra{01}{01} + \ketbra{10}{00} + \ketbra{11}{10}$  & $-\N[(1-a_c)a_h\omega_c+\omega_h(1-a_ca_h)]$  & $1+\frac{(1-a_c)a_h\omega_c}{(1-a_ca_h)\omega_h}$ \\
\hline
23 &$\ketbra{00}{11} + \ketbra{01}{10} + \ketbra{10}{00} + \ketbra{11}{01}$  & $-\N[(1-a_c)a_h\omega_c+(1+a_c)(1-a_h)\omega_h]$  & $1+\frac{(1-a_c)a_h\omega_c}{(1+a_c)(1-a_h)\omega_h}$ \\
\hline
24 &$\ketbra{00}{11} + \ketbra{01}{01} + \ketbra{10}{10} + \ketbra{11}{00}$  & $-\N\left(1-a_ca_h\right)(\omega_c+\omega_h)$  & $1+\frac{\omega_c}{\omega_h}$ \\
\hline
\end{tabular}
\caption{The work produced and the efficiency of the engine when the working body composed of two two-level systems thermalized at two different temperatures transformed by permutations. We can see from the table that, among all $24$ permutations, only four of the permutations (i.e permutation $2$, $4$, $13$, $15$ coloured in blue) results a positive amount of work. Among these four permutations, the maximum efficiency given by the Otto efficiency is achieved for the permutation $\Pi$ (2nd permutation in the list). Here $a_h=e^{-\beta_h\omega_h},\;a_c = e^{-\beta_c\omega_c}\;\; \text{and}\;\;\N = \frac{1}{(1+a_h)(1+a_c)}$.}
\label{tab:my_table_efficiency_24}
\end{table*}
 In this section, we shall analyze the performance of the two-stroke heat engine where dimension of the hot and cold $d$ level systems present in the working body is two. The hot and cold $d$-level system are in Gibbs state at inverse temperature $\beta_h$ and $\beta_c$, respectively. The Hamiltonian associated with the hot and the cold $d$-level system is given by $H_h = \omega_h|1\rangle\langle 1|$ and $H_c = \omega_c|1\rangle\langle 1|$.  
Therefore, the spectrum of the initial state of the working body is given by
\begin{eqnarray}\label{in_state}
    |\rho^i_{h,c}\dket &=& \frac{1}{1+a_h}(1\;\; a_h)^T\otimes\frac{1}{1+a_c}(1\;\; a_c)^T, \nonumber\\
    &=& \frac{1}{(1+a_h)(1+a_c)}(1\;\; a_c\;\; a_h\;\; a_ca_h)^T,
\end{eqnarray}
where $a_h = e^{-\beta_h\omega_h}$ and $a_c = e^{-\beta_c\omega_c}$. 
The total Hamiltonian is given as 
$H = H_h\otimes \mathbb{I}+\mathbb{I}\otimes H_c$, thus the spectrum of the Hamiltonian $H$ can be written as
\begin{eqnarray}
    &&|H\dket = (0\;\;\omega_c\;\;\omega_h\;\;\omega_c+\omega_h)^T;\nonumber\\&&\quad|H_h\dket = (0\;\;0\;\;\omega_h\;\;\omega_h)^T.
\end{eqnarray}
From the initial state given in Eq. \eqref{in_state}, we see that only four permutation $\Pi$ leads to positive value for extraction of work that is given in table \ref{tab:my_table_efficiency_24}. Among these four permutations that lead to positive work extraction, the permutation
\begin{equation}\label{Pie_at_app}
    \Pi = |0,0\rangle\langle 0,0|+ |0,1\rangle\langle 1,0|+|1,0\rangle\langle 0,1|+|1,1\rangle\langle 1,1|,
\end{equation}
results the maximum efficiency of the engine which is given by the Otto efficiency. This results the following final state,
\begin{eqnarray}
|\rho^{i}_{h,c}\dket&=&\frac{1}{(1+a_h)(1+a_c)}(1\;\; a_c\;\; a_h\;\; a_ca_h)^T, \\&\xrightarrow{\Pi}& \frac{1}{(1+a_h)(1+a_c)}(1\;\; a_h\;\; a_c\;\; a_ca_h)^T := |\rho^{f}_{h,c}\rangle.\nonumber
\end{eqnarray}
Therefore, the maximum work produced by the engine is given by

\begin{eqnarray}
     W &=& \Tr\Big(H(\rho^{i}_{h,c}-\rho^{f}_{h,c})\Big) =  \dbra H| \rho^{i}_{h,c}\dket -\dbra H| \rho^{f}_{h,c}\dket \nonumber\\&=& \frac{1}{(1+a_h)(1+a_c)} (a_h-a_c)(\omega_h-\omega_c).
\end{eqnarray}
Similarly, the amount of heat withdrawn from the hot bath can be calculated as
\begin{eqnarray}
     Q_h &=& \Tr\Big(H_h\otimes\mathbb{I}(\rho^{i}_{h,c}-\rho^{f}_{h,c})\Big)= \dbra H_h| \rho^{i}_{h,c}\dket -\dbra H_h| \rho^{f}_{h,c}\dket\nonumber\\&=& \frac{1}{(1+a_h)(1+a_c)}(a_h-a_c)\omega_h.
\end{eqnarray}
For this permutation, efficiency is given by the Otto efficiency
\begin{equation}
    \eta = \frac{W}{Q} = 1- \frac{\omega_c}{\omega_h},
\end{equation}
which is the optimal efficiency.

\section{Proof of proposition \ref{proposition_coherence_useless}}\label{useless_coherence}
\begin{proof}
    We begin by rewriting the definition of work and the efficiency of the two-stroke heat engine given in Eq. \eqref{Work_redefined} and Eq. \eqref{efficiency_def}  i.e,
    \begin{equation}
        W = Q_c+Q_h\;,\; \quad\eta = 1+\frac{Q_c}{Q_h}.
    \end{equation}
    From the above equations we can infer that in order to prove the claim it is enough to construct an unitary $\tilde{U}$ and a catalyst $\tilde{\rho}_s$ that gives the transferred amount of heat $\tilde{Q}_h$ and the cold heat $\tilde{Q}_c$ exactly same as $Q_h$ and $Q_c$ i.e., $\tilde{Q}_h=Q_h$ and $\tilde{Q}_c=Q_c$. Let us assume the initial state of the working body of two-stroke thermal machine $\rho_s\otimes\tau_h\otimes\tau_c$ transforms via an unitary $U$ during the work stroke, we have 
    \begin{eqnarray}
        Q_h &=& \Tr\big(H_h(\rho_s\otimes\tau_h\otimes\tau_c-U(\rho_s\otimes\tau_h\otimes\tau_c)U^{\dagger})\big)\nonumber\\
        Q_c &=& \Tr\big(H_c(\rho_s\otimes\tau_h\otimes\tau_c-U(\rho_s\otimes\tau_h\otimes\tau_c)U^{\dagger})\big)\nonumber\\
        &&\text{such that   } \Tr_{h,c} U(\rho_s\otimes\tau_h\otimes\tau_c)U^{\dagger} = \label{cata}\rho_s .
    \end{eqnarray}
   One can diagonalize $\rho_s = K\rho_s^D K^{\dagger}$ where $K$ is unitary and $\rho_s^D$ is a diagonal matrix that contains the eigenvalues of $\rho_s$. Now we choose 
   \begin{equation}\label{forty}
       \tilde{\rho}_s = \rho^D_s\quad,\quad \tilde{U}=K^{\dagger} U K. 
   \end{equation}
   Then one can write 
   \begin{eqnarray}
       \tilde{Q}_h &=& \Tr\big(H_h(\tilde{\rho}_s\otimes\tau_h\otimes\tau_c-\tilde{U}(\tilde{\rho}_s\otimes\tau_h\otimes\tau_c)\tilde{U}^{\dagger})\big) \nonumber\\
        &=& \Tr\big(H_h(K^{\dagger}\rho_s K\otimes\tau_h\otimes\tau_c\nonumber\\&-&K^{\dagger} U K(K^{\dagger}\rho_s K\otimes\tau_h\otimes\tau_c)K^{\dagger}U^{\dagger} K)\big) \nonumber\\
        &=& \Tr\big(H_h(\rho_s\otimes\tau_h\otimes\tau_c-U(\rho_s\otimes\tau_h\otimes\tau_c)U^{\dagger}) = Q_h.\nonumber\\
   \end{eqnarray}
   where we use unitarity of $K$ i.e., $KK^{\dagger} = \mathbb{I}$ and $[K,H_h]=[K^{\dagger},H_h] = 0$ as they act on different Hilbert space. In a similar manner we can show 
   \begin{equation}
       \tilde{Q}_c = \Tr\big(H_c(\tilde{\rho}_s\otimes\tau_h\otimes\tau_c-\tilde{U}(\tilde{\rho}_s\otimes\tau_h\otimes\tau_c)\tilde{U}^{\dagger})\big) =Q_c.
   \end{equation}
   Furthermore, substituting $U$ and $\rho^D_s$ from Eq. \eqref{forty} in Eq. \eqref{cata} one can straight-forwardly see 
   \begin{equation}
       \Tr_{h,c} \tilde{U}(\tilde{\rho}_s\otimes\tau_h\otimes\tau_c)\tilde{U}^{\dagger} = \tilde{\rho}_s.
   \end{equation}
\end{proof}
\section{Proof of theorem \ref{catalyst:enhancement}}\label{appendix4}

In order to do the proof, we shall proceed by writing the efficiency from Eq. \eqref{efficiency_def} as 
\begin{equation}
    \eta = 1+\frac{Q_c}{Q_h}.
\end{equation}
From the definition of transferred amount of heat in Eq. \eqref{defn:hot_heat} and the cold heat in Eq. \eqref{defn:cold_heat}, and the definition of the hot and cold subspaces from Def. \ref{hot_and_cold_subspaces}, one can calculate $Q_c$ and $Q_h$ in this scenario as the net amount of population transfer in the excited hot and cold subspaces. Thus, using Table \ref{blah} one can calculate $Q_c$ and $Q_h$ as follows
\begin{equation}
    Q_c = \omega_c(B-A)\Delta P,\quad Q_h = \omega_h(D-C) \Delta P,
\end{equation}
where
\begin{eqnarray}
    (k_{12}+k_{14}+k_{32}+k_{34})&=&A,\\
    (k_{21}+k_{23}+k_{41}+k_{43})&=&B,\\
    (k_{13}+k_{14}+k_{23}+k_{24})&=&C,\\
    (k_{31}+k_{32}+k_{41}+k_{42})&=&D.
\end{eqnarray}
as shown in Table \ref{blah}. Thus the efficiency of reduces to
\begin{equation}\label{efficiency_redefined23}
    \eta = 1+\frac{\omega_c(B-A)}{\omega_h(D-C)} = 1-\frac{\omega_c(A-B)}{\omega_h(D-C)},
\end{equation}

\begin{table}[t]
\begin{center}
\begin{tabular}{ |c|c|c|c| } 
 \hline
 Swaps between & Population flow  & Population flow & number  \\
 energy levels  & in the  & in the & of \\
(Convention $|\cdot\rangle_{s,h,c}$) & hot  & cold & swaps\\
 & subspace & cold subspace & \\
 \hline
 $|i,0,0\rangle\leftrightarrow|i+1,0,0\rangle$ & 0 & 0 & $k_{11}$\\ 
 $|i,0,0\rangle\leftrightarrow|i+1,0,1\rangle$ & 0 & $\Delta P$ & $k_{12}$\\
 $|i,0,0\rangle\leftrightarrow|i+1,1,0\rangle$ & $\Delta P$ & 0 & $k_{13}$\\
$|i,0,0\rangle\leftrightarrow|i+1,1,1\rangle$ & $\Delta P$ & $\Delta P$ & $k_{14}$\\
 \hline
 $|i,0,1\rangle\leftrightarrow|i+1,0,0\rangle$ & $0$ & $-\Delta P$ & $k_{21}$ \\ 
 $|i,0,1\rangle\leftrightarrow|i+1,0,1\rangle$ & $0$ & $0$ & $k_{22}$\\
 $|i,0,1\rangle\leftrightarrow|i+1,1,0\rangle$ & $\Delta P$ & $-\Delta P$ & $k_{23}$\\
$|i,0,1\rangle\leftrightarrow|i+1,1,1\rangle$ & $\Delta P$ & $0$ & $k_{24}$\\
\hline
 $|i,1,0\rangle\leftrightarrow|i+1,0,0\rangle$ & $-\Delta P$ & $0$ & $k_{31}$\\ 
 $|i,1,0\rangle\leftrightarrow|i+1,0,1\rangle$ & $-\Delta P$ & $\Delta P$ & $k_{32}$\\
 $|i,1,0\rangle\leftrightarrow|i+1,1,0\rangle$ & $0$ & $0$ & $k_{33}$\\
$|i,1,0\rangle\leftrightarrow|i+1,1,1\rangle$ & $0$ & $\Delta P$ & $k_{34}$\\
\hline
 $|i,1,1\rangle\leftrightarrow|i+1,0,0\rangle$ & $-\Delta P$ & $-\Delta P$ & $k_{41}$\\ 
 $|i,1,1\rangle\leftrightarrow|i+1,0,1\rangle$ & $-\Delta P$ & $0$ & $k_{42}$\\
 $|i,1,1\rangle\leftrightarrow|i+1,1,0\rangle$ & $0$ & $-\Delta P$ & $k_{43}$\\
$|i,1,1\rangle\leftrightarrow|i+1,1,1\rangle$ & $0$ & $0$ & $k_{44}$\\
\hline
\end{tabular}
\caption{The table depicts the amount of net population flow in the hot and cold subspaces due to different swaps. Note that the transferred amount of population is $\Delta P$ for all swaps. This involves fro the condition of preserving the catalyst.}
\label{blah}
\end{center}
\end{table}
Now, $(A-B)$ and $(D-C)$ are integers, and in order to work as an engine the efficiency has to be $1>1-\frac{\beta_h}{\beta_c}\geq\eta>0$ which implies both $(A-B)$ and $(D-C)$ has to be either positive or negative, simultaneously and $|D-C|>|A-B|$. Furthermore, by inspection one can write the following  inequalities,
 
\begin{eqnarray}
    &&d>|D-C|>1\\
    &&d>|A-B|>1.
\end{eqnarray}

where $d$ is the dimension of catalyst. This inequality straightforwardly led to the following bound on the efficiency: 
\begin{equation}
    1-\frac{(A-B)\omega_c}{(D-C)\omega_h} \leq 1-\frac{\omega_c}{d\omega_h},
\end{equation}
where we obtain the inequality by putting the least possible value of $(A-B)$ appeared in the numerator, and the highest possible value of $(D-C)$ appeared in the denominator.

Now, in order to extract positive amount of work this efficiency should be upper bounded by the Carnot efficiency, which implies: 
\begin{eqnarray}
    &&1-\frac{\omega_c}{d\omega_h} \leq 1-\frac{\beta_h}{\beta_c}\nonumber\\&\Rightarrow& d\leq \frac{\beta_c\omega_c}{\beta_h\omega_h}.
\end{eqnarray}
\section{Calculation of work production}\label{Work_prod}
In this appendix, we shall provide the explicit calculation for work production when the initial state of the working body of the catalyst-assisted two-stroke engine transformed via the simple permutation $\Pi_{\text{gen}}$ given in Eq. \eqref{Pigen} and presented diagramatically in Fig. \ref{fig:mhotncold} during work stroke. Employing the identity from Eq. \eqref{Work_redefined}, we can write the amount of work produced by the engine as sum of the heat consumed from the hot bath and heat dumped into the cold bath i.e.,
\begin{equation}
    W = Q_h+Q_c.
\end{equation}
As we have assumed that the ground state energy of both the qubits is zero, $Q_h$ and $Q_c$ are simply given by the difference between the initial and final population in the excited hot and cold subspaces (see Eq. \eqref{defn:hot_heat} and \eqref{defn:cold_heat}). Therefore we can calculate $Q_h$ and $Q_c$ for the simple permutation depicted in \ref{fig:mhotncold} as 
\begin{eqnarray}\label{QcQh}
    Q_h = (m+n)\omega_h\Delta P\quad,\quad Q_h = -n\omega_c\Delta P;
\end{eqnarray}
where $\Delta P$ is the transferred amount of population from $i^{\text{th}}$ block to $(i+1)^{\text{th}}$ block where the addition appeared in $(i+1)$ is addition modulo $d$, where $d$ is the dimension of catalyst. The index $i$ is labelled by the energy eigenstate of Hamiltonian of catalyst i.e., $H_s$.

Note that the transferred amount of population $\Delta P$ has to be fixed for all $i$, in order to preserve the state of the catalyst after the transformation. Therefore, the amount of produced work by the engine is given as
\begin{equation}\label{eq:work_renewed}
    W = \big((m+n)\omega_h-n\omega_c\big)\Delta P.
\end{equation}
Now, our goal is to calculate $\Delta P$ as a function of the inverse temperature of the bath $\beta_h$ and $\beta_c$, energy associated with the excited state of the Hamiltonian of the hot and the cold two-level system  $\omega_h$ and $\omega_c$, and dimension of the catalyst $d=m+n$. In order to do so, we begin by explicitly writing the conditions for preserving the catalyst
\begin{eqnarray}\label{eq:cyclic_app}
    \Delta P &=& \N(p_sa_h-p_{s+1}) \quad \text{for}\; s\in\{1,2,\ldots ,m\},\nonumber\\
    \Delta P &=& \N(p_{m+t}a_h-p_{m+t+1}a_c) \quad \text{for}\; t\in\{1,2,\ldots ,(n-1)\},\nonumber\\
    \Delta P &=& \N(p_{m+n}a_h-p_1a_c).
\end{eqnarray}
where 
\begin{equation}
    \N = \frac{1}{(1+a_h)(1+a_c)}.
\end{equation}
Using this Eq. \eqref{eq:cyclic_app} we can obtain
\begin{eqnarray}
    &&\text{for}\; s\in\{1,2,\ldots ,m\};\quad p_{s+1} = p_sa_h-\frac{\Delta P}{\N},\\
    &&\text{for}\; t\in\{1,2,\ldots ,(n-1)\};\quad p_{m+t+1} = p_{m+t}\frac{a_h}{a_c}-\frac{\Delta P}{\N a_c},\nonumber\\\\
    && p_1 = p_{m+n}\frac{a_h}{a_c}-\frac{\Delta P}{\N a_c}.
\end{eqnarray}
that can be further simplified as
\begin{eqnarray}
    &&\text{for}\; s\in\{1,\ldots ,m\};\; p_{s+1} = p_1a_h^{s}-\frac{\Delta P}{\N}\Big(\frac{1-a_h^s}{1-a_h}\Big)\label{G8}\\
    &&\text{for}\; t\in\{1,2,\ldots ,(n-1)\};\nonumber\\&&\quad p_{m+t+1} = p_{m+1}\bigg(\frac{a_h}{a_c}\bigg)^{t}-\frac{\Delta P}{\N a_c}\Bigg(\frac{(\frac{a_h}{a_c})^t-1}{\frac{a_h}{a_c})-1}\Bigg)\\
    && =\Bigg(p_1a_h^m-\frac{\Delta P}{\N}\Big(\frac{1-a_h^m}{1-a_h}\Big)\Bigg)\Big(\frac{a_h}{a_c}\Big)^t-\frac{\Delta P}{\N a_c}\Bigg(\frac{(\frac{a_h}{a_c})^t-1}{(\frac{a_h}{a_c})-1}\Bigg)\nonumber\\\label{G10}\\
    && p_1 = p_{m+n}\frac{a_h}{a_c}-\frac{\Delta P}{\N a_c}\label{G11}.
\end{eqnarray}
From the normalization of the probability, we can write
\begin{eqnarray}\label{sumequaltoone}
    \sum_{x=2}^{m+n} p_x &=& \sum_{x=2}^{m+1} p_x+\sum_{x=m+2}^{m+n} p_x = \sum_{s=1}^{m} p_{s+1}+\sum_{t=1}^{n-1} p_{m+t+1}\nonumber\\ &=& 1-p_1.
\end{eqnarray}
Substituting the value of $p_x$ from Eq. \eqref{G8} and \eqref{G10} in Eq. \eqref{sumequaltoone}, we obtain $p_1$ in terms of $\Delta P$. In order to get the solution of $\Delta P$, we substituted the obtained expression of $p_1$ in terms of $\Delta P$ in Eq. \eqref{G11} and solved it. This gives 
\begin{equation}
    \Delta P = \frac{1}{f(a_h,a_c,m,n)}\N (a_h^{m+n}-a_c^n),
\end{equation}
with 
 
\begin{widetext}
\begin{equation}\label{eq:Exp_of_f}
    f(a_h,a_c,m,n):=\frac{a_h\left(1-a_c\right)^2\left\{\left(1-a_h^m\right)\left(a_h^n-a_c^n\right)\right\}+\left\{\left(a_h^{(m+n)}-a_c^n\right)\left(a_h-a_c\right)\left(1-a_h\right)\right\}\left\{n(1-a_h)-m(a_h-a_c)\right\}}{\left(a_h-a_c\right)^2\left(1-a_h\right)^2}.
\end{equation} 
\end{widetext}

Thus, employing Eq. \eqref{eq:work_renewed} we get work as 
\begin{equation}\label{work_appendix_final}
    W= \frac{\N\big((m+n)\omega_h-n\omega_c\big)(a_{h}^{m+n}-a_{c}^{n})}{f(a_h,a_c,m,n)}.
\end{equation}
On the other hand, the efficiency of the engine when the working body transforms via simple permutation $\Pi_{\text{gen}}$ in the work stroke given in Eq. \eqref{Pigen} and depicted in Fig. \ref{fig:mhotncold}, can be calculated using Eq. \eqref{QcQh} as follows:
\begin{equation}
    \eta = 1+\frac{Q_c}{Q_h}=1-\Big(\frac{n}{m+n}\Big)\frac{\omega_c}{\omega_h} = 1-\frac{n\omega_c}{d\omega_h},
\end{equation}
where $m+n=d$ is the dimension of the catalyst.

\section{Work extraction as linear program}\label{Work_ext_LP}
As we have mentioned earlier, the maximum amount of the work that can be produced by the two-stroke heat engine is a challenging task. Unlike non-catalytic scenario, we do not know whether the implementation of permutation on the initial state will lead to the optimal work extraction. In this section we shall formulate a linear program whose solution provide an upper bound on the amount of work that can be produced by the engine in presence of a fixed catalyst. In order to do so, we first write the initial state of the working body as 
\begin{equation}
    \rho^{i}_{s,h,c} =\rho_s\otimes\tau_h\otimes\tau_c = \sum_{i=1}^d p_i|i\rangle\langle i|\otimes \tau_h\otimes \tau_c.
\end{equation}
and define the projector 
\begin{equation}
    M_i := |i\rangle\langle i|\otimes\mathbb{I}_h\otimes\mathbb{I}_c.
\end{equation}
For a given total Hamiltonian $H$ of the engine, and the initial state $\rho^{i}_{s,h,c}$, we can express the amount of produced work $W(\cdot)$ as a function of initial state of the working body $\rho^{i}_{s,h,c}$ via the following optimization problem:   
\begin{equation}\label{minima23}
\begin{aligned}
& {\text{maximize}}
& & W(\rho^{i}_{s,h,c}):=\dbra H|\rho^{i}_{s,h,c}\dket-\dbra H|B|\rho^{i}_{s,h,c}\dket \\
& \text{subject to}
& & B_{xy}=|U_{xy}|^2 \quad{\text{for some unitary $U$}}\\
& & &  \forall k\;\;\;\;\dbra M_k|B|\rho^{i}_{s,h,c}\dket = \dbra M_k|\rho^{i}_{s,h,c}\dket   
\end{aligned}
\end{equation}
The solution of the introduced linear program would give optimal work produced by the two-stroke engine having a working body consists of catalyst, the hot and the cold $d$-level system. But, this problem is very hard to solve due to constraint $B_{xy} = |U_{xy}|^2$ is not a linear constraint. Nonetheless, we consider a simplified optimization problem that gives an upper bound for the work extraction, namely
\begin{equation}\label{minima35}
\begin{aligned}
& {\text{maximize}}
& & W(\rho^{i}_{s,h,c}):=\dbra H|\rho^{i}_{s,h,c}\dket-\dbra H|B|\rho^{i}_{s,h,c}\dket \\
& \text{subject to}
& & B \quad \text{is a bistochastic matrix}\\
& & &  \forall k\;\;\;\;\dbra M_k|B|\rho^{i}_{s,h,c}\dket = \dbra M_k|\rho^{i}_{s,h,c}\dket,   
\end{aligned}
\end{equation}
The optimal solution $B^*$ of the introduced problem \eqref{minima35} is a solution of the previous one \eqref{minima23} whenever $B^*$ is an unistochastic matrix (i.e., $B^*_{xy} = |U_{xy}|^2$ for some unitary $U$).
We can write optimal work produced by the engine having a working body consists of catalyst, the hot and the cold $d$-level system as a linear program. In order to do so, we first write the initial state of the working body as 
\begin{equation}
    \rho^{i}_{s,h,c} =\rho_s\otimes\tau_h\otimes\tau_c = \sum_{i=1}^d p_i|i\rangle\langle i|\otimes \tau_h\otimes \tau_c.
\end{equation}
and define the projector 
\begin{equation}
    M_i := |i\rangle\langle i|\otimes\mathbb{I}_h\otimes\mathbb{I}_c.
\end{equation}
For a given total Hamiltonian $H$ of the working body of the engine (i.e., catalyst, the hot and the cold $d$-level system), and the initial state of the working body $\rho^{i}_{s,h,c}$, we can express the amount of produced work $W(\cdot)$ as a function of initial state of the working body $\rho^{i}_{s,h,c}$ via the following problem:   
\begin{equation}\label{minima2}
\begin{aligned}
& {\text{maximize}}
& & W(\rho^{i}_{s,h,c}):=\dbra H|\rho^{i}_{s,h,c}\dket-\dbra H|B|\rho^{i}_{s,h,c}\dket \\
& \text{subject to}
& & B_{xy}=|U_{xy}|^2 \quad{\text{for some unitary $U$}}\\
& & &  \forall k\;\;\;\;\dbra M_k|B|\rho^{i}_{s,h,c}\dket = \dbra M_k|\rho^{i}_{s,h,c}\dket   
\end{aligned}
\end{equation}
The solution of the introduced linear program would give an optimal work produced by the engine having a working body that contains the catalyst. In the following, we consider a simplified optimization problem that gives an upper bound for the work produced by the engine, namely
\begin{equation}\label{minima3}
\begin{aligned}
& {\text{maximize}}
& & W(\rho^{i}_{s,h,c}):=\dbra H|\rho^{i}_{s,h,c}\dket-\dbra H|B|\rho^{i}_{s,h,c}\dket \\
& \text{subject to}
& & B \quad \text{is a bistochastic matrix}\\
& & &  \forall k\;\;\;\;\dbra M_k|B|\rho^{i}_{s,h,c}\dket = \dbra M_k|\rho^{i}_{s,h,c}\dket   
\end{aligned}
\end{equation}
The optimal solution $B^*$ of the introduced problem \eqref{minima3} is a solution of the previous one \eqref{minima2} whenever $B^*$ is an unistochastic matrix (i.e., $B^*_{xy} = |U_{xy}|^2$ for some unitary $U$). 

From now on, let us concentrate on the linear program \eqref{minima3}. $B$ can be decomposed into a convex sum of permutations, such that
\begin{eqnarray}
    B = \sum_{m} \alpha_m \Pi_m, \quad \sum_m \alpha_m = 1, \quad \alpha_m \ge 0.
\end{eqnarray}
Then, we define
\begin{align}
    & w_m = \dbra H|\rho^{i}_{s,h,c}\dket-\dbra H|\Pi_m|\rho^{i}_{s,h,c}\dket, \\
    & a^{(k)}_m = \dbra M_k|\Pi_m|\rho^{i}_{s,h,c}\dket, \\
    & a^{(k)} = \dbra M_k|\rho^{i}_{s,h,c}\dket,
\end{align}
and finally we may rewritten the optimization problem in the following way
\begin{equation}\label{minima4}
\begin{aligned}
& {\text{maximize}}
& & \sum_m \alpha_m w_m \\
& \text{subject to}
& & \sum_m \alpha_m = 1, \quad \alpha_m \ge 0, \\
& & &  \forall k \ \sum_m \alpha_m a^{(k)}_m = a^{(k)}.
\end{aligned}
\end{equation}
Then, the dual problem is given by
\begin{equation}\label{minima5}
\begin{aligned}
& {\text{minimize}}
& & y + \sum_k a^{(k)} x_k \\
& \text{subject to}
& & \forall m \ y \ge w_m - \sum_k a^{(k)}_m x_k
\end{aligned}
\end{equation}
This dual program describes an optimization over the faces of $(d-1)$-polytope, which is defined as:
\begin{eqnarray} \label{polytope1}
y = \max_{m} \left[w_m - \sum_k a^{(k)}_m x_k \right],
\end{eqnarray}
such that the optimal work is given by:
\begin{eqnarray}
    \min_{x_1, \dots, x_{d-1}} \left[\max_{m} \left[w_m - \sum_k a^{(k)}_m x_k\right] + \sum_k a^{(k)} x_k \right]. \nonumber \\
\end{eqnarray}

Let us examine the situation when the solution is uniquely achieved in the vertex of the polytope. Then, the vertex given by the point $\vec x = (y, x_1, x_2, \dots, x_{d-1})^T$ is a solution of the problem \eqref{minima5}, such that 
\begin{eqnarray}
    \vec a^T \vec x = y + \sum_k a^{(k)} x_k
\end{eqnarray}
is minimal on the polytope \eqref{polytope1}, where $\vec a = (1, a^{(1)}, a^{(2)}, \dots, a^{(d-1)})^T$. The vertex $\vec x$ comes from the intersection of $d$ hyperplanes, namely 
\begin{eqnarray} \label{polytope}
y = w_{m_i} - \sum_k a^{(k)}_{m_i} x_k,
\end{eqnarray}
where $i = 1, 2, \dots, d$, such that it is given as a solution of the linear system:
\begin{eqnarray}
    A \vec x = \vec w,
\end{eqnarray}
where $\vec w = (w_1, w_2, \dots, w_n)^T$, and 
\begin{eqnarray}
A = 
    \begin{pmatrix}
        1 & a_{m_1}^{(1)} & a_{m_1}^{(2)} & \dots a_{m_1}^{(d-1)} \\
        1 & a_{m_2}^{(1)} & a_{m_2}^{(2)} & \dots a_{m_2}^{(d-1)} \\
        & & \dots & \\
        1 & a_{m_d}^{(1)} & a_{m_d}^{(2)} & \dots a_{m_{d}}^{(d-1)} 
    \end{pmatrix}.
\end{eqnarray}
Then, the final solution can be written as 
\begin{eqnarray}
    \vec a^T \vec x = \vec a^T A^{-1} \vec w \equiv \vec \alpha^T \vec w,
\end{eqnarray}
where $\vec \alpha = [A^{-1}]^T \vec a$ forms a set of mixing coefficient $\alpha_k$, satisfying all the properties stated in \eqref{minima4}.


\bibliography{main}

\providecommand{\noopsort}[1]{}\providecommand{\singleletter}[1]{#1}%
\begin{thebibliography}{75}%
\makeatletter
\providecommand \@ifxundefined [1]{%
 \@ifx{#1\undefined}
}%
\providecommand \@ifnum [1]{%
 \ifnum #1\expandafter \@firstoftwo
 \else \expandafter \@secondoftwo
 \fi
}%
\providecommand \@ifx [1]{%
 \ifx #1\expandafter \@firstoftwo
 \else \expandafter \@secondoftwo
 \fi
}%
\providecommand \natexlab [1]{#1}%
\providecommand \enquote  [1]{``#1''}%
\providecommand \bibnamefont  [1]{#1}%
\providecommand \bibfnamefont [1]{#1}%
\providecommand \citenamefont [1]{#1}%
\providecommand \href@noop [0]{\@secondoftwo}%
\providecommand \href [0]{\begingroup \@sanitize@url \@href}%
\providecommand \@href[1]{\@@startlink{#1}\@@href}%
\providecommand \@@href[1]{\endgroup#1\@@endlink}%
\providecommand \@sanitize@url [0]{\catcode `\\12\catcode `\$12\catcode
  `\&12\catcode `\#12\catcode `\^12\catcode `\_12\catcode `\%12\relax}%
\providecommand \@@startlink[1]{}%
\providecommand \@@endlink[0]{}%
\providecommand \url  [0]{\begingroup\@sanitize@url \@url }%
\providecommand \@url [1]{\endgroup\@href {#1}{\urlprefix }}%
\providecommand \urlprefix  [0]{URL }%
\providecommand \Eprint [0]{\href }%
\providecommand \doibase [0]{https://doi.org/}%
\providecommand \selectlanguage [0]{\@gobble}%
\providecommand \bibinfo  [0]{\@secondoftwo}%
\providecommand \bibfield  [0]{\@secondoftwo}%
\providecommand \translation [1]{[#1]}%
\providecommand \BibitemOpen [0]{}%
\providecommand \bibitemStop [0]{}%
\providecommand \bibitemNoStop [0]{.\EOS\space}%
\providecommand \EOS [0]{\spacefactor3000\relax}%
\providecommand \BibitemShut  [1]{\csname bibitem#1\endcsname}%
\let\auto@bib@innerbib\@empty
\bibitem [{\citenamefont {{Brand\~ao}}\ \emph {et~al.}(2015)\citenamefont
  {{Brand\~ao}}, \citenamefont {{Horodecki}}, \citenamefont {{Ng}},
  \citenamefont {{Oppenheim}},\ and\ \citenamefont
  {{Wehner}}}]{brandao2015second}%
  \BibitemOpen
  \bibfield  {author} {\bibinfo {author} {\bibfnamefont {F.~G.~S.~L.}\
  \bibnamefont {{Brand\~ao}}}, \bibinfo {author} {\bibfnamefont
  {M.}~\bibnamefont {{Horodecki}}}, \bibinfo {author} {\bibfnamefont
  {N.~H.~Y.}\ \bibnamefont {{Ng}}}, \bibinfo {author} {\bibfnamefont
  {J.}~\bibnamefont {{Oppenheim}}},\ and\ \bibinfo {author} {\bibfnamefont
  {S.}~\bibnamefont {{Wehner}}},\ }\bibfield  {title} {\bibinfo {title} {{The
  second laws of quantum thermodynamics}},\ }\href
  {https://doi.org/10.1073/pnas.1411728112} {\bibfield  {journal} {\bibinfo
  {journal} {Proc. Natl. Acad. Sci. U.S.A.}\ }\textbf {\bibinfo {volume}
  {112}},\ \bibinfo {pages} {3275} (\bibinfo {year} {2015})}\BibitemShut
  {NoStop}%
\bibitem [{\citenamefont {Shiraishi}\ and\ \citenamefont
  {Sagawa}(2021)}]{ShiraishiSagawa21}%
  \BibitemOpen
  \bibfield  {author} {\bibinfo {author} {\bibfnamefont {N.}~\bibnamefont
  {Shiraishi}}\ and\ \bibinfo {author} {\bibfnamefont {T.}~\bibnamefont
  {Sagawa}},\ }\bibfield  {title} {\bibinfo {title} {Quantum thermodynamics of
  correlated-catalytic state conversion at small scale},\ }\href
  {https://doi.org/10.1103/PhysRevLett.126.150502} {\bibfield  {journal}
  {\bibinfo  {journal} {Phys. Rev. Lett.}\ }\textbf {\bibinfo {volume} {126}},\
  \bibinfo {pages} {150502} (\bibinfo {year} {2021})}\BibitemShut {NoStop}%
\bibitem [{\citenamefont {Wilming}(2021)}]{WilmingPRL}%
  \BibitemOpen
  \bibfield  {author} {\bibinfo {author} {\bibfnamefont {H.}~\bibnamefont
  {Wilming}},\ }\bibfield  {title} {\bibinfo {title} {Entropy and reversible
  catalysis},\ }\href {https://doi.org/10.1103/PhysRevLett.127.260402}
  {\bibfield  {journal} {\bibinfo  {journal} {Phys. Rev. Lett.}\ }\textbf
  {\bibinfo {volume} {127}},\ \bibinfo {pages} {260402} (\bibinfo {year}
  {2021})}\BibitemShut {NoStop}%
\bibitem [{\citenamefont {Datta}\ \emph {et~al.}(2023)\citenamefont {Datta},
  \citenamefont {Kondra}, \citenamefont {Miller},\ and\ \citenamefont
  {Streltsov}}]{CDatta_Review}%
  \BibitemOpen
  \bibfield  {author} {\bibinfo {author} {\bibfnamefont {C.}~\bibnamefont
  {Datta}}, \bibinfo {author} {\bibfnamefont {T.~V.}\ \bibnamefont {Kondra}},
  \bibinfo {author} {\bibfnamefont {M.}~\bibnamefont {Miller}},\ and\ \bibinfo
  {author} {\bibfnamefont {A.}~\bibnamefont {Streltsov}},\ }\bibfield  {title}
  {\bibinfo {title} {Catalysis of entanglement and other quantum resources},\
  }\href {https://doi.org/10.1088/1361-6633/acfbec} {\bibfield  {journal}
  {\bibinfo  {journal} {Reports on Progress in Physics}\ }\textbf {\bibinfo
  {volume} {86}},\ \bibinfo {pages} {116002} (\bibinfo {year}
  {2023})}\BibitemShut {NoStop}%
\bibitem [{\citenamefont {Lipka-Bartosik}\ \emph {et~al.}(2023)\citenamefont
  {Lipka-Bartosik}, \citenamefont {Wilming},\ and\ \citenamefont
  {H.~Y.~Ng}}]{Bartosik_review}%
  \BibitemOpen
  \bibfield  {author} {\bibinfo {author} {\bibfnamefont {P.}~\bibnamefont
  {Lipka-Bartosik}}, \bibinfo {author} {\bibfnamefont {H.}~\bibnamefont
  {Wilming}},\ and\ \bibinfo {author} {\bibfnamefont {N.}~\bibnamefont
  {H.~Y.~Ng}},\ }\bibfield  {title} {\bibinfo {title} {Catalysis in quantum
  information theory},\ }\href {https://doi.org/10.48550/arXiv.2306.00798}
  {\bibfield  {journal} {\bibinfo  {journal} {arXiv:2306.00798}\ } (\bibinfo
  {year} {2023})}\BibitemShut {NoStop}%
\bibitem [{\citenamefont {Henao}\ and\ \citenamefont
  {Uzdin}(2023)}]{HenaoUzdin}%
  \BibitemOpen
  \bibfield  {author} {\bibinfo {author} {\bibfnamefont {I.}~\bibnamefont
  {Henao}}\ and\ \bibinfo {author} {\bibfnamefont {R.}~\bibnamefont {Uzdin}},\
  }\bibfield  {title} {\bibinfo {title} {Catalytic leverage of correlations and
  mitigation of dissipation in information erasure},\ }\href
  {https://doi.org/10.1103/PhysRevLett.130.020403} {\bibfield  {journal}
  {\bibinfo  {journal} {Phys. Rev. Lett.}\ }\textbf {\bibinfo {volume} {130}},\
  \bibinfo {pages} {020403} (\bibinfo {year} {2023})}\BibitemShut {NoStop}%
\bibitem [{\citenamefont {Henao}\ and\ \citenamefont
  {Uzdin}(2021)}]{Henao2021catalytic}%
  \BibitemOpen
  \bibfield  {author} {\bibinfo {author} {\bibfnamefont {I.}~\bibnamefont
  {Henao}}\ and\ \bibinfo {author} {\bibfnamefont {R.}~\bibnamefont {Uzdin}},\
  }\bibfield  {title} {\bibinfo {title} {Catalytic transformations with
  finite-size environments: applications to cooling and thermometry},\ }\href
  {https://doi.org/10.22331/q-2021-09-21-547} {\bibfield  {journal} {\bibinfo
  {journal} {{Quantum}}\ }\textbf {\bibinfo {volume} {5}},\ \bibinfo {pages}
  {547} (\bibinfo {year} {2021})}\BibitemShut {NoStop}%
\bibitem [{\citenamefont {Sparaciari}\ \emph {et~al.}(2017)\citenamefont
  {Sparaciari}, \citenamefont {Jennings},\ and\ \citenamefont
  {Oppenheim}}]{Sparaciari2017}%
  \BibitemOpen
  \bibfield  {author} {\bibinfo {author} {\bibfnamefont {C.}~\bibnamefont
  {Sparaciari}}, \bibinfo {author} {\bibfnamefont {D.}~\bibnamefont
  {Jennings}},\ and\ \bibinfo {author} {\bibfnamefont {J.}~\bibnamefont
  {Oppenheim}},\ }\bibfield  {title} {\bibinfo {title} {Energetic instability
  of passive states in thermodynamics},\ }\href
  {https://doi.org/10.1038/s41467-017-01505-4} {\bibfield  {journal} {\bibinfo
  {journal} {Nature Communications}\ }\textbf {\bibinfo {volume} {8}},\
  \bibinfo {pages} {1895} (\bibinfo {year} {2017})}\BibitemShut {NoStop}%
\bibitem [{\citenamefont {Son}\ and\ \citenamefont {Ng}(2024)}]{Son_2024NJP}%
  \BibitemOpen
  \bibfield  {author} {\bibinfo {author} {\bibfnamefont {J.}~\bibnamefont
  {Son}}\ and\ \bibinfo {author} {\bibfnamefont {N.~H.~Y.}\ \bibnamefont
  {Ng}},\ }\bibfield  {title} {\bibinfo {title} {Catalysis in action via
  elementary thermal operations},\ }\href
  {https://doi.org/10.1088/1367-2630/ad2413} {\bibfield  {journal} {\bibinfo
  {journal} {New Journal of Physics}\ }\textbf {\bibinfo {volume} {26}},\
  \bibinfo {pages} {033029} (\bibinfo {year} {2024})}\BibitemShut {NoStop}%
\bibitem [{\citenamefont {Ramsey}(1956)}]{Ramsey22}%
  \BibitemOpen
  \bibfield  {author} {\bibinfo {author} {\bibfnamefont {N.~F.}\ \bibnamefont
  {Ramsey}},\ }\bibfield  {title} {\bibinfo {title} {Thermodynamics and
  statistical mechanics at negative absolute temperatures},\ }\href
  {https://doi.org/10.1103/PhysRev.103.20} {\bibfield  {journal} {\bibinfo
  {journal} {Phys. Rev.}\ }\textbf {\bibinfo {volume} {103}},\ \bibinfo {pages}
  {20} (\bibinfo {year} {1956})}\BibitemShut {NoStop}%
\bibitem [{\citenamefont {Scovil}\ and\ \citenamefont
  {Schulz-DuBois}(1959)}]{Scovil1959}%
  \BibitemOpen
  \bibfield  {author} {\bibinfo {author} {\bibfnamefont {H.~E.~D.}\
  \bibnamefont {Scovil}}\ and\ \bibinfo {author} {\bibfnamefont {E.~O.}\
  \bibnamefont {Schulz-DuBois}},\ }\bibfield  {title} {\bibinfo {title}
  {Three-level masers as heat engines},\ }\href
  {https://doi.org/10.1103/PhysRevLett.2.262} {\bibfield  {journal} {\bibinfo
  {journal} {Phys. Rev. Lett.}\ }\textbf {\bibinfo {volume} {2}},\ \bibinfo
  {pages} {262} (\bibinfo {year} {1959})}\BibitemShut {NoStop}%
\bibitem [{\citenamefont {Geusic}\ \emph {et~al.}(1967)\citenamefont {Geusic},
  \citenamefont {Schulz-DuBios},\ and\ \citenamefont {Scovil}}]{Scovil2}%
  \BibitemOpen
  \bibfield  {author} {\bibinfo {author} {\bibfnamefont {J.~E.}\ \bibnamefont
  {Geusic}}, \bibinfo {author} {\bibfnamefont {E.~O.}\ \bibnamefont
  {Schulz-DuBios}},\ and\ \bibinfo {author} {\bibfnamefont {H.~E.~D.}\
  \bibnamefont {Scovil}},\ }\bibfield  {title} {\bibinfo {title} {Quantum
  equivalent of the carnot cycle},\ }\href
  {https://doi.org/10.1103/PhysRev.156.343} {\bibfield  {journal} {\bibinfo
  {journal} {Phys. Rev.}\ }\textbf {\bibinfo {volume} {156}},\ \bibinfo {pages}
  {343} (\bibinfo {year} {1967})}\BibitemShut {NoStop}%
\bibitem [{\citenamefont {Kosloff}(1984)}]{Kosloff1984}%
  \BibitemOpen
  \bibfield  {author} {\bibinfo {author} {\bibfnamefont {R.}~\bibnamefont
  {Kosloff}},\ }\bibfield  {title} {\bibinfo {title} {A quantum mechanical open
  system as a model of a heat engine},\ }\bibfield  {journal} {\bibinfo
  {journal} {J. Chem. Phys.}\ }\textbf {\bibinfo {volume} {80}},\ \href
  {https://doi.org/10.1063/1.446862} {10.1063/1.446862} (\bibinfo {year}
  {1984})\BibitemShut {NoStop}%
\bibitem [{\citenamefont {Kosloff}\ and\ \citenamefont
  {Levy}(2014{\natexlab{a}})}]{Kosloff2014}%
  \BibitemOpen
  \bibfield  {author} {\bibinfo {author} {\bibfnamefont {R.}~\bibnamefont
  {Kosloff}}\ and\ \bibinfo {author} {\bibfnamefont {A.}~\bibnamefont {Levy}},\
  }\bibfield  {title} {\bibinfo {title} {Quantum heat engines and
  refrigerators: Continuous devices},\ }\href
  {https://doi.org/10.1146/annurev-physchem-040513-103724} {\bibfield
  {journal} {\bibinfo  {journal} {Annual Review of Physical Chemistry}\
  }\textbf {\bibinfo {volume} {65}},\ \bibinfo {pages} {365} (\bibinfo {year}
  {2014}{\natexlab{a}})},\ \bibinfo {note} {pMID: 24689798},\ \Eprint
  {https://arxiv.org/abs/https://doi.org/10.1146/annurev-physchem-040513-103724}
  {https://doi.org/10.1146/annurev-physchem-040513-103724} \BibitemShut
  {NoStop}%
\bibitem [{\citenamefont {Scully}(2002)}]{Scully2002}%
  \BibitemOpen
  \bibfield  {author} {\bibinfo {author} {\bibfnamefont {M.~O.}\ \bibnamefont
  {Scully}},\ }\bibfield  {title} {\bibinfo {title} {Quantum afterburner:
  Improving the efficiency of an ideal heat engine},\ }\href
  {https://doi.org/10.1103/PhysRevLett.88.050602} {\bibfield  {journal}
  {\bibinfo  {journal} {Phys. Rev. Lett.}\ }\textbf {\bibinfo {volume} {88}},\
  \bibinfo {pages} {050602} (\bibinfo {year} {2002})}\BibitemShut {NoStop}%
\bibitem [{\citenamefont {Linden}\ \emph
  {et~al.}(2010{\natexlab{a}})\citenamefont {Linden}, \citenamefont {Popescu},\
  and\ \citenamefont {Skrzypczyk}}]{Popescurefrigerator}%
  \BibitemOpen
  \bibfield  {author} {\bibinfo {author} {\bibfnamefont {N.}~\bibnamefont
  {Linden}}, \bibinfo {author} {\bibfnamefont {S.}~\bibnamefont {Popescu}},\
  and\ \bibinfo {author} {\bibfnamefont {P.}~\bibnamefont {Skrzypczyk}},\
  }\bibfield  {title} {\bibinfo {title} {How small can thermal machines be? the
  smallest possible refrigerator},\ }\href
  {https://doi.org/10.1103/PhysRevLett.105.130401} {\bibfield  {journal}
  {\bibinfo  {journal} {Phys. Rev. Lett.}\ }\textbf {\bibinfo {volume} {105}},\
  \bibinfo {pages} {130401} (\bibinfo {year} {2010}{\natexlab{a}})}\BibitemShut
  {NoStop}%
\bibitem [{\citenamefont {Allahverdyan}\ \emph {et~al.}(2004)\citenamefont
  {Allahverdyan}, \citenamefont {Balian},\ and\ \citenamefont
  {Nieuwenhuizen}}]{Allahverdyan2004}%
  \BibitemOpen
  \bibfield  {author} {\bibinfo {author} {\bibfnamefont {A.~E.}\ \bibnamefont
  {Allahverdyan}}, \bibinfo {author} {\bibfnamefont {R.}~\bibnamefont
  {Balian}},\ and\ \bibinfo {author} {\bibfnamefont {T.~M.}\ \bibnamefont
  {Nieuwenhuizen}},\ }\bibfield  {title} {\bibinfo {title} {Maximal work
  extraction from finite quantum systems},\ }\href
  {https://doi.org/10.1209/epl/i2004-10101-2} {\bibfield  {journal} {\bibinfo
  {journal} {Europhysics Letters ({EPL})}\ }\textbf {\bibinfo {volume} {67}},\
  \bibinfo {pages} {565} (\bibinfo {year} {2004})}\BibitemShut {NoStop}%
\bibitem [{\citenamefont {Segal}\ and\ \citenamefont {Nitzan}(2006)}]{Segal}%
  \BibitemOpen
  \bibfield  {author} {\bibinfo {author} {\bibfnamefont {D.}~\bibnamefont
  {Segal}}\ and\ \bibinfo {author} {\bibfnamefont {A.}~\bibnamefont {Nitzan}},\
  }\bibfield  {title} {\bibinfo {title} {Molecular heat pump},\ }\href
  {https://doi.org/10.1103/PhysRevE.73.026109} {\bibfield  {journal} {\bibinfo
  {journal} {Phys. Rev. E}\ }\textbf {\bibinfo {volume} {73}},\ \bibinfo
  {pages} {026109} (\bibinfo {year} {2006})}\BibitemShut {NoStop}%
\bibitem [{\citenamefont {Henrich}\ \emph {et~al.}(2007)\citenamefont
  {Henrich}, \citenamefont {Mahler},\ and\ \citenamefont {Michel}}]{Mahler}%
  \BibitemOpen
  \bibfield  {author} {\bibinfo {author} {\bibfnamefont {M.~J.}\ \bibnamefont
  {Henrich}}, \bibinfo {author} {\bibfnamefont {G.}~\bibnamefont {Mahler}},\
  and\ \bibinfo {author} {\bibfnamefont {M.}~\bibnamefont {Michel}},\
  }\bibfield  {title} {\bibinfo {title} {Driven spin systems as quantum
  thermodynamic machines: Fundamental limits},\ }\href
  {https://doi.org/10.1103/PhysRevE.75.051118} {\bibfield  {journal} {\bibinfo
  {journal} {Phys. Rev. E}\ }\textbf {\bibinfo {volume} {75}},\ \bibinfo
  {pages} {051118} (\bibinfo {year} {2007})}\BibitemShut {NoStop}%
\bibitem [{\citenamefont {Linden}\ \emph
  {et~al.}(2010{\natexlab{b}})\citenamefont {Linden}, \citenamefont {Popescu},\
  and\ \citenamefont {Skrzypczyk}}]{PopescuSmallestHeatEngine2010}%
  \BibitemOpen
  \bibfield  {author} {\bibinfo {author} {\bibfnamefont {N.}~\bibnamefont
  {Linden}}, \bibinfo {author} {\bibfnamefont {S.}~\bibnamefont {Popescu}},\
  and\ \bibinfo {author} {\bibfnamefont {P.}~\bibnamefont {Skrzypczyk}},\
  }\bibfield  {title} {\bibinfo {title} {The smallest possible heat engines},\
  }\href {https://doi.org/10.48550/arXiv.1010.6029} {\bibfield  {journal}
  {\bibinfo  {journal} {arXiv:1010.6029}\ } (\bibinfo {year}
  {2010}{\natexlab{b}})}\BibitemShut {NoStop}%
\bibitem [{\citenamefont {Brunner}\ \emph {et~al.}(2012)\citenamefont
  {Brunner}, \citenamefont {Linden}, \citenamefont {Popescu},\ and\
  \citenamefont {Skrzypczyk}}]{BrunnerVirtualqubit2012}%
  \BibitemOpen
  \bibfield  {author} {\bibinfo {author} {\bibfnamefont {N.}~\bibnamefont
  {Brunner}}, \bibinfo {author} {\bibfnamefont {N.}~\bibnamefont {Linden}},
  \bibinfo {author} {\bibfnamefont {S.}~\bibnamefont {Popescu}},\ and\ \bibinfo
  {author} {\bibfnamefont {P.}~\bibnamefont {Skrzypczyk}},\ }\bibfield  {title}
  {\bibinfo {title} {Virtual qubits, virtual temperatures, and the foundations
  of thermodynamics},\ }\href {https://doi.org/10.1103/PhysRevE.85.051117}
  {\bibfield  {journal} {\bibinfo  {journal} {Phys. Rev. E}\ }\textbf {\bibinfo
  {volume} {85}},\ \bibinfo {pages} {051117} (\bibinfo {year}
  {2012})}\BibitemShut {NoStop}%
\bibitem [{\citenamefont {Alicki}\ \emph {et~al.}(2004)\citenamefont {Alicki},
  \citenamefont {Horodecki}, \citenamefont {Horodecki},\ and\ \citenamefont
  {Horodecki}}]{Alicki2004}%
  \BibitemOpen
  \bibfield  {author} {\bibinfo {author} {\bibfnamefont {R.}~\bibnamefont
  {Alicki}}, \bibinfo {author} {\bibfnamefont {M.}~\bibnamefont {Horodecki}},
  \bibinfo {author} {\bibfnamefont {P.}~\bibnamefont {Horodecki}},\ and\
  \bibinfo {author} {\bibfnamefont {R.}~\bibnamefont {Horodecki}},\ }\bibfield
  {title} {\bibinfo {title} {Thermodynamics of quantum information systems ---
  hamiltonian description},\ }\href
  {https://doi.org/10.1023/B:OPSY.0000047566.72717.71} {\bibfield  {journal}
  {\bibinfo  {journal} {Open Syst. Inf. Dyn.}\ }\textbf {\bibinfo {volume}
  {11}},\ \bibinfo {pages} {205} (\bibinfo {year} {2004})}\BibitemShut
  {NoStop}%
\bibitem [{\citenamefont {Zhang}\ \emph {et~al.}(2022)\citenamefont {Zhang},
  \citenamefont {Zhang}, \citenamefont {Ding}, \citenamefont {Li},
  \citenamefont {Bu}, \citenamefont {Wang}, \citenamefont {Yan}, \citenamefont
  {Su}, \citenamefont {Chen}, \citenamefont {Nori}, \citenamefont
  {{\"O}zdemir}, \citenamefont {Zhou}, \citenamefont {Jing},\ and\
  \citenamefont {Feng}}]{Zhang2022}%
  \BibitemOpen
  \bibfield  {author} {\bibinfo {author} {\bibfnamefont {J.-W.}\ \bibnamefont
  {Zhang}}, \bibinfo {author} {\bibfnamefont {J.-Q.}\ \bibnamefont {Zhang}},
  \bibinfo {author} {\bibfnamefont {G.-Y.}\ \bibnamefont {Ding}}, \bibinfo
  {author} {\bibfnamefont {J.-C.}\ \bibnamefont {Li}}, \bibinfo {author}
  {\bibfnamefont {J.-T.}\ \bibnamefont {Bu}}, \bibinfo {author} {\bibfnamefont
  {B.}~\bibnamefont {Wang}}, \bibinfo {author} {\bibfnamefont {L.-L.}\
  \bibnamefont {Yan}}, \bibinfo {author} {\bibfnamefont {S.-L.}\ \bibnamefont
  {Su}}, \bibinfo {author} {\bibfnamefont {L.}~\bibnamefont {Chen}}, \bibinfo
  {author} {\bibfnamefont {F.}~\bibnamefont {Nori}}, \bibinfo {author}
  {\bibfnamefont {{\c{S}}.~K.}\ \bibnamefont {{\"O}zdemir}}, \bibinfo {author}
  {\bibfnamefont {F.}~\bibnamefont {Zhou}}, \bibinfo {author} {\bibfnamefont
  {H.}~\bibnamefont {Jing}},\ and\ \bibinfo {author} {\bibfnamefont
  {M.}~\bibnamefont {Feng}},\ }\bibfield  {title} {\bibinfo {title} {Dynamical
  control of quantum heat engines using exceptional points},\ }\href
  {https://doi.org/10.1038/s41467-022-33667-1} {\bibfield  {journal} {\bibinfo
  {journal} {Nature Communications}\ }\textbf {\bibinfo {volume} {13}},\
  \bibinfo {pages} {6225} (\bibinfo {year} {2022})}\BibitemShut {NoStop}%
\bibitem [{\citenamefont {Myers}\ \emph {et~al.}(2022)\citenamefont {Myers},
  \citenamefont {Abah},\ and\ \citenamefont {Deffner}}]{Myers2022}%
  \BibitemOpen
  \bibfield  {author} {\bibinfo {author} {\bibfnamefont {N.~M.}\ \bibnamefont
  {Myers}}, \bibinfo {author} {\bibfnamefont {O.}~\bibnamefont {Abah}},\ and\
  \bibinfo {author} {\bibfnamefont {S.}~\bibnamefont {Deffner}},\ }\bibfield
  {title} {\bibinfo {title} {{Quantum thermodynamic devices: From theoretical
  proposals to experimental reality}},\ }\href
  {https://doi.org/10.1116/5.0083192} {\bibfield  {journal} {\bibinfo
  {journal} {AVS Quantum Science}\ }\textbf {\bibinfo {volume} {4}},\ \bibinfo
  {pages} {027101} (\bibinfo {year} {2022})},\ \Eprint
  {https://arxiv.org/abs/https://pubs.aip.org/avs/aqs/article-pdf/doi/10.1116/5.0083192/16494008/027101\_1\_online.pdf}
  {https://pubs.aip.org/avs/aqs/article-pdf/doi/10.1116/5.0083192/16494008/027101\_1\_online.pdf}
  \BibitemShut {NoStop}%
\bibitem [{\citenamefont {{Gelbwaser-Klimovsky, D.}}\ \emph
  {et~al.}(2013)\citenamefont {{Gelbwaser-Klimovsky, D.}}, \citenamefont
  {{Alicki, R.}},\ and\ \citenamefont {{Kurizki, G.}}}]{Klimkovsky2013}%
  \BibitemOpen
  \bibfield  {author} {\bibinfo {author} {\bibnamefont {{Gelbwaser-Klimovsky,
  D.}}}, \bibinfo {author} {\bibnamefont {{Alicki, R.}}},\ and\ \bibinfo
  {author} {\bibnamefont {{Kurizki, G.}}},\ }\bibfield  {title} {\bibinfo
  {title} {Work and energy gain of heat-pumped quantized amplifiers},\ }\href
  {https://doi.org/10.1209/0295-5075/103/60005} {\bibfield  {journal} {\bibinfo
   {journal} {EPL}\ }\textbf {\bibinfo {volume} {103}},\ \bibinfo {pages}
  {60005} (\bibinfo {year} {2013})}\BibitemShut {NoStop}%
\bibitem [{\citenamefont {Kosloff}\ and\ \citenamefont
  {Levy}(2014{\natexlab{b}})}]{KosloffLevy2014}%
  \BibitemOpen
  \bibfield  {author} {\bibinfo {author} {\bibfnamefont {R.}~\bibnamefont
  {Kosloff}}\ and\ \bibinfo {author} {\bibfnamefont {A.}~\bibnamefont {Levy}},\
  }\bibfield  {title} {\bibinfo {title} {Quantum heat engines and
  refrigerators: Continuous devices},\ }\href
  {https://doi.org/10.1146/annurev-physchem-040513-103724} {\bibfield
  {journal} {\bibinfo  {journal} {Annual Review of Physical Chemistry}\
  }\textbf {\bibinfo {volume} {65}},\ \bibinfo {pages} {365} (\bibinfo {year}
  {2014}{\natexlab{b}})}\BibitemShut {NoStop}%
\bibitem [{\citenamefont
  {Popescu}(2010)}]{PopescuSmallestHeatEngineprinciple2010}%
  \BibitemOpen
  \bibfield  {author} {\bibinfo {author} {\bibfnamefont {S.}~\bibnamefont
  {Popescu}},\ }\bibfield  {title} {\bibinfo {title} {Maximally efficient
  quantum thermal machines: The basic principles},\ }\href
  {https://arxiv.org/pdf/1009.2536.pdf} {\bibfield  {journal} {\bibinfo
  {journal} {arXiv:1009.2536}\ } (\bibinfo {year} {2010})}\BibitemShut
  {NoStop}%
\bibitem [{\citenamefont {Skrzypczyk}\ \emph {et~al.}(2011)\citenamefont
  {Skrzypczyk}, \citenamefont {Brunner}, \citenamefont {Linden},\ and\
  \citenamefont {Popescu}}]{Skrzypczyk_2011}%
  \BibitemOpen
  \bibfield  {author} {\bibinfo {author} {\bibfnamefont {P.}~\bibnamefont
  {Skrzypczyk}}, \bibinfo {author} {\bibfnamefont {N.}~\bibnamefont {Brunner}},
  \bibinfo {author} {\bibfnamefont {N.}~\bibnamefont {Linden}},\ and\ \bibinfo
  {author} {\bibfnamefont {S.}~\bibnamefont {Popescu}},\ }\bibfield  {title}
  {\bibinfo {title} {The smallest refrigerators can reach maximal efficiency},\
  }\href {https://doi.org/10.1088/1751-8113/44/49/492002} {\bibfield  {journal}
  {\bibinfo  {journal} {Journal of Physics A: Mathematical and Theoretical}\
  }\textbf {\bibinfo {volume} {44}},\ \bibinfo {pages} {492002} (\bibinfo
  {year} {2011})}\BibitemShut {NoStop}%
\bibitem [{\citenamefont {Uzdin}\ \emph {et~al.}(2015)\citenamefont {Uzdin},
  \citenamefont {Levy},\ and\ \citenamefont {Kosloff}}]{UzdinLevyKosloff2015}%
  \BibitemOpen
  \bibfield  {author} {\bibinfo {author} {\bibfnamefont {R.}~\bibnamefont
  {Uzdin}}, \bibinfo {author} {\bibfnamefont {A.}~\bibnamefont {Levy}},\ and\
  \bibinfo {author} {\bibfnamefont {R.}~\bibnamefont {Kosloff}},\ }\bibfield
  {title} {\bibinfo {title} {Equivalence of quantum heat machines, and
  quantum-thermodynamic signatures},\ }\href
  {https://doi.org/10.1103/PhysRevX.5.031044} {\bibfield  {journal} {\bibinfo
  {journal} {Phys. Rev. X}\ }\textbf {\bibinfo {volume} {5}},\ \bibinfo {pages}
  {031044} (\bibinfo {year} {2015})}\BibitemShut {NoStop}%
\bibitem [{\citenamefont {Mitchison}(2019)}]{MitchisonContemp}%
  \BibitemOpen
  \bibfield  {author} {\bibinfo {author} {\bibfnamefont {M.~T.}\ \bibnamefont
  {Mitchison}},\ }\bibfield  {title} {\bibinfo {title} {Quantum thermal
  absorption machines: refrigerators, engines and clocks},\ }\href
  {https://doi.org/10.1080/00107514.2019.1631555} {\bibfield  {journal}
  {\bibinfo  {journal} {Contemporary Physics}\ }\textbf {\bibinfo {volume}
  {60}},\ \bibinfo {pages} {164} (\bibinfo {year} {2019})},\ \Eprint
  {https://arxiv.org/abs/https://doi.org/10.1080/00107514.2019.1631555}
  {https://doi.org/10.1080/00107514.2019.1631555} \BibitemShut {NoStop}%
\bibitem [{\citenamefont {Woods}\ \emph {et~al.}(2019)\citenamefont {Woods},
  \citenamefont {Ng},\ and\ \citenamefont
  {Wehner}}]{Woods2019maximumefficiencyof}%
  \BibitemOpen
  \bibfield  {author} {\bibinfo {author} {\bibfnamefont {M.~P.}\ \bibnamefont
  {Woods}}, \bibinfo {author} {\bibfnamefont {N.~H.~Y.}\ \bibnamefont {Ng}},\
  and\ \bibinfo {author} {\bibfnamefont {S.}~\bibnamefont {Wehner}},\
  }\bibfield  {title} {\bibinfo {title} {The maximum efficiency of nano heat
  engines depends on more than temperature},\ }\href
  {https://doi.org/10.22331/q-2019-08-19-177} {\bibfield  {journal} {\bibinfo
  {journal} {{Quantum}}\ }\textbf {\bibinfo {volume} {3}},\ \bibinfo {pages}
  {177} (\bibinfo {year} {2019})}\BibitemShut {NoStop}%
\bibitem [{\citenamefont {Hofer}\ \emph {et~al.}(2017)\citenamefont {Hofer},
  \citenamefont {Perarnau-Llobet}, \citenamefont {Miranda}, \citenamefont
  {Haack}, \citenamefont {Silva}, \citenamefont {Brask},\ and\ \citenamefont
  {Brunner}}]{Hofer_2017}%
  \BibitemOpen
  \bibfield  {author} {\bibinfo {author} {\bibfnamefont {P.~P.}\ \bibnamefont
  {Hofer}}, \bibinfo {author} {\bibfnamefont {M.}~\bibnamefont
  {Perarnau-Llobet}}, \bibinfo {author} {\bibfnamefont {L.~D.~M.}\ \bibnamefont
  {Miranda}}, \bibinfo {author} {\bibfnamefont {G.}~\bibnamefont {Haack}},
  \bibinfo {author} {\bibfnamefont {R.}~\bibnamefont {Silva}}, \bibinfo
  {author} {\bibfnamefont {J.~B.}\ \bibnamefont {Brask}},\ and\ \bibinfo
  {author} {\bibfnamefont {N.}~\bibnamefont {Brunner}},\ }\bibfield  {title}
  {\bibinfo {title} {Markovian master equations for quantum thermal machines:
  local versus global approach},\ }\href
  {https://doi.org/10.1088/1367-2630/aa964f} {\bibfield  {journal} {\bibinfo
  {journal} {New Journal of Physics}\ }\textbf {\bibinfo {volume} {19}},\
  \bibinfo {pages} {123037} (\bibinfo {year} {2017})}\BibitemShut {NoStop}%
\bibitem [{\citenamefont {Brask}\ \emph {et~al.}(2015)\citenamefont {Brask},
  \citenamefont {Haack}, \citenamefont {Brunner},\ and\ \citenamefont
  {Huber}}]{Bohr_Brask_2015}%
  \BibitemOpen
  \bibfield  {author} {\bibinfo {author} {\bibfnamefont {J.~B.}\ \bibnamefont
  {Brask}}, \bibinfo {author} {\bibfnamefont {G.}~\bibnamefont {Haack}},
  \bibinfo {author} {\bibfnamefont {N.}~\bibnamefont {Brunner}},\ and\ \bibinfo
  {author} {\bibfnamefont {M.}~\bibnamefont {Huber}},\ }\bibfield  {title}
  {\bibinfo {title} {Autonomous quantum thermal machine for generating
  steady-state entanglement},\ }\href
  {https://doi.org/10.1088/1367-2630/17/11/113029} {\bibfield  {journal}
  {\bibinfo  {journal} {New Journal of Physics}\ }\textbf {\bibinfo {volume}
  {17}},\ \bibinfo {pages} {113029} (\bibinfo {year} {2015})}\BibitemShut
  {NoStop}%
\bibitem [{\citenamefont {Taranto}\ \emph {et~al.}(2023)\citenamefont
  {Taranto}, \citenamefont {Bakhshinezhad}, \citenamefont {Bluhm},
  \citenamefont {Silva}, \citenamefont {Friis}, \citenamefont {Lock},
  \citenamefont {Vitagliano}, \citenamefont {Binder}, \citenamefont {Debarba},
  \citenamefont {Schwarzhans}, \citenamefont {Clivaz},\ and\ \citenamefont
  {Huber}}]{Landauer_vs_Nernst}%
  \BibitemOpen
  \bibfield  {author} {\bibinfo {author} {\bibfnamefont {P.}~\bibnamefont
  {Taranto}}, \bibinfo {author} {\bibfnamefont {F.}~\bibnamefont
  {Bakhshinezhad}}, \bibinfo {author} {\bibfnamefont {A.}~\bibnamefont
  {Bluhm}}, \bibinfo {author} {\bibfnamefont {R.}~\bibnamefont {Silva}},
  \bibinfo {author} {\bibfnamefont {N.}~\bibnamefont {Friis}}, \bibinfo
  {author} {\bibfnamefont {M.~P.}\ \bibnamefont {Lock}}, \bibinfo {author}
  {\bibfnamefont {G.}~\bibnamefont {Vitagliano}}, \bibinfo {author}
  {\bibfnamefont {F.~C.}\ \bibnamefont {Binder}}, \bibinfo {author}
  {\bibfnamefont {T.}~\bibnamefont {Debarba}}, \bibinfo {author} {\bibfnamefont
  {E.}~\bibnamefont {Schwarzhans}}, \bibinfo {author} {\bibfnamefont
  {F.}~\bibnamefont {Clivaz}},\ and\ \bibinfo {author} {\bibfnamefont
  {M.}~\bibnamefont {Huber}},\ }\bibfield  {title} {\bibinfo {title} {Landauer
  versus nernst: What is the true cost of cooling a quantum system?},\ }\href
  {https://doi.org/10.1103/PRXQuantum.4.010332} {\bibfield  {journal} {\bibinfo
   {journal} {PRX Quantum}\ }\textbf {\bibinfo {volume} {4}},\ \bibinfo {pages}
  {010332} (\bibinfo {year} {2023})}\BibitemShut {NoStop}%
\bibitem [{\citenamefont {Pusz}\ and\ \citenamefont
  {Woronowicz}(1978)}]{Pusz1978}%
  \BibitemOpen
  \bibfield  {author} {\bibinfo {author} {\bibfnamefont {W.}~\bibnamefont
  {Pusz}}\ and\ \bibinfo {author} {\bibfnamefont {S.~L.}\ \bibnamefont
  {Woronowicz}},\ }\bibfield  {title} {\bibinfo {title} {Passive states and kms
  states for general quantum systems},\ }\href
  {https://doi.org/https://doi.org/10.1007/BF01614224} {\bibfield  {journal}
  {\bibinfo  {journal} {Comm. Math. Phys.}\ }\textbf {\bibinfo {volume} {58}},\
  \bibinfo {pages} {273} (\bibinfo {year} {1978})}\BibitemShut {NoStop}%
\bibitem [{\citenamefont {Alicki}(1979)}]{Alicki_1979}%
  \BibitemOpen
  \bibfield  {author} {\bibinfo {author} {\bibfnamefont {R.}~\bibnamefont
  {Alicki}},\ }\bibfield  {title} {\bibinfo {title} {The quantum open system as
  a model of the heat engine},\ }\href
  {https://doi.org/10.1088/0305-4470/12/5/007} {\bibfield  {journal} {\bibinfo
  {journal} {J. Phys. A: Math. Gen.}\ }\textbf {\bibinfo {volume} {12}},\
  \bibinfo {pages} {L103} (\bibinfo {year} {1979})}\BibitemShut {NoStop}%
\bibitem [{\citenamefont {Davies}(1978)}]{Davies_1978}%
  \BibitemOpen
  \bibfield  {author} {\bibinfo {author} {\bibfnamefont {P.~C.~W.}\
  \bibnamefont {Davies}},\ }\bibfield  {title} {\bibinfo {title}
  {Thermodynamics of black holes},\ }\href
  {https://doi.org/10.1088/0034-4885/41/8/004} {\bibfield  {journal} {\bibinfo
  {journal} {Reports on Progress in Physics}\ }\textbf {\bibinfo {volume}
  {41}},\ \bibinfo {pages} {1313} (\bibinfo {year} {1978})}\BibitemShut
  {NoStop}%
\bibitem [{\citenamefont {Esposito}\ \emph {et~al.}(2009)\citenamefont
  {Esposito}, \citenamefont {Harbola},\ and\ \citenamefont
  {Mukamel}}]{Esposito2009}%
  \BibitemOpen
  \bibfield  {author} {\bibinfo {author} {\bibfnamefont {M.}~\bibnamefont
  {Esposito}}, \bibinfo {author} {\bibfnamefont {U.}~\bibnamefont {Harbola}},\
  and\ \bibinfo {author} {\bibfnamefont {S.}~\bibnamefont {Mukamel}},\
  }\bibfield  {title} {\bibinfo {title} {Nonequilibrium fluctuations,
  fluctuation theorems, and counting statistics in quantum systems},\ }\href
  {https://doi.org/10.1103/RevModPhys.81.1665} {\bibfield  {journal} {\bibinfo
  {journal} {Rev. Mod. Phys.}\ }\textbf {\bibinfo {volume} {81}},\ \bibinfo
  {pages} {1665} (\bibinfo {year} {2009})}\BibitemShut {NoStop}%
\bibitem [{\citenamefont {{Horodecki}}\ and\ \citenamefont
  {{Oppenheim}}(2013)}]{horodecki2013fundamental}%
  \BibitemOpen
  \bibfield  {author} {\bibinfo {author} {\bibfnamefont {M.}~\bibnamefont
  {{Horodecki}}}\ and\ \bibinfo {author} {\bibfnamefont {J.}~\bibnamefont
  {{Oppenheim}}},\ }\bibfield  {title} {\bibinfo {title} {{Fundamental
  limitations for quantum and nanoscale thermodynamics}},\ }\href
  {https://www.nature.com/articles/ncomms3059} {\bibfield  {journal} {\bibinfo
  {journal} {Nat. Commun.}\ }\textbf {\bibinfo {volume} {4}},\ \bibinfo {eid}
  {2059} (\bibinfo {year} {2013})}\BibitemShut {NoStop}%
\bibitem [{\citenamefont {Skrzypczyk}\ \emph {et~al.}(2014)\citenamefont
  {Skrzypczyk}, \citenamefont {Short},\ and\ \citenamefont
  {Popescu}}]{Skrzypczyk2014}%
  \BibitemOpen
  \bibfield  {author} {\bibinfo {author} {\bibfnamefont {P.}~\bibnamefont
  {Skrzypczyk}}, \bibinfo {author} {\bibfnamefont {A.~J.}\ \bibnamefont
  {Short}},\ and\ \bibinfo {author} {\bibfnamefont {S.}~\bibnamefont
  {Popescu}},\ }\bibfield  {title} {\bibinfo {title} {Work extraction and
  thermodynamics for individual quantum systems},\ }\href
  {https://doi.org/10.1038/ncomms5185} {\bibfield  {journal} {\bibinfo
  {journal} {Nat. Commun.}\ }\textbf {\bibinfo {volume} {5}},\ \bibinfo {pages}
  {4185} (\bibinfo {year} {2014})}\BibitemShut {NoStop}%
\bibitem [{\citenamefont {Blickle}\ and\ \citenamefont
  {Bechinger}(2011)}]{Blickle2011}%
  \BibitemOpen
  \bibfield  {author} {\bibinfo {author} {\bibfnamefont {V.}~\bibnamefont
  {Blickle}}\ and\ \bibinfo {author} {\bibfnamefont {C.}~\bibnamefont
  {Bechinger}},\ }\bibfield  {title} {\bibinfo {title} {Realization of a
  micrometre-sized stochastic heat engine},\ }\href
  {https://api.semanticscholar.org/CorpusID:29057092} {\bibfield  {journal}
  {\bibinfo  {journal} {Nature Physics}\ }\textbf {\bibinfo {volume} {8}},\
  \bibinfo {pages} {143 } (\bibinfo {year} {2011})}\BibitemShut {NoStop}%
\bibitem [{\citenamefont {Abah}\ \emph {et~al.}(2012)\citenamefont {Abah},
  \citenamefont {Ro\ss{}nagel}, \citenamefont {Jacob}, \citenamefont {Deffner},
  \citenamefont {Schmidt-Kaler}, \citenamefont {Singer},\ and\ \citenamefont
  {Lutz}}]{Abah_2012}%
  \BibitemOpen
  \bibfield  {author} {\bibinfo {author} {\bibfnamefont {O.}~\bibnamefont
  {Abah}}, \bibinfo {author} {\bibfnamefont {J.}~\bibnamefont {Ro\ss{}nagel}},
  \bibinfo {author} {\bibfnamefont {G.}~\bibnamefont {Jacob}}, \bibinfo
  {author} {\bibfnamefont {S.}~\bibnamefont {Deffner}}, \bibinfo {author}
  {\bibfnamefont {F.}~\bibnamefont {Schmidt-Kaler}}, \bibinfo {author}
  {\bibfnamefont {K.}~\bibnamefont {Singer}},\ and\ \bibinfo {author}
  {\bibfnamefont {E.}~\bibnamefont {Lutz}},\ }\bibfield  {title} {\bibinfo
  {title} {Single-ion heat engine at maximum power},\ }\href
  {https://doi.org/10.1103/PhysRevLett.109.203006} {\bibfield  {journal}
  {\bibinfo  {journal} {Phys. Rev. Lett.}\ }\textbf {\bibinfo {volume} {109}},\
  \bibinfo {pages} {203006} (\bibinfo {year} {2012})}\BibitemShut {NoStop}%
\bibitem [{\citenamefont {{Ro{\ss}nagel}}\ \emph {et~al.}(2016)\citenamefont
  {{Ro{\ss}nagel}}, \citenamefont {{Dawkins}}, \citenamefont {{Tolazzi}},
  \citenamefont {{Abah}}, \citenamefont {{Lutz}}, \citenamefont
  {{Schmidt-Kaler}},\ and\ \citenamefont {{Singer}}}]{Rossnagel2016}%
  \BibitemOpen
  \bibfield  {author} {\bibinfo {author} {\bibfnamefont {J.}~\bibnamefont
  {{Ro{\ss}nagel}}}, \bibinfo {author} {\bibfnamefont {S.~T.}\ \bibnamefont
  {{Dawkins}}}, \bibinfo {author} {\bibfnamefont {K.~N.}\ \bibnamefont
  {{Tolazzi}}}, \bibinfo {author} {\bibfnamefont {O.}~\bibnamefont {{Abah}}},
  \bibinfo {author} {\bibfnamefont {E.}~\bibnamefont {{Lutz}}}, \bibinfo
  {author} {\bibfnamefont {F.}~\bibnamefont {{Schmidt-Kaler}}},\ and\ \bibinfo
  {author} {\bibfnamefont {K.}~\bibnamefont {{Singer}}},\ }\bibfield  {title}
  {\bibinfo {title} {{A single-atom heat engine}},\ }\href
  {https://doi.org/10.1126/science.aad6320} {\bibfield  {journal} {\bibinfo
  {journal} {Science}\ }\textbf {\bibinfo {volume} {352}},\ \bibinfo {pages}
  {325} (\bibinfo {year} {2016})},\ \Eprint {https://arxiv.org/abs/1510.03681}
  {arXiv:1510.03681 [cond-mat.stat-mech]} \BibitemShut {NoStop}%
\bibitem [{\citenamefont {von Lindenfels}\ \emph {et~al.}(2019)\citenamefont
  {von Lindenfels}, \citenamefont {Gr\"ab}, \citenamefont {Schmiegelow},
  \citenamefont {Kaushal}, \citenamefont {Schulz}, \citenamefont {Mitchison},
  \citenamefont {Goold}, \citenamefont {Schmidt-Kaler},\ and\ \citenamefont
  {Poschinger}}]{Lindenfels_2019}%
  \BibitemOpen
  \bibfield  {author} {\bibinfo {author} {\bibfnamefont {D.}~\bibnamefont {von
  Lindenfels}}, \bibinfo {author} {\bibfnamefont {O.}~\bibnamefont {Gr\"ab}},
  \bibinfo {author} {\bibfnamefont {C.~T.}\ \bibnamefont {Schmiegelow}},
  \bibinfo {author} {\bibfnamefont {V.}~\bibnamefont {Kaushal}}, \bibinfo
  {author} {\bibfnamefont {J.}~\bibnamefont {Schulz}}, \bibinfo {author}
  {\bibfnamefont {M.~T.}\ \bibnamefont {Mitchison}}, \bibinfo {author}
  {\bibfnamefont {J.}~\bibnamefont {Goold}}, \bibinfo {author} {\bibfnamefont
  {F.}~\bibnamefont {Schmidt-Kaler}},\ and\ \bibinfo {author} {\bibfnamefont
  {U.~G.}\ \bibnamefont {Poschinger}},\ }\bibfield  {title} {\bibinfo {title}
  {Spin heat engine coupled to a harmonic-oscillator flywheel},\ }\href
  {https://doi.org/10.1103/PhysRevLett.123.080602} {\bibfield  {journal}
  {\bibinfo  {journal} {Phys. Rev. Lett.}\ }\textbf {\bibinfo {volume} {123}},\
  \bibinfo {pages} {080602} (\bibinfo {year} {2019})}\BibitemShut {NoStop}%
\bibitem [{\citenamefont {Maslennikov}\ \emph {et~al.}(2019)\citenamefont
  {Maslennikov}, \citenamefont {Ding}, \citenamefont {Hablützel},
  \citenamefont {Gan}, \citenamefont {Roulet}, \citenamefont {Nimmrichter},
  \citenamefont {Dai}, \citenamefont {Scarani},\ and\ \citenamefont
  {Matsukevich}}]{Maslennikov_2019}%
  \BibitemOpen
  \bibfield  {author} {\bibinfo {author} {\bibfnamefont {G.}~\bibnamefont
  {Maslennikov}}, \bibinfo {author} {\bibfnamefont {S.}~\bibnamefont {Ding}},
  \bibinfo {author} {\bibfnamefont {R.}~\bibnamefont {Hablützel}}, \bibinfo
  {author} {\bibfnamefont {J.}~\bibnamefont {Gan}}, \bibinfo {author}
  {\bibfnamefont {A.}~\bibnamefont {Roulet}}, \bibinfo {author} {\bibfnamefont
  {S.}~\bibnamefont {Nimmrichter}}, \bibinfo {author} {\bibfnamefont
  {J.}~\bibnamefont {Dai}}, \bibinfo {author} {\bibfnamefont {V.}~\bibnamefont
  {Scarani}},\ and\ \bibinfo {author} {\bibfnamefont {D.}~\bibnamefont
  {Matsukevich}},\ }\bibfield  {title} {\bibinfo {title} {Quantum absorption
  refrigerator with trapped ions},\ }\bibfield  {journal} {\bibinfo  {journal}
  {Nature Communications}\ }\textbf {\bibinfo {volume} {10}},\ \href
  {https://doi.org/10.1038/s41467-018-08090-0} {10.1038/s41467-018-08090-0}
  (\bibinfo {year} {2019})\BibitemShut {NoStop}%
\bibitem [{\citenamefont {Klatzow}\ \emph {et~al.}(2019)\citenamefont
  {Klatzow}, \citenamefont {Becker}, \citenamefont {Ledingham}, \citenamefont
  {Weinzetl}, \citenamefont {Kaczmarek}, \citenamefont {Saunders},
  \citenamefont {Nunn}, \citenamefont {Walmsley}, \citenamefont {Uzdin},\ and\
  \citenamefont {Poem}}]{Klatzow_2019}%
  \BibitemOpen
  \bibfield  {author} {\bibinfo {author} {\bibfnamefont {J.}~\bibnamefont
  {Klatzow}}, \bibinfo {author} {\bibfnamefont {J.~N.}\ \bibnamefont {Becker}},
  \bibinfo {author} {\bibfnamefont {P.~M.}\ \bibnamefont {Ledingham}}, \bibinfo
  {author} {\bibfnamefont {C.}~\bibnamefont {Weinzetl}}, \bibinfo {author}
  {\bibfnamefont {K.~T.}\ \bibnamefont {Kaczmarek}}, \bibinfo {author}
  {\bibfnamefont {D.~J.}\ \bibnamefont {Saunders}}, \bibinfo {author}
  {\bibfnamefont {J.}~\bibnamefont {Nunn}}, \bibinfo {author} {\bibfnamefont
  {I.~A.}\ \bibnamefont {Walmsley}}, \bibinfo {author} {\bibfnamefont
  {R.}~\bibnamefont {Uzdin}},\ and\ \bibinfo {author} {\bibfnamefont
  {E.}~\bibnamefont {Poem}},\ }\bibfield  {title} {\bibinfo {title}
  {Experimental demonstration of quantum effects in the operation of
  microscopic heat engines},\ }\href
  {https://doi.org/10.1103/PhysRevLett.122.110601} {\bibfield  {journal}
  {\bibinfo  {journal} {Phys. Rev. Lett.}\ }\textbf {\bibinfo {volume} {122}},\
  \bibinfo {pages} {110601} (\bibinfo {year} {2019})}\BibitemShut {NoStop}%
\bibitem [{\citenamefont {Peterson}\ \emph {et~al.}(2019)\citenamefont
  {Peterson}, \citenamefont {Batalh\~ao}, \citenamefont {Herrera},
  \citenamefont {Souza}, \citenamefont {Sarthour}, \citenamefont {Oliveira},\
  and\ \citenamefont {Serra}}]{Peterson_2019}%
  \BibitemOpen
  \bibfield  {author} {\bibinfo {author} {\bibfnamefont {J.~P.~S.}\
  \bibnamefont {Peterson}}, \bibinfo {author} {\bibfnamefont {T.~B.}\
  \bibnamefont {Batalh\~ao}}, \bibinfo {author} {\bibfnamefont
  {M.}~\bibnamefont {Herrera}}, \bibinfo {author} {\bibfnamefont {A.~M.}\
  \bibnamefont {Souza}}, \bibinfo {author} {\bibfnamefont {R.~S.}\ \bibnamefont
  {Sarthour}}, \bibinfo {author} {\bibfnamefont {I.~S.}\ \bibnamefont
  {Oliveira}},\ and\ \bibinfo {author} {\bibfnamefont {R.~M.}\ \bibnamefont
  {Serra}},\ }\bibfield  {title} {\bibinfo {title} {Experimental
  characterization of a spin quantum heat engine},\ }\href
  {https://doi.org/10.1103/PhysRevLett.123.240601} {\bibfield  {journal}
  {\bibinfo  {journal} {Phys. Rev. Lett.}\ }\textbf {\bibinfo {volume} {123}},\
  \bibinfo {pages} {240601} (\bibinfo {year} {2019})}\BibitemShut {NoStop}%
\bibitem [{\citenamefont {Bouton}\ \emph {et~al.}(2021)\citenamefont {Bouton},
  \citenamefont {Nettersheim}, \citenamefont {Burgardt}, \citenamefont {Adam},
  \citenamefont {Lutz},\ and\ \citenamefont {Widera}}]{Bouton_2021}%
  \BibitemOpen
  \bibfield  {author} {\bibinfo {author} {\bibfnamefont {Q.}~\bibnamefont
  {Bouton}}, \bibinfo {author} {\bibfnamefont {J.}~\bibnamefont {Nettersheim}},
  \bibinfo {author} {\bibfnamefont {S.}~\bibnamefont {Burgardt}}, \bibinfo
  {author} {\bibfnamefont {D.}~\bibnamefont {Adam}}, \bibinfo {author}
  {\bibfnamefont {E.}~\bibnamefont {Lutz}},\ and\ \bibinfo {author}
  {\bibfnamefont {A.}~\bibnamefont {Widera}},\ }\bibfield  {title} {\bibinfo
  {title} {A quantum heat engine driven by atomic collisions},\ }\bibfield
  {journal} {\bibinfo  {journal} {Nature Communications}\ }\textbf {\bibinfo
  {volume} {12}},\ \href {https://doi.org/10.1038/s41467-021-22222-z}
  {10.1038/s41467-021-22222-z} (\bibinfo {year} {2021})\BibitemShut {NoStop}%
\bibitem [{\citenamefont {Zanin}\ \emph {et~al.}(2022)\citenamefont {Zanin},
  \citenamefont {Antesberger}, \citenamefont {Jacquet}, \citenamefont
  {Ribeiro}, \citenamefont {Rozema},\ and\ \citenamefont
  {Walther}}]{Zanin_2022}%
  \BibitemOpen
  \bibfield  {author} {\bibinfo {author} {\bibfnamefont {G.~L.}\ \bibnamefont
  {Zanin}}, \bibinfo {author} {\bibfnamefont {M.}~\bibnamefont {Antesberger}},
  \bibinfo {author} {\bibfnamefont {M.~J.}\ \bibnamefont {Jacquet}}, \bibinfo
  {author} {\bibfnamefont {P.~H.~S.}\ \bibnamefont {Ribeiro}}, \bibinfo
  {author} {\bibfnamefont {L.~A.}\ \bibnamefont {Rozema}},\ and\ \bibinfo
  {author} {\bibfnamefont {P.}~\bibnamefont {Walther}},\ }\bibfield  {title}
  {\bibinfo {title} {Enhanced photonic maxwell's demon with correlated baths},\
  }\href {https://doi.org/10.22331/q-2022-09-20-810} {\bibfield  {journal}
  {\bibinfo  {journal} {Quantum}\ }\textbf {\bibinfo {volume} {6}},\ \bibinfo
  {pages} {810} (\bibinfo {year} {2022})}\BibitemShut {NoStop}%
\bibitem [{\citenamefont {Koski}\ \emph {et~al.}(2014)\citenamefont {Koski},
  \citenamefont {Maisi}, \citenamefont {Pekola},\ and\ \citenamefont
  {Averin}}]{Koski_2014}%
  \BibitemOpen
  \bibfield  {author} {\bibinfo {author} {\bibfnamefont {J.~V.}\ \bibnamefont
  {Koski}}, \bibinfo {author} {\bibfnamefont {V.~F.}\ \bibnamefont {Maisi}},
  \bibinfo {author} {\bibfnamefont {J.~P.}\ \bibnamefont {Pekola}},\ and\
  \bibinfo {author} {\bibfnamefont {D.~V.}\ \bibnamefont {Averin}},\ }\bibfield
   {title} {\bibinfo {title} {Experimental realization of a szilard engine with
  a single electron},\ }\href {https://doi.org/10.1073/pnas.1406966111}
  {\bibfield  {journal} {\bibinfo  {journal} {Proceedings of the National
  Academy of Sciences}\ }\textbf {\bibinfo {volume} {111}},\ \bibinfo {pages}
  {13786} (\bibinfo {year} {2014})},\ \Eprint
  {https://arxiv.org/abs/https://www.pnas.org/doi/pdf/10.1073/pnas.1406966111}
  {https://www.pnas.org/doi/pdf/10.1073/pnas.1406966111} \BibitemShut {NoStop}%
\bibitem [{\citenamefont {Koski}\ \emph {et~al.}(2015)\citenamefont {Koski},
  \citenamefont {Kutvonen}, \citenamefont {Khaymovich}, \citenamefont
  {Ala-Nissila},\ and\ \citenamefont {Pekola}}]{Koski_2015}%
  \BibitemOpen
  \bibfield  {author} {\bibinfo {author} {\bibfnamefont {J.~V.}\ \bibnamefont
  {Koski}}, \bibinfo {author} {\bibfnamefont {A.}~\bibnamefont {Kutvonen}},
  \bibinfo {author} {\bibfnamefont {I.~M.}\ \bibnamefont {Khaymovich}},
  \bibinfo {author} {\bibfnamefont {T.}~\bibnamefont {Ala-Nissila}},\ and\
  \bibinfo {author} {\bibfnamefont {J.~P.}\ \bibnamefont {Pekola}},\ }\bibfield
   {title} {\bibinfo {title} {On-chip maxwell's demon as an information-powered
  refrigerator},\ }\href {https://doi.org/10.1103/PhysRevLett.115.260602}
  {\bibfield  {journal} {\bibinfo  {journal} {Phys. Rev. Lett.}\ }\textbf
  {\bibinfo {volume} {115}},\ \bibinfo {pages} {260602} (\bibinfo {year}
  {2015})}\BibitemShut {NoStop}%
\bibitem [{\citenamefont {Allahverdyan}\ \emph {et~al.}(2010)\citenamefont
  {Allahverdyan}, \citenamefont {Hovhannisyan},\ and\ \citenamefont
  {Mahler}}]{two_stroke_Allahverdyan}%
  \BibitemOpen
  \bibfield  {author} {\bibinfo {author} {\bibfnamefont {A.~E.}\ \bibnamefont
  {Allahverdyan}}, \bibinfo {author} {\bibfnamefont {K.}~\bibnamefont
  {Hovhannisyan}},\ and\ \bibinfo {author} {\bibfnamefont {G.}~\bibnamefont
  {Mahler}},\ }\bibfield  {title} {\bibinfo {title} {Optimal refrigerator},\
  }\href {https://doi.org/10.1103/PhysRevE.81.051129} {\bibfield  {journal}
  {\bibinfo  {journal} {Phys. Rev. E}\ }\textbf {\bibinfo {volume} {81}},\
  \bibinfo {pages} {051129} (\bibinfo {year} {2010})}\BibitemShut {NoStop}%
\bibitem [{\citenamefont {Silva}\ \emph {et~al.}(2016)\citenamefont {Silva},
  \citenamefont {Manzano}, \citenamefont {Skrzypczyk},\ and\ \citenamefont
  {Brunner}}]{Silvadimension}%
  \BibitemOpen
  \bibfield  {author} {\bibinfo {author} {\bibfnamefont {R.}~\bibnamefont
  {Silva}}, \bibinfo {author} {\bibfnamefont {G.}~\bibnamefont {Manzano}},
  \bibinfo {author} {\bibfnamefont {P.}~\bibnamefont {Skrzypczyk}},\ and\
  \bibinfo {author} {\bibfnamefont {N.}~\bibnamefont {Brunner}},\ }\bibfield
  {title} {\bibinfo {title} {Performance of autonomous quantum thermal
  machines: Hilbert space dimension as a thermodynamical resource},\ }\href
  {https://doi.org/10.1103/PhysRevE.94.032120} {\bibfield  {journal} {\bibinfo
  {journal} {Phys. Rev. E}\ }\textbf {\bibinfo {volume} {94}},\ \bibinfo
  {pages} {032120} (\bibinfo {year} {2016})}\BibitemShut {NoStop}%
\bibitem [{\citenamefont {Clivaz}\ \emph
  {et~al.}(2019{\natexlab{a}})\citenamefont {Clivaz}, \citenamefont {Silva},
  \citenamefont {Haack}, \citenamefont {Brask}, \citenamefont {Brunner},\ and\
  \citenamefont {Huber}}]{ClivazPRL}%
  \BibitemOpen
  \bibfield  {author} {\bibinfo {author} {\bibfnamefont {F.}~\bibnamefont
  {Clivaz}}, \bibinfo {author} {\bibfnamefont {R.}~\bibnamefont {Silva}},
  \bibinfo {author} {\bibfnamefont {G.}~\bibnamefont {Haack}}, \bibinfo
  {author} {\bibfnamefont {J.~B.}\ \bibnamefont {Brask}}, \bibinfo {author}
  {\bibfnamefont {N.}~\bibnamefont {Brunner}},\ and\ \bibinfo {author}
  {\bibfnamefont {M.}~\bibnamefont {Huber}},\ }\bibfield  {title} {\bibinfo
  {title} {Unifying paradigms of quantum refrigeration: A universal and
  attainable bound on cooling},\ }\href
  {https://doi.org/10.1103/PhysRevLett.123.170605} {\bibfield  {journal}
  {\bibinfo  {journal} {Phys. Rev. Lett.}\ }\textbf {\bibinfo {volume} {123}},\
  \bibinfo {pages} {170605} (\bibinfo {year} {2019}{\natexlab{a}})}\BibitemShut
  {NoStop}%
\bibitem [{\citenamefont {Clivaz}\ \emph
  {et~al.}(2019{\natexlab{b}})\citenamefont {Clivaz}, \citenamefont {Silva},
  \citenamefont {Haack}, \citenamefont {Brask}, \citenamefont {Brunner},\ and\
  \citenamefont {Huber}}]{ClivazPRE}%
  \BibitemOpen
  \bibfield  {author} {\bibinfo {author} {\bibfnamefont {F.}~\bibnamefont
  {Clivaz}}, \bibinfo {author} {\bibfnamefont {R.}~\bibnamefont {Silva}},
  \bibinfo {author} {\bibfnamefont {G.}~\bibnamefont {Haack}}, \bibinfo
  {author} {\bibfnamefont {J.~B.}\ \bibnamefont {Brask}}, \bibinfo {author}
  {\bibfnamefont {N.}~\bibnamefont {Brunner}},\ and\ \bibinfo {author}
  {\bibfnamefont {M.}~\bibnamefont {Huber}},\ }\bibfield  {title} {\bibinfo
  {title} {Unifying paradigms of quantum refrigeration: Fundamental limits of
  cooling and associated work costs},\ }\href
  {https://doi.org/10.1103/PhysRevE.100.042130} {\bibfield  {journal} {\bibinfo
   {journal} {Phys. Rev. E}\ }\textbf {\bibinfo {volume} {100}},\ \bibinfo
  {pages} {042130} (\bibinfo {year} {2019}{\natexlab{b}})}\BibitemShut
  {NoStop}%
\bibitem [{\citenamefont {\L{}obejko}\ \emph {et~al.}(2024)\citenamefont
  {\L{}obejko}, \citenamefont {Biswas}, \citenamefont {Mazurek},\ and\
  \citenamefont {Horodecki}}]{BiswasLobejko}%
  \BibitemOpen
  \bibfield  {author} {\bibinfo {author} {\bibfnamefont {M.}~\bibnamefont
  {\L{}obejko}}, \bibinfo {author} {\bibfnamefont {T.}~\bibnamefont {Biswas}},
  \bibinfo {author} {\bibfnamefont {P.}~\bibnamefont {Mazurek}},\ and\ \bibinfo
  {author} {\bibfnamefont {M.}~\bibnamefont {Horodecki}},\ }\bibfield  {title}
  {\bibinfo {title} {Catalytic advantage in otto-like two-stroke quantum
  engines},\ }\href {https://doi.org/10.1103/PhysRevLett.132.260403} {\bibfield
   {journal} {\bibinfo  {journal} {Phys. Rev. Lett.}\ }\textbf {\bibinfo
  {volume} {132}},\ \bibinfo {pages} {260403} (\bibinfo {year}
  {2024})}\BibitemShut {NoStop}%
\bibitem [{\citenamefont {Melo}\ \emph {et~al.}(2022)\citenamefont {Melo},
  \citenamefont {S\'a}, \citenamefont {Roditi}, \citenamefont {Souza},
  \citenamefont {Oliveira}, \citenamefont {Sarthour},\ and\ \citenamefont
  {Landi}}]{Melo2022}%
  \BibitemOpen
  \bibfield  {author} {\bibinfo {author} {\bibfnamefont {F.~V.}\ \bibnamefont
  {Melo}}, \bibinfo {author} {\bibfnamefont {N.}~\bibnamefont {S\'a}}, \bibinfo
  {author} {\bibfnamefont {I.}~\bibnamefont {Roditi}}, \bibinfo {author}
  {\bibfnamefont {A.~M.}\ \bibnamefont {Souza}}, \bibinfo {author}
  {\bibfnamefont {I.~S.}\ \bibnamefont {Oliveira}}, \bibinfo {author}
  {\bibfnamefont {R.~S.}\ \bibnamefont {Sarthour}},\ and\ \bibinfo {author}
  {\bibfnamefont {G.~T.}\ \bibnamefont {Landi}},\ }\bibfield  {title} {\bibinfo
  {title} {Implementation of a two-stroke quantum heat engine with a
  collisional model},\ }\href {https://doi.org/10.1103/PhysRevA.106.032410}
  {\bibfield  {journal} {\bibinfo  {journal} {Phys. Rev. A}\ }\textbf {\bibinfo
  {volume} {106}},\ \bibinfo {pages} {032410} (\bibinfo {year}
  {2022})}\BibitemShut {NoStop}%
\bibitem [{\citenamefont {Molitor}\ and\ \citenamefont
  {Landi}(2020)}]{Stroboscopic_Molitor}%
  \BibitemOpen
  \bibfield  {author} {\bibinfo {author} {\bibfnamefont {O.~A.~D.}\
  \bibnamefont {Molitor}}\ and\ \bibinfo {author} {\bibfnamefont {G.~T.}\
  \bibnamefont {Landi}},\ }\bibfield  {title} {\bibinfo {title} {Stroboscopic
  two-stroke quantum heat engines},\ }\href
  {https://doi.org/10.1103/PhysRevA.102.042217} {\bibfield  {journal} {\bibinfo
   {journal} {Phys. Rev. A}\ }\textbf {\bibinfo {volume} {102}},\ \bibinfo
  {pages} {042217} (\bibinfo {year} {2020})}\BibitemShut {NoStop}%
\bibitem [{\citenamefont {Bhattacharjee}\ \emph {et~al.}(2020)\citenamefont
  {Bhattacharjee}, \citenamefont {Bhattacharya}, \citenamefont {Niedenzu},
  \citenamefont {Mukherjee},\ and\ \citenamefont {Dutta}}]{Bhattacharjee_2020}%
  \BibitemOpen
  \bibfield  {author} {\bibinfo {author} {\bibfnamefont {S.}~\bibnamefont
  {Bhattacharjee}}, \bibinfo {author} {\bibfnamefont {U.}~\bibnamefont
  {Bhattacharya}}, \bibinfo {author} {\bibfnamefont {W.}~\bibnamefont
  {Niedenzu}}, \bibinfo {author} {\bibfnamefont {V.}~\bibnamefont
  {Mukherjee}},\ and\ \bibinfo {author} {\bibfnamefont {A.}~\bibnamefont
  {Dutta}},\ }\bibfield  {title} {\bibinfo {title} {Quantum magnetometry using
  two-stroke thermal machines},\ }\href
  {https://doi.org/10.1088/1367-2630/ab61d6} {\bibfield  {journal} {\bibinfo
  {journal} {New Journal of Physics}\ }\textbf {\bibinfo {volume} {22}},\
  \bibinfo {pages} {013024} (\bibinfo {year} {2020})}\BibitemShut {NoStop}%
\bibitem [{\citenamefont {Piccione}\ \emph {et~al.}(2021)\citenamefont
  {Piccione}, \citenamefont {De~Chiara},\ and\ \citenamefont
  {Bellomo}}]{Piccione}%
  \BibitemOpen
  \bibfield  {author} {\bibinfo {author} {\bibfnamefont {N.}~\bibnamefont
  {Piccione}}, \bibinfo {author} {\bibfnamefont {G.}~\bibnamefont
  {De~Chiara}},\ and\ \bibinfo {author} {\bibfnamefont {B.}~\bibnamefont
  {Bellomo}},\ }\bibfield  {title} {\bibinfo {title} {Power maximization of
  two-stroke quantum thermal machines},\ }\href
  {https://doi.org/10.1103/PhysRevA.103.032211} {\bibfield  {journal} {\bibinfo
   {journal} {Phys. Rev. A}\ }\textbf {\bibinfo {volume} {103}},\ \bibinfo
  {pages} {032211} (\bibinfo {year} {2021})}\BibitemShut {NoStop}%
\bibitem [{\citenamefont {Rodr\'{\i}guez}\ \emph {et~al.}(2023)\citenamefont
  {Rodr\'{\i}guez}, \citenamefont {Ahmadi}, \citenamefont {Mazurek},
  \citenamefont {Barzanjeh}, \citenamefont {Alicki},\ and\ \citenamefont
  {Horodecki}}]{RRR_Ahmadi_2023}%
  \BibitemOpen
  \bibfield  {author} {\bibinfo {author} {\bibfnamefont {R.~R.}\ \bibnamefont
  {Rodr\'{\i}guez}}, \bibinfo {author} {\bibfnamefont {B.}~\bibnamefont
  {Ahmadi}}, \bibinfo {author} {\bibfnamefont {P.}~\bibnamefont {Mazurek}},
  \bibinfo {author} {\bibfnamefont {S.}~\bibnamefont {Barzanjeh}}, \bibinfo
  {author} {\bibfnamefont {R.}~\bibnamefont {Alicki}},\ and\ \bibinfo {author}
  {\bibfnamefont {P.}~\bibnamefont {Horodecki}},\ }\bibfield  {title} {\bibinfo
  {title} {Catalysis in charging quantum batteries},\ }\href
  {https://doi.org/10.1103/PhysRevA.107.042419} {\bibfield  {journal} {\bibinfo
   {journal} {Phys. Rev. A}\ }\textbf {\bibinfo {volume} {107}},\ \bibinfo
  {pages} {042419} (\bibinfo {year} {2023})}\BibitemShut {NoStop}%
\bibitem [{\citenamefont {Buffoni}\ \emph {et~al.}(2019)\citenamefont
  {Buffoni}, \citenamefont {Solfanelli}, \citenamefont {Verrucchi},
  \citenamefont {Cuccoli},\ and\ \citenamefont {Campisi}}]{Campisi1}%
  \BibitemOpen
  \bibfield  {author} {\bibinfo {author} {\bibfnamefont {L.}~\bibnamefont
  {Buffoni}}, \bibinfo {author} {\bibfnamefont {A.}~\bibnamefont {Solfanelli}},
  \bibinfo {author} {\bibfnamefont {P.}~\bibnamefont {Verrucchi}}, \bibinfo
  {author} {\bibfnamefont {A.}~\bibnamefont {Cuccoli}},\ and\ \bibinfo {author}
  {\bibfnamefont {M.}~\bibnamefont {Campisi}},\ }\bibfield  {title} {\bibinfo
  {title} {Quantum measurement cooling},\ }\href
  {https://doi.org/10.1103/PhysRevLett.122.070603} {\bibfield  {journal}
  {\bibinfo  {journal} {Phys. Rev. Lett.}\ }\textbf {\bibinfo {volume} {122}},\
  \bibinfo {pages} {070603} (\bibinfo {year} {2019})}\BibitemShut {NoStop}%
\bibitem [{\citenamefont {Solfanelli}\ \emph {et~al.}(2020)\citenamefont
  {Solfanelli}, \citenamefont {Falsetti},\ and\ \citenamefont
  {Campisi}}]{Campisi2}%
  \BibitemOpen
  \bibfield  {author} {\bibinfo {author} {\bibfnamefont {A.}~\bibnamefont
  {Solfanelli}}, \bibinfo {author} {\bibfnamefont {M.}~\bibnamefont
  {Falsetti}},\ and\ \bibinfo {author} {\bibfnamefont {M.}~\bibnamefont
  {Campisi}},\ }\bibfield  {title} {\bibinfo {title} {Nonadiabatic single-qubit
  quantum otto engine},\ }\href {https://doi.org/10.1103/PhysRevB.101.054513}
  {\bibfield  {journal} {\bibinfo  {journal} {Phys. Rev. B}\ }\textbf {\bibinfo
  {volume} {101}},\ \bibinfo {pages} {054513} (\bibinfo {year}
  {2020})}\BibitemShut {NoStop}%
\bibitem [{\citenamefont {Campisi}\ and\ \citenamefont
  {Fazio}(2016)}]{Campisi3}%
  \BibitemOpen
  \bibfield  {author} {\bibinfo {author} {\bibfnamefont {M.}~\bibnamefont
  {Campisi}}\ and\ \bibinfo {author} {\bibfnamefont {R.}~\bibnamefont
  {Fazio}},\ }\bibfield  {title} {\bibinfo {title} {Dissipation, correlation
  and lags in heat engines},\ }\href
  {https://doi.org/10.1088/1751-8113/49/34/345002} {\bibfield  {journal}
  {\bibinfo  {journal} {Journal of Physics A: Mathematical and Theoretical}\
  }\textbf {\bibinfo {volume} {49}},\ \bibinfo {pages} {345002} (\bibinfo
  {year} {2016})}\BibitemShut {NoStop}%
\bibitem [{\citenamefont {Campisi}\ \emph {et~al.}(2015)\citenamefont
  {Campisi}, \citenamefont {Pekola},\ and\ \citenamefont {Fazio}}]{Campisi4}%
  \BibitemOpen
  \bibfield  {author} {\bibinfo {author} {\bibfnamefont {M.}~\bibnamefont
  {Campisi}}, \bibinfo {author} {\bibfnamefont {J.}~\bibnamefont {Pekola}},\
  and\ \bibinfo {author} {\bibfnamefont {R.}~\bibnamefont {Fazio}},\ }\bibfield
   {title} {\bibinfo {title} {Nonequilibrium fluctuations in quantum heat
  engines: theory, example, and possible solid state experiments},\ }\href
  {https://doi.org/10.1088/1367-2630/17/3/035012} {\bibfield  {journal}
  {\bibinfo  {journal} {New Journal of Physics}\ }\textbf {\bibinfo {volume}
  {17}},\ \bibinfo {pages} {035012} (\bibinfo {year} {2015})}\BibitemShut
  {NoStop}%
\bibitem [{\citenamefont {{\L{}}obejko}\ \emph {et~al.}(2020)\citenamefont
  {{\L{}}obejko}, \citenamefont {Mazurek},\ and\ \citenamefont
  {Horodecki}}]{Lobejko2020}%
  \BibitemOpen
  \bibfield  {author} {\bibinfo {author} {\bibfnamefont {M.}~\bibnamefont
  {{\L{}}obejko}}, \bibinfo {author} {\bibfnamefont {P.}~\bibnamefont
  {Mazurek}},\ and\ \bibinfo {author} {\bibfnamefont {M.}~\bibnamefont
  {Horodecki}},\ }\bibfield  {title} {\bibinfo {title} {Thermodynamics of
  {M}inimal {C}oupling {Q}uantum {H}eat {E}ngines},\ }\href
  {https://doi.org/10.22331/q-2020-12-23-375} {\bibfield  {journal} {\bibinfo
  {journal} {{Quantum}}\ }\textbf {\bibinfo {volume} {4}},\ \bibinfo {pages}
  {375} (\bibinfo {year} {2020})}\BibitemShut {NoStop}%
\bibitem [{\citenamefont {Biswas}\ \emph {et~al.}(2022)\citenamefont {Biswas},
  \citenamefont {{\L{}}obejko}, \citenamefont {Mazurek}, \citenamefont
  {Ja{\l{}}owiecki},\ and\ \citenamefont {Horodecki}}]{BiswasQuantum}%
  \BibitemOpen
  \bibfield  {author} {\bibinfo {author} {\bibfnamefont {T.}~\bibnamefont
  {Biswas}}, \bibinfo {author} {\bibfnamefont {M.}~\bibnamefont
  {{\L{}}obejko}}, \bibinfo {author} {\bibfnamefont {P.}~\bibnamefont
  {Mazurek}}, \bibinfo {author} {\bibfnamefont {K.}~\bibnamefont
  {Ja{\l{}}owiecki}},\ and\ \bibinfo {author} {\bibfnamefont {M.}~\bibnamefont
  {Horodecki}},\ }\bibfield  {title} {\bibinfo {title} {Extraction of
  ergotropy: free energy bound and application to open cycle engines},\ }\href
  {https://doi.org/10.22331/q-2022-10-17-841} {\bibfield  {journal} {\bibinfo
  {journal} {{Quantum}}\ }\textbf {\bibinfo {volume} {6}},\ \bibinfo {pages}
  {841} (\bibinfo {year} {2022})}\BibitemShut {NoStop}%
\bibitem [{\citenamefont {Niedenzu}\ \emph {et~al.}(2019)\citenamefont
  {Niedenzu}, \citenamefont {Huber},\ and\ \citenamefont
  {Boukobza}}]{Niedenzu2019}%
  \BibitemOpen
  \bibfield  {author} {\bibinfo {author} {\bibfnamefont {W.}~\bibnamefont
  {Niedenzu}}, \bibinfo {author} {\bibfnamefont {M.}~\bibnamefont {Huber}},\
  and\ \bibinfo {author} {\bibfnamefont {E.}~\bibnamefont {Boukobza}},\
  }\bibfield  {title} {\bibinfo {title} {Concepts of work in autonomous quantum
  heat engines},\ }\href {https://doi.org/10.22331/q-2019-10-14-195} {\bibfield
   {journal} {\bibinfo  {journal} {{Quantum}}\ }\textbf {\bibinfo {volume}
  {3}},\ \bibinfo {pages} {195} (\bibinfo {year} {2019})}\BibitemShut {NoStop}%
\bibitem [{\citenamefont {Ptaszy\ifmmode~\acute{n}\else
  \'{n}\fi{}ski}(2022)}]{Non-markovian_Pstaz}%
  \BibitemOpen
  \bibfield  {author} {\bibinfo {author} {\bibfnamefont {K.}~\bibnamefont
  {Ptaszy\ifmmode~\acute{n}\else \'{n}\fi{}ski}},\ }\bibfield  {title}
  {\bibinfo {title} {Non-markovian thermal operations boosting the performance
  of quantum heat engines},\ }\href
  {https://doi.org/10.1103/PhysRevE.106.014114} {\bibfield  {journal} {\bibinfo
   {journal} {Phys. Rev. E}\ }\textbf {\bibinfo {volume} {106}},\ \bibinfo
  {pages} {014114} (\bibinfo {year} {2022})}\BibitemShut {NoStop}%
\bibitem [{\citenamefont {Biswas}\ and\ \citenamefont
  {Datta}(2024)}]{BiswasDatta3stroke}%
  \BibitemOpen
  \bibfield  {author} {\bibinfo {author} {\bibfnamefont {T.}~\bibnamefont
  {Biswas}}\ and\ \bibinfo {author} {\bibfnamefont {C.}~\bibnamefont {Datta}},\
  }\bibfield  {title} {\bibinfo {title} {Optimal performance of a three stroke
  heat engine in the microscopic regime},\ }\href
  {https://doi.org/10.48550/arXiv.2404.13461} {\bibfield  {journal} {\bibinfo
  {journal} {arXiv:2404.13461}\ } (\bibinfo {year} {2024})}\BibitemShut
  {NoStop}%
\bibitem [{\citenamefont {Francica}\ \emph {et~al.}(2020)\citenamefont
  {Francica}, \citenamefont {Binder}, \citenamefont {Guarnieri}, \citenamefont
  {Mitchison}, \citenamefont {Goold},\ and\ \citenamefont
  {Plastina}}]{Francica2020}%
  \BibitemOpen
  \bibfield  {author} {\bibinfo {author} {\bibfnamefont {G.}~\bibnamefont
  {Francica}}, \bibinfo {author} {\bibfnamefont {F.~C.}\ \bibnamefont
  {Binder}}, \bibinfo {author} {\bibfnamefont {G.}~\bibnamefont {Guarnieri}},
  \bibinfo {author} {\bibfnamefont {M.~T.}\ \bibnamefont {Mitchison}}, \bibinfo
  {author} {\bibfnamefont {J.}~\bibnamefont {Goold}},\ and\ \bibinfo {author}
  {\bibfnamefont {F.}~\bibnamefont {Plastina}},\ }\bibfield  {title} {\bibinfo
  {title} {Quantum coherence and ergotropy},\ }\href
  {https://doi.org/10.1103/PhysRevLett.125.180603} {\bibfield  {journal}
  {\bibinfo  {journal} {Phys. Rev. Lett.}\ }\textbf {\bibinfo {volume} {125}},\
  \bibinfo {pages} {180603} (\bibinfo {year} {2020})}\BibitemShut {NoStop}%
\bibitem [{\citenamefont {Cangemi}\ \emph {et~al.}(2024)\citenamefont
  {Cangemi}, \citenamefont {Bhadra},\ and\ \citenamefont
  {Levy}}]{cangemni_Levy_engines}%
  \BibitemOpen
  \bibfield  {author} {\bibinfo {author} {\bibfnamefont {L.~M.}\ \bibnamefont
  {Cangemi}}, \bibinfo {author} {\bibfnamefont {C.}~\bibnamefont {Bhadra}},\
  and\ \bibinfo {author} {\bibfnamefont {A.}~\bibnamefont {Levy}},\ }\bibfield
  {title} {\bibinfo {title} {Quantum engines and refrigerators},\ }\href
  {https://doi.org/https://doi.org/10.1016/j.physrep.2024.07.001} {\bibfield
  {journal} {\bibinfo  {journal} {Physics Reports}\ }\textbf {\bibinfo {volume}
  {1087}},\ \bibinfo {pages} {1} (\bibinfo {year} {2024})}\BibitemShut
  {NoStop}%
\bibitem [{\citenamefont {Czartowski}\ \emph {et~al.}(2023)\citenamefont
  {Czartowski}, \citenamefont {de~Oliveira~Junior},\ and\ \citenamefont
  {Korzekwa}}]{Kuba_and_Alex}%
  \BibitemOpen
  \bibfield  {author} {\bibinfo {author} {\bibfnamefont {J.}~\bibnamefont
  {Czartowski}}, \bibinfo {author} {\bibfnamefont {A.}~\bibnamefont
  {de~Oliveira~Junior}},\ and\ \bibinfo {author} {\bibfnamefont
  {K.}~\bibnamefont {Korzekwa}},\ }\bibfield  {title} {\bibinfo {title}
  {Thermal recall: Memory-assisted markovian thermal processes},\ }\href
  {https://doi.org/10.1103/PRXQuantum.4.040304} {\bibfield  {journal} {\bibinfo
   {journal} {PRX Quantum}\ }\textbf {\bibinfo {volume} {4}},\ \bibinfo {pages}
  {040304} (\bibinfo {year} {2023})}\BibitemShut {NoStop}%
\bibitem [{\citenamefont {Yu}\ \emph {et~al.}(2019)\citenamefont {Yu},
  \citenamefont {Guo},\ and\ \citenamefont {Liu}}]{Yu19}%
  \BibitemOpen
  \bibfield  {author} {\bibinfo {author} {\bibfnamefont {C.~S.}\ \bibnamefont
  {Yu}}, \bibinfo {author} {\bibfnamefont {B.~Q.}\ \bibnamefont {Guo}},\ and\
  \bibinfo {author} {\bibfnamefont {T.}~\bibnamefont {Liu}},\ }\bibfield
  {title} {\bibinfo {title} {Quantum self-contained refrigerator in terms of
  the cavity quantum electrodynamics in the weak internal-coupling regime},\
  }\href {https://doi.org/10.1364/OE.27.006863} {\bibfield  {journal} {\bibinfo
   {journal} {Opt. Express}\ }\textbf {\bibinfo {volume} {27}},\ \bibinfo
  {pages} {6863} (\bibinfo {year} {2019})}\BibitemShut {NoStop}%
\bibitem [{\citenamefont {Ali~Aamir}\ \emph {et~al.}(2023)\citenamefont
  {Ali~Aamir}, \citenamefont {Jamet~Suria}, \citenamefont {Guzmán},
  \citenamefont {Castillo-Moreno}, \citenamefont {Epstein}, \citenamefont
  {Yunger~Halpern},\ and\ \citenamefont {Gasparinetti}}]{NYHSG}%
  \BibitemOpen
  \bibfield  {author} {\bibinfo {author} {\bibfnamefont {M.}~\bibnamefont
  {Ali~Aamir}}, \bibinfo {author} {\bibfnamefont {P.}~\bibnamefont
  {Jamet~Suria}}, \bibinfo {author} {\bibfnamefont {J.~A.~M.}\ \bibnamefont
  {Guzmán}}, \bibinfo {author} {\bibfnamefont {C.}~\bibnamefont
  {Castillo-Moreno}}, \bibinfo {author} {\bibfnamefont {J.~M.}\ \bibnamefont
  {Epstein}}, \bibinfo {author} {\bibfnamefont {N.}~\bibnamefont
  {Yunger~Halpern}},\ and\ \bibinfo {author} {\bibfnamefont {S.}~\bibnamefont
  {Gasparinetti}},\ }\bibfield  {title} {\bibinfo {title} {Thermally driven
  quantum refrigerator autonomously resets superconducting qubit},\ }\href
  {https://doi.org/10.48550/arXiv.2305.16710} {\bibfield  {journal} {\bibinfo
  {journal} {arXiv:2305.16710}\ } (\bibinfo {year} {2023})}\BibitemShut
  {NoStop}%
\end{thebibliography}%
\end{document}